\newtheorem{DE}{Definition}[section]
\newcommand {\sm} {\setminus}
\newcommand{\qed}{\relax\ifmmode\hskip2em\Box\else\unskip\nobreak\hfill$\Box$\fi}
\newtheorem{theorem}[DE]{Theorem}
\newtheorem{lemma}[DE]{Lemma}
\newtheorem{corollary}[DE]{Corollary}
{\theoremstyle{break}\theorembodyfont{\rmfamily}}
{\theoremstyle{break}\theorembodyfont{\rmfamily}}
\newcounter{claim}
\newenvironment{proof}[1][]%
	{\noindent {\setcounter{claim}{0}\sc proof --- }{#1}{}}{\qed\vspace{2ex}}
	{\refstepcounter{claim}\vspace{1ex}\noindent {(\it\arabic{claim}) {#1}{}}\it}{\vspace{1ex}}
	{\noindent {}{#1}{}}{ This proves~(\arabic{claim}).\vspace{1ex}}
\begin{document}

\title{Graphs that do not contain a cycle with a node that has at
least two neighbors on it}

\author{Pierre Aboulker\thanks{Universit\'e Paris 7, LIAFA, Case 7014,
    75205 Paris Cedex 13, France.  E-mail:
    pierre.aboulker@liafa.jussieu.fr.} ,
Marko Radovanovi\'c\thanks{Faculty of Computer Science (RAF), Union
    University, Knez Mihailova 6/VI, 11000 Belgrade, Serbia. E-mail:
mradovanovic@raf.edu.rs.  Supported by Serbian Ministry of Education and Science project 174033} ,
\\Nicolas Trotignon\thanks{CNRS, LIP, ENS Lyon, INRIA, Universit\'e de Lyon
    (France),\ E-mail: nicolas.trotignon@ens-lyon.fr.}~
and Kristina
  Vu\v{s}kovi\'c\thanks{School of Computing, University of Leeds,
    Leeds LS2 9JT, UK and 
Faculty of Computer Science (RAF), Union
    University, Knez Mihajlova 6/VI, 11000 Belgrade, Serbia. E-mail:
    k.vuskovic@leeds.ac.uk.  Partially supported by 
 EPSRC grant EP/H021426/1
and Serbian Ministry
    of Education and Science projects 174033 and III44006.
\newline The first and third authors are partially supported by \emph{Agence Nationale de la Recherche} under reference
    \textsc{anr 10 jcjc 0204 01}
 \newline The four authors are also supported by PHC Pavle Savi\'c grant, 
jointly awarded by EGIDE, 
an agency of the French Minist\`ere des Affaires \'etrang\`eres et 
europ\'eennes, and Serbian 
Ministry of Education and Science}}

\date{June 19, 2012}

\maketitle

\begin{abstract}
  We recall several known results about minimally 2-connected graphs,
  and show that they all follow from a decomposition theorem.
  Starting from an analogy with critically 2-connected graphs, we give
  structural characterizations of the classes of graphs that do not
  contain as a subgraph and as an induced subgraph, a cycle with a
  node that has at least two neighbors on the cycle.  From these
  characterizations we get polynomial time recognition algorithms for
  these classes, as well as polynomial time algorithms for
  vertex-coloring and edge-coloring. 
\end{abstract}

\section{Introduction}

In this paper all graphs are finite, simple and undirected.  A
\emph{propeller} $(C,x)$ is a graph that consists of a chordless cycle
$C$, called the {\em rim}, and a node $x$, called the {\em center},
that has at least two neighbors on $C$.  The aim of this work is to
investigate the structure of graphs defined by excluding propellers as
subgraphs and as induced subgraphs.

In Section \ref{sec:mc2cg} we motivate the study of these two classes
of graphs by revisiting several theorems concerning classes of graphs
defined by constraints on connectivity, such as minimally and
critically 2-connected graphs.

Our second motivation for the study of propeller-free graphs is our
interest in wheel-free graphs.  A {\em wheel} is a propeller whose rim
has length at least~4 and whose center has at least 3 neighbors on the
rim.  We say that a graph $G$ {\em contains} a graph $F$ if $F$ is
isomorphic to a subgraph of $G$, and $G$ {\em contains $F$ as an
  induced subgraph} if $F$ is isomorphic to an induced subgraph of
$G$. We say that $G$ is {\em $F$-free} if $G$ does not contain $F$ as
an induced subgraph, and for a family of graphs $\mathcal F$, $G$ is
$\mathcal F$-free if it is $F$-free for every $F \in \mathcal F$.
Clearly, propeller-free graphs form a subclass of wheel-free graphs,
because every wheel is a propeller. 

Many interesting classes of graphs can be characterized as being
$\mathcal F$-free for some family $\mathcal F$. The most famous such
example is the class of perfect graphs.  A graph $G$ is {\em perfect}
if for every induced subgraph $H$ of $G$, $\chi (H)=\omega (H)$, where
$\chi (H)$ denotes the {\em chromatic number} of $H$, i.e.\ the minimum
number of colors needed to color the nodes of $H$ so that no two
adjacent nodes receive the same color, and $\omega (H)$ denotes the
size of a largest clique in $H$, where a {\em clique} is a graph in which
every pair of nodes are adjacent.  A {\em hole} in a graph is an
induced cycle of length at least 4.  The famous Strong Perfect Graph
Theorem~\cite{crst}, states that a graph is perfect if and only if it does not
contain an odd hole nor the complement of an odd hole (such graphs are
known as {\em Berge graphs}).  This proof is obtained through a
decomposition theorem for Berge graphs, and in this study wheels and
another set of configurations known as {\em 3-path configurations}
($3PC$'s) play a key role.  The $3PC$'s are structures induced by three
paths $P_1=x_1\ldots y_1$, $P_2=x_2\ldots y_2$ and $P_3=x_3\ldots
y_3$, such that $\{ x_1,x_2,x_3\} \cap \{ y_1,y_2,y_3\}=\emptyset$,
$X=\{ x_1,x_2,x_3\}$ induces either a triangle or a single node, $Y=\{
y_1,y_2,y_3\}$ induces either a triangle or a single node, and the
nodes of $P_i \cup P_j$, $i\neq j$, induce a hole.  More specifically,
a $3PC(\cdot ,\cdot )$ is a $3PC$ in which both $X$ and $Y$ consist of
a single node; a $3PC(\Delta ,\cdot )$ is a $3PC$ in which $X$ induces
a triangle and $Y$ consist of a single node; and a $3PC(\Delta ,\Delta
)$ is a $3PC$ in which both $X$ and $Y$ induce triangles.  It is easy
to see that Berge graphs are both $3PC(\Delta ,\cdot)$-free and
odd-wheel-free (where an {\em odd-wheel} is a wheel that induces an
odd number of triangles).  The remaining wheels and 3PC's form
structures around which the decompositions occur in the decomposition
theorem for Berge graphs in \cite{crst}.  

Wheels and 3PC's are called \emph{Truemper configurations}, and they
play a role in other classes of graphs.  A well studied example is the
class of even-hole-free graphs.  Here again, the
decomposition theorems for this class ~\cite{cckv-ehf1,dsv:ehf} 
are obtained by studying
Truemper configurations that may occur as induced subgraphs.
In both classes (Berge graphs and
even-hole-free graphs), analysing what happens when the graph
contains a wheel is a difficult task.  This suggests that wheel-free
graphs should have interesting structural properties.  This is also
suggested by three subclasses of wheel-free graphs described below.

\begin{itemize}
\item Say that a graph is \emph{unichord-free} if it does not contain
  a cycle with a unique chord as an induced subgraph.  The class of
  unichord-free graphs is a subclass of wheel-free graphs (because
  every wheel contains a cycle with a unique chord as an induced
  subgraph), and unichord-free graphs have a complete structural
  description, see~\cite{nicolas.kristina:one} and also the end of
  Section~\ref{subS:unichord} below.
\item It is easy to see that the class of $K_4$-free graphs that do
  not contain a subdivision of wheel as an induced subgraph is the
  class of graphs that do not contain a wheel or a subdivision of
  $K_4$ as induced subgraphs.  Here again, this subclass of wheel-free
  graphs has a complete structural description,
  see~\cite{nicolas:isk4}.
\item The class of graphs that do not contain a wheel (as a subgraph)
  does not have a complete structural description so far.  However, in
  \cite{thomassenToft:k4} (see also \cite{aboulkerHT:wheelFree}),
  several structural properties for this class are given.  It is also
  proved there that every graph that does not contain a wheel is
  4-colorable, and that every $K_4$-free graph that does not contain a
  wheel is 3-colorable.
\end{itemize}

In Section \ref{sec:dt} we continue this list of well-understood
subclasses of wheel-free graphs by proving decomposition theorems for
graphs that do not contain propellers, both in the subgraph and the
induced subgraph sense.  Based on the decomposition theorems, in
Section~\ref{sec:ra}, we construct polynomial time recognition
algorithms for these two classes of graphs.  Note that the complexity
of detecting a wheel as an induced subgraph is an open question, while
the complexity of detecting the other Truemper's configurations is
settled ($3PC(\Delta, \cdot )$ is polynomial~\cite{cclsv} and is one
of the steps in the polynomial time recognition algorithm for Berge
graphs~\cite{cclsv}, $3PC(\cdot ,\cdot)$ is
polynomial~\cite{cs:3inatree}, $3PC(\Delta, \Delta )$ is
NP-complete~\cite{mt}).  In the same section, we prove that deciding
whether a graph contains, as an induced subgraph, a propeller such
that the center has at least 4 neighbors on the rim is an NP-complete
problem.  It is easy to show directly that propeller-free graphs have
a node of degree at most~2, which implies that the class can be
vertex-colored in polynomial time, see Theorem~\ref{t:vc1}.  In
Section~\ref{sec:extr}, we prove that propeller-free graphs admit what
we call \emph{extreme decompositions}, that are decompositions such
that one of the blocks of decomposition is in some simple basic class to
be defined later.  Using this property we show that 2-connected
propeller-free graphs have an edge both of whose endnodes are of
degree~2.  This property is used to give polynomial time algorithms
for edge-coloring propeller-free graphs.  Observe that since a clique
on four nodes is a propeller, finding the size of a largest clique in
a propeller-free graph can clearly be done in polynomial time. On the
other hand, finding a maximum stable set of a propeller-free graph is
NP-hard (follows easily from \cite{poljak}, see also
\cite{nicolas.kristina:one}).

\subsection*{Terminology and notation}

Let $G$ be a graph.
For $x\in V(G)$, $N(x)$ denotes the set of neighbors of~$x$.
For $S \subseteq V(G)$, $G[S]$ denotes the subgraph
of $G$ induced by $S$, and $G\setminus S=G[V(G)\setminus S]$.
For $x \in V(G)$ we also use notation $G\setminus x$ to denote
$G\setminus \{ x \}$. For $e \in E(G)$, $G\setminus e$ denotes
the graph obtained from $G$ by deleting edge $e$.

A set $S\subseteq V(G)$ is a {\em node cutset} of $G$ if $G \setminus
S$ has more than one connected component.  Note that if $S=\emptyset$,
then $G$ is disconnected.  When $|S|=k$ we say that $S$ is a {\em
  $k$-cutset}.  If $\{ x \}$ is a node cutset of $G$, then we say that
$x$ is a {\em cutnode} of $G$. A 2-cutset $\{ a,b \}$ is a {\em
  $K_2$-cutset} if $ab \in E(G)$, and an {\em $S_2$-cutset} otherwise.
If a graph $G$ has a node cutset $S$, then $V(G)\sm S$ can be
partitioned into two non-empty sets $C_1, C_2$ such that no edge of
$G$ has an end in $C_1$ and an end in $C_2$.  In this situation, we
say that $(S, C_1, C_2)$ is a \emph{split} of $S$.

A {\em path} $P$ is a sequence of distinct nodes $p_1p_2\ldots p_k$,
$k\geq 1$, such that $p_ip_{i+1}$ is an edge for all $1\leq i <k$.
Edges $p_ip_{i+1}$,  for  $1\leq i <k$, are called the {\em edges of $P$}.
Nodes $p_1$ and $p_k$ are the {\em endnodes} of $P$, and 
$p_2\ldots p_{k-1}$ is the {\em interior} of $P$.
$P$ is
refered to as a {\em $p_1p_k$-path}. For two nodes $p_i$ and $p_j$
of $P$, where $j\geq i$, the path $p_i\ldots p_j$ is
called the {\em $p_ip_j$-subpath} of $P$ and is denoted by $P_{p_ip_j}$.
We write $P=p_1\ldots p_{i-1} P_{p_ip_j} p_{j+1} \ldots p_k$ or
$P=p_1\ldots p_{i} P_{p_ip_j} p_{j} \ldots p_k$.
A cycle $C$ is a sequence of nodes $p_1p_2\ldots p_kp_1$, $k \geq 3$,
such that $p_1\ldots p_k$ is a path and $p_1p_k$ is an edge.
Edges $p_ip_{i+1}$,  for  $1\leq i <k$, and edge $p_1p_k$ 
are called the {\em edges of $C$}.
Let $Q$ be  a path or a cycle. 
The node set of $Q$ is denoted by $V(Q)$.
The {\em length} of $Q$ is the number of its edges.
An edge $e=uv$ is a {\em chord} of $Q$ if $u,v\in V(Q)$, but $uv$ is
not an edge of $Q$. A path or a cycle $Q$ in a graph $G$ is {\em chordless}
if no edge of $G$ is a chord of $Q$.

In all complexity analysis of the algorithms, $n$ stands for the number
of nodes of the input graph and $m$ for the number of edges.

\section{Classes defined by constraints on connectivity}\label{sec:mc2cg}

The {\em connectivity} of a graph $G$ is the minimum size of a node
set $S$ such that $G\setminus S$ is disconnected or has only one
node.  A graph is {\em $k$-connected} if its connectivity is at
least $k$.  A graph is \emph{minimally $k$-connected} if it is
$k$-connected and if the removal of any edge yields a graph of
connectivity $k-1$.  A graph is \emph{critically $k$-connected} if it
is $k$-connected and if the removal of any node yields a graph of
connectivity $k-1$.  Minimally and critically $k$-connected graphs
were the object of much research, see~\cite{bollobas:egt} for
instance.  Note that minimally (and critically) $k$-connected graphs
are classes of graphs that are not closed under any classical
containment relation for graphs, such as the subgraph and induced
subgraph containment relations.  But as we shall see, there are
several ways to enlarge a class to make it closed under taking
subgraphs or induced subgraphs.  Here we consider the classes of
minimally and critically 2-connected graphs, and related
hereditary classes that have similar structural properties, but
are algorithmically more convenient to work with.

\subsection{Minimally 2-connected graphs}

In this section we revisit several old results on minimally
2-connected graphs and establish the relationship between this class
and a class that contains it and is closed under taking subgraphs.
Two $xy$-paths $P$ and $Q$ in a graph $G$ are {\em internally
  disjoint} if they have no internal nodes in common, i.e.\ $V(P)
\cap V(Q)=\{ x,y\}$.  We will use the following classical result.

\begin{theorem}[Menger, see \cite{bondy.murty:book}]\label{thm:whitney}
  A graph $G$ on at least two nodes is 2-connected if and only if any
  two nodes of $G$ are connected by at least two internally disjoint
  paths.
\end{theorem}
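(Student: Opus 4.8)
The plan is to prove the two implications separately. The implication from the path condition to 2-connectivity is essentially immediate from the definition of connectivity, whereas the converse --- producing two internally disjoint paths in \emph{every} 2-connected graph --- is the substantive part, and I would establish it by induction on the distance between the two nodes.

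For the easy direction, suppose any two nodes of $G$ are joined by two internally disjoint paths. First note that this forces $|V(G)| \geq 3$: in a graph on exactly two nodes the only path between them is a single edge, so two internally disjoint paths cannot coexist. Now take any node $v$ and any two nodes $x, y \in V(G) \setminus \{v\}$, and pick two internally disjoint $xy$-paths $P$ and $Q$. Since $V(P) \cap V(Q) = \{x,y\}$ and $v \notin \{x,y\}$, at most one of $P, Q$ passes through $v$, so the other survives in $G \setminus v$ and connects $x$ to $y$ there. Hence $G \setminus v$ is connected for every $v$, and $G$ itself is connected, so its connectivity is at least $2$.

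For the converse, assume $G$ is 2-connected; in particular $|V(G)| \geq 3$ and $G \setminus v$ is connected for every node $v$. I would prove by induction on $d = d(u,v)$ that $u$ and $v$ are joined by two internally disjoint paths. For the base case $d = 1$, the edge $uv$ is one path; for the second, I would show that deleting the single edge $uv$ leaves $u$ and $v$ in the same component. Indeed, if $uv$ were a bridge then, using $|V(G)| \geq 3$, one of $u,v$ would have a further neighbour on its side of the bridge, and deleting that endnode would disconnect $G$, contradicting 2-connectivity; so an $uv$-path $P$ avoiding the edge $uv$ exists, and $P$ together with that edge forms two internally disjoint paths (the edge has no interior). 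For the inductive step $d \geq 2$, take a shortest $uv$-path and let $w$ be its node adjacent to $v$, so $d(u,w) = d-1$; the induction hypothesis gives two internally disjoint $uw$-paths $P_1, P_2$. Using that $G \setminus w$ is connected, choose an $uv$-path $R$ avoiding $w$, let $z$ be the last node of $R$ (traversing from $u$) lying in $V(P_1) \cup V(P_2)$, say $z \in V(P_1)$, and combine the $u$-to-$z$ part of $P_1$ with the $z$-to-$v$ part of $R$ to obtain one path, taking $P_2$ followed by the edge $wv$ as the other.

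The main obstacle is making the gluing in the inductive step genuinely yield two \emph{internally disjoint} paths, which requires care about where $v$ can sit relative to $P_1$ and $P_2$. If $v$ already lies on one of these paths, one should shortcut along that path rather than append the edge $wv$; and the choice of $z$ as the last point of $R$ in $V(P_1) \cup V(P_2)$ is exactly what guarantees that the tail of $R$ from $z$ to $v$ meets neither $P_1$ nor $P_2$ again, so that the first assembled path avoids the interior of the second. Checking these cases is routine but is where the real content of the argument lies; everything else follows from the definition of 2-connectivity and the connectivity of $G \setminus v$.
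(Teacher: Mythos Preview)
The paper does not actually prove this theorem: it is stated as a classical result attributed to Menger and cited from Bondy and Murty's textbook, with no proof given. So there is nothing in the paper to compare your argument against.

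That said, your proposal is correct and is precisely the standard textbook proof (originally due to Whitney, and the one found in the cited reference): the easy direction via the definition of a cutnode, and the substantive direction by induction on $d(u,v)$, using a path in $G\setminus w$ and the ``last common node'' trick to splice together two internally disjoint paths. Your identification of the delicate point --- handling the case $v\in V(P_1)\cup V(P_2)$ separately, and choosing $z$ as the \emph{last} node of $R$ in $V(P_1)\cup V(P_2)$ so that the tail $R[z,v]$ is clean --- is exactly right, and the case analysis you sketch goes through. One tiny check you should make explicit when writing it out: the node $z$ cannot equal $w$ (since $R$ avoids $w$), which is what guarantees that $P_1[u,z]$ does not reach $w$ and hence stays internally disjoint from $P_2\cup\{w\}$.
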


Let $\mathcal C'_0$ be the class of graphs such that the nodes of
degree at least 3 induce an independent set.  Let $\mathcal C'_1$ be
the class of \emph{chordless graphs}, that are graphs whose cycles are
all chordless (or in other words, the class of graphs that do not
contain a cycle with a chord).  Observe that classes $\mathcal C'_0$
and $\mathcal C'_1$ are both closed under taking subgraphs (and in
particular, they are closed under taking induced subgraphs).  It is
easy to check that ${\mathcal C'_0} \subsetneq {\mathcal C'_1}$.

\begin{lemma}
  \label{th:clM2c}
  A graph $G$ is chordless if and only if for every subgraph $H$ of
  $G$, either $H$ has connectivity at most~1 or $H$ is minimally
  2-connected.
\end{lemma}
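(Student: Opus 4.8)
The plan is to prove the two implications separately, relying throughout on two elementary facts. First, chordlessness passes to subgraphs: any cycle of a subgraph $H$ of $G$ is also a cycle of $G$, and since $E(H)\subseteq E(G)$, a chord of that cycle in $H$ would be a chord of it in $G$, so if $G$ is chordless then so is $H$. Second, removing an edge from a $2$-connected graph leaves it connected (a $2$-connected graph on at least three nodes has no bridge, else an endnode of the bridge would be a cutnode) and cannot increase its connectivity; hence deleting an edge from a $2$-connected graph yields connectivity either $1$ or $2$. With these in hand, the core of the argument is to show that a $2$-connected chordless graph is automatically minimally $2$-connected, and conversely that a single chorded cycle already witnesses a failure of minimal $2$-connectivity.

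For the forward direction, I would take $G$ chordless and an arbitrary subgraph $H$, which is then chordless by the first fact. If $H$ has connectivity at most $1$ there is nothing to prove, so assume $H$ is $2$-connected; note $H$ has at least three nodes. To see that $H$ is minimally $2$-connected, suppose for contradiction that some edge $e=uv$ leaves $H\setminus e$ of connectivity not equal to $1$; by the second fact $H\setminus e$ is then $2$-connected. Applying Theorem~\ref{thm:whitney} to $u$ and $v$ in $H\setminus e$ produces two internally disjoint $uv$-paths $P_1$ and $P_2$. Since the edge $uv$ has been deleted, neither path is the single edge $uv$, so both have length at least $2$; therefore $C=P_1\cup P_2$ is a cycle of length at least $4$ on which $u$ and $v$ are non-adjacent. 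Consequently $e=uv$ is a chord of $C$ in $H$, contradicting that $H$ is chordless. Hence every edge deletion drops the connectivity of $H$ to $1$, i.e.\ $H$ is minimally $2$-connected.

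For the converse I would argue by contrapositive. Suppose $G$ is not chordless, so $G$ contains a cycle $C$ with a chord $e=uv$. Let $H$ be the subgraph consisting of $C$ together with $e$. A cycle is $2$-connected and adding an edge preserves $2$-connectivity, so $H$ is $2$-connected and does not fall under the "connectivity at most $1$" alternative. However $H\setminus e$ is exactly the cycle $C$, which is itself $2$-connected, so deleting $e$ does not reduce the connectivity of $H$ to $1$. Thus $H$ is $2$-connected but not minimally $2$-connected, exhibiting a subgraph of $G$ that violates the stated property and completing the contrapositive.

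The calculations involved are routine; the one place demanding genuine care is the forward direction, where one must check that the two internally disjoint paths furnished by Menger actually close up into a \emph{chorded} cycle. The crux is verifying that both paths have length at least $2$, so that $u$ and $v$ are non-adjacent on $C$ and $e$ is a true chord rather than an edge of $C$; this is precisely why $e$ must be deleted \emph{before} invoking Theorem~\ref{thm:whitney}.
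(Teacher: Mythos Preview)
Your proof is correct and follows essentially the same approach as the paper: the converse exhibits a chorded cycle as a $2$-connected but not minimally $2$-connected subgraph, and the forward direction uses Theorem~\ref{thm:whitney} in $H\setminus e$ to find a cycle through the endnodes of $e$, making $e$ a chord. You simply spell out a few details the paper leaves implicit, notably that chordlessness is inherited by subgraphs and that the two internally disjoint paths each have length at least~$2$ so that $e$ is genuinely a chord rather than an edge of the cycle.
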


\begin{proof}
  A cycle with a chord has connectivity 2 and is not minimally
  2-connected since removing the chord yields a 2-connected graph.
  This proves the ``if'' part of the theorem.  To prove the ``only
  if'' part, consider a chordless graph $G$, and suppose for a
  contradiction that some subgraph $H$ of $G$ is 2-connected and not
  minimally 2-connected.  So by deleting some edge~$e$, a 2-connected
  graph $H'$ is obtained.  By Theorem \ref{thm:whitney}, the two
  endnodes of $e$ are contained in a cycle $C$ of $H'$. But then $C$
  together with $e$ forms in $H$ a cycle with a chord, a
  contradiction.
\end{proof}

Class $\mathcal C'_1$ was studied by Dirac~\cite{dirac:chordless} and
Plummer~\cite{plummer:68} in the 1960s.

\begin{theorem}[Dirac~\cite{dirac:chordless}, Plummer \cite{plummer:68}]
  \label{th:2cCM}
  A 2-connected graph is chordless if and only if it is minimally
  2-connected. 
\end{theorem}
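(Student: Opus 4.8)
The statement has two directions, and the forward one costs almost nothing once Lemma~\ref{th:clM2c} is available. The plan is: given a 2-connected chordless graph $G$, apply Lemma~\ref{th:clM2c} with the choice $H=G$. The lemma tells us that every subgraph of $G$, and in particular $G$ itself, either has connectivity at most~1 or is minimally 2-connected. Since $G$ is 2-connected, the first alternative is excluded, so $G$ is minimally 2-connected. No further work is needed for this implication.

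For the converse I would argue the contrapositive inside the class of 2-connected graphs: assuming $G$ is 2-connected but \emph{not} chordless, I exhibit a single edge whose deletion leaves a 2-connected graph, which shows that $G$ is not minimally 2-connected. Since $G$ is not chordless, it contains a cycle $C$ together with a chord $e=uv$, where $u,v\in V(C)$ and $uv$ is not an edge of $C$. Set $G'=G\sm e$. The crucial structural remark is that $C$ survives untouched in $G'$ (none of its edges is $e$), and that the chord $e$ splits $C$ into two internally disjoint $uv$-paths $P_1$ and $P_2$, each of length at least~$2$.

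It remains to check that $G'$ is 2-connected. It is connected and has at least four nodes (the smallest cycle carrying a chord is a $4$-cycle), so by the definition of connectivity it suffices to rule out a cutnode. Suppose for contradiction that some node $w$ is a cutnode of $G'$. Because $G$ is 2-connected, $G\sm w$ is connected, so re-inserting the edge $e$ into $G'\sm w$ must reconnect it; this forces $u$ and $v$ to lie in distinct components of $G'\sm w$, and in particular $w\notin\{u,v\}$. But $P_1$ and $P_2$ meet only in $\{u,v\}$, so the node $w$ can lie on at most one of them, and the remaining path is a $uv$-path inside $G'\sm w$, contradicting that $u$ and $v$ are separated. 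Hence $G'$ has no cutnode and is 2-connected, so $G$ is not minimally 2-connected. (One could equally phrase this last step through Menger's theorem~\ref{thm:whitney}, producing two internally disjoint paths between any pair of nodes of $G'$.)

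The only genuinely delicate point is the cutnode analysis in the converse: the case distinction showing $w\neq u,v$, followed by the pigeonhole argument that one of the two arcs of $C$ must avoid $w$. Everything else is either immediate from Lemma~\ref{th:clM2c} or a routine verification that $G'$ remains connected with enough nodes.
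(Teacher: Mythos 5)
Your proof is correct and follows essentially the same route as the paper's: the forward direction is the identical appeal to Lemma~\ref{th:clM2c}, and your converse is simply the contrapositive of the paper's argument, resting on the same two facts --- that a cutnode of $G\sm uv$ must separate $u$ from $v$ because $G$ itself is 2-connected, and that the two arcs of a cycle having $uv$ as a chord are internally disjoint $uv$-paths avoiding the edge $uv$. No gaps; the brief assertions that $G\sm e$ stays connected and has at least four nodes are routine and justified.
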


\begin{proof}
  If $G$ is a 2-connected chordless graph, then by
  Lemma~\ref{th:clM2c}, it is minimally 2-connected.  Conversely,
  suppose that $G$ is a minimally 2-connected graph and let $uv$ be an
  edge of $G$.  So, $G\sm uv$ has connectivity~1 and therefore
  contains a cutnode $x$.  Since $G$ is 2-connected, it follows that
  $(G\sm uv)\sm x$ has two connected components, one containing $u$,
  the other containing $v$.  This implies that every cycle of $G$ that
  contains $u$ and $v$ must go through $uv$, so $uv$ cannot be a chord
  of any cycle of $G$.  This proof can be repeated for all edges
  of~$G$.  It follows that $G$ is chordless.
\end{proof}

It seems that it was not observed until recently that the class
$\mathcal C'_1$ of chordless graphs admits a simple decomposition
theorem, with $\mathcal C'_0$ serving as a basic class.  An
$S_2$-cutset $\{a, b\}$ is \emph{proper } if it has a split $(\{a,
b\}, C_1, C_2)$ such that neither $G[\{a, b\} \cup C_1]$ nor $G[\{a,
b\} \cup C_2]$ is a chordless $ab$-path.  When we say that $(\{ a,b \}
,D_1,D_2)$ is a split of a proper $S_2$-cutset, we mean that neither
$G[\{a, b\} \cup D_1]$ nor $G[\{a, b\} \cup D_2]$ is a chordless
$ab$-path.  The following theorem is implicitly proved in
\cite{nicolas.kristina:one} and explicitly stated and proved in
\cite{nicolas:isk4}. We include here a proof that is much shorter and
simpler than the previous ones.

\begin{theorem}
  \label{th:decChLess}
  A graph in $\mathcal C'_1$ is either in $\mathcal C'_0$, or has a
  0-cutset, a 1-cutset, or a proper $S_2$-cutset.
\end{theorem}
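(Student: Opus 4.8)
The plan is to prove the contrapositive of the interesting direction: assume $G \in \mathcal{C}'_1$ (so $G$ is chordless) and that $G$ has no $0$-cutset, no $1$-cutset, and no proper $S_2$-cutset, and deduce that $G \in \mathcal{C}'_0$, i.e.\ that the nodes of degree at least $3$ form an independent set. Suppose not: then there is an edge $uv$ with both $u$ and $v$ of degree at least $3$ in $G$. The goal is to extract from this edge a separation that contradicts one of our hypotheses. Since having no $0$- or $1$-cutset means $G$ is $2$-connected, the first thing I would record is that $G$ is $2$-connected and chordless, so by Theorem~\ref{th:2cCM} it is minimally $2$-connected; equivalently, deleting any edge destroys $2$-connectivity.

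The heart of the argument is to analyse the edge $uv$. First I would delete $uv$: by minimal $2$-connectivity, $G \sm uv$ has a cutnode, and since $G$ is $2$-connected this cutnode $x$ separates $u$ from $v$ in $G \sm uv$. Write $(\{x\}, A, B)$ for the resulting split of $G \sm uv$ with $u \in A$ and $v \in B$. The natural candidate cutset of $G$ to work with is $\{x, \text{something}\}$, but the cleaner route is to consider the neighborhoods of $u$ and $v$ directly. Because $u$ has degree at least $3$, it has a neighbor $u' \neq v$, and likewise $v$ has a neighbor $v' \neq u$; the key point is that $G$ being chordless forces these extra neighbors to be non-adjacent to the opposite endpoint and to each other in a controlled way. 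I expect the right cutset to be an $S_2$-cutset of the form $\{u, v'\}$ or $\{x, y\}$ separating the two high-degree sides, and I would then verify properness by checking that neither side reduces to a chordless $uv$-path.

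The main obstacle, and the step I would spend the most care on, is producing a cutset that is simultaneously an $S_2$-cutset (non-adjacent pair) and \emph{proper} in the precise sense defined just before the theorem, namely that it admits a split $(\{a,b\}, D_1, D_2)$ where neither $G[\{a,b\} \cup D_i]$ is a chordless $ab$-path. The definition of proper is exactly tailored to exclude the trivial separations that a minimally $2$-connected graph always has (every degree-$2$ vertex gives a split into a path and a remainder), so the burden is to show that when $u$ and $v$ \emph{both} have degree at least $3$, at least one of the two sides of the separation we build is forced to contain a branching and hence cannot be a bare path. Here chordlessness does the decisive work: any chord on a purported path side would create, together with the cycle structure guaranteed by $2$-connectivity and Menger's theorem (Theorem~\ref{thm:whitney}), a cycle with a chord in $G$, contradicting $G \in \mathcal{C}'_1$.

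Concretely, I would argue as follows. Take the cutnode $x$ of $G \sm uv$ separating $u$ from $v$, giving in $G$ a $2$-cutset whose removal leaves $u$ and $v$ in different components; the pair consists of $x$ and one endpoint, or we upgrade it to a pair of non-adjacent vertices by pushing along the neighborhoods. Since $\deg(u) \ge 3$, the side of the separation containing $u$ carries at least two edges at $u$ besides the path continuation, so that side is not a chordless $ab$-path; the symmetric statement holds on the $v$-side using $\deg(v)\ge 3$. This yields a proper $S_2$-cutset, contradicting our assumption and completing the contrapositive. I would double-check the edge case where the only available pair turns out to be adjacent (a $K_2$-cutset rather than an $S_2$-cutset) and handle it by relocating one cut vertex to a suitable non-neighbor along the chordless structure, which is again guaranteed to exist precisely because both endpoints have degree at least $3$.
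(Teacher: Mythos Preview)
Your overall strategy matches the paper's: start from an edge $uv$ with both endpoints of degree at least~3, delete it, use minimal 2-connectivity to obtain a cutnode $w$ of $G\sm uv$ separating $u$ from $v$, and then exhibit $\{u,w\}$ (in your notation $\{x,\text{endpoint}\}$) as a proper $S_2$-cutset of $G$, with properness coming from the high degrees of $u$ and $v$. This is exactly the paper's line.

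The genuine gap in your proposal is the step where you must show the cutset is an $S_2$-cutset rather than a $K_2$-cutset, i.e.\ that $uw\notin E(G)$. You flag this as an ``edge case'' to be handled by ``relocating one cut vertex to a suitable non-neighbor along the chordless structure,'' but this relocation is never specified and it is not clear it works. The paper does \emph{not} relocate: it proves directly that $uw\notin E(G)$. Concretely, since $u$ is not a cutnode of $G$, there is a path $P_u$ from some neighbor $u'\in C_u\sm\{u\}$ of $u$ to $w$ with interior in $C_u\sm\{u\}$; together with a $vw$-path $P_v$ through $C_v$, the edge $uu'$, and the edge $uv$, this forms a cycle in $G$ on which $uw$, if present, would be a chord --- contradicting $G\in\mathcal C'_1$. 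This single-cycle argument is the missing key idea in your sketch; once you have it, no relocation is needed and the proof closes exactly as you outlined.
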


\begin{proof}
  Let $G$ be in $\mathcal C'_1 \sm \mathcal C'_0$ and suppose that $G$
  has no 0-cutset and no 1-cutset.  So, in $G$, there is an edge
  $e=uv$ such that $u$ and $v$ have both degree at least 3 and by
  Lemma~\ref{th:clM2c}, $G\sm e$ is not 2-connected so, it has a
  0-cutset (so it is disconnected) or a 1-cutset.

  If $G \setminus e$ is disconnected, then $u$ (and $v$) would be a
  cutnode of $G$. So $G\setminus e$ has a cutnode $w \notin \{u, v\}$.
  Since $w$ is not a cutnode of $G$, the graph $(G\sm e) \sm w$ has
  exactly two connected components $C_u$ and $C_v$, containing $u$ and
  $v$ respectively, and $V(G) = C_u \cup C_v \cup \{w\}$.  Let
  $u'\notin \{v, w\}$ be a neighbor of $u$ ($u'$ exists since $u$ has
  degree at least 3).  So, $u'\in C_u$.  In $G$, $u$ is not a cutnode,
  so there is a path $P_u$ from $u'$ to $w$ whose interior is in
  $C_u \sm \{u\}$.  Together with a path $P_v$ from $v$ to $w$ with interior in
  $C_v$, $P_u$, $uu'$ and $e$ form a cycle, so $uw\notin E(G)$ for
  otherwise $uw$ would be a chord of this cycle.  Because of the
  degrees of $u$ and $v$, $(\{u, w\}, C_u\sm \{u\}, C_v)$ is a split
  of a proper $S_2$-cutset of $G$.
 \end{proof}

Theorem~\ref{th:2cCM} shows that the class of minimally 2-connected
graphs is a subclass of some hereditary class that has a precise
decomposition theorem, namely Theorem~\ref{th:decChLess}.  There is a
more standard way to embed a class $\mathcal C$ into an hereditary
class $\mathcal C'$: taking the closure of $\mathcal C$, that is the
class $\mathcal C'$ of all subgraphs (or induced subgraphs according
to the containment relation under consideration) of graphs from
$\mathcal C$.  But as far as we can see, applying this method to
minimally 2-connected graphs yields a class more difficult to handle
than chordless graphs, as suggested by what follows.  The classes
$\mathcal C_0'$ and $\mathcal C_1'$ are both closed under taking
subgraphs (and in particular under taking induced subgraphs), so the
class of subgraphs of minimally 2-connected graphs is contained in
$\mathcal C_1'$.  On the other hand, a chordless graph, or even a
graph from $\mathcal C'_0$ may fail to be a subgraph of some minimally
2-connected graph.  For instance consider the path on $a, b, c, d$ and
add the edge $bd$.  The obtained graph is chordless, in $\mathcal
C'_0$, and no minimally 2-connected graph may contain it as a
subgraph.  Hence $\mathcal C'_1$ is a proper superclass of the class
of subgraphs of minimally 2-connected graphs.

In the rest of this subsection, we show how Theorem~\ref{th:decChLess}
can be used to prove several known theorems.  The first example is
about edge and total coloring (we do not reproduce the proof, which is
a bit long).  Note that for the proof of the following theorem, the
only approach we are aware of is to use Theorem \ref{th:decChLess}.

\begin{theorem}[Machado, de Figueiredo and
  Trotignon~\cite{mft:chordless}]
  \label{th:mft}
  Let $G$ be a chordless graph of maximum degree at least~3.  Then $G$
  is $\Delta(G)$-edge colourable and $(\Delta(G)+1)$-total-colourable.
\end{theorem}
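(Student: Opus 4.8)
The plan is to prove both statements simultaneously by structural induction on $|V(G)|$, with the decomposition theorem (Theorem~\ref{th:decChLess}) as the engine: either $G$ lies in the basic class $\mathcal C'_0$, or it has a $0$-cutset, a $1$-cutset, or a proper $S_2$-cutset along which it splits into strictly smaller chordless graphs. For the induction to go through I would prove slightly more than the bare statement, namely that $G$ admits a $\Delta(G)$-edge-colouring (respectively a $(\Delta(G)+1)$-total-colouring) with enough freedom to prescribe the colours seen at the cutset nodes; this flexibility is what lets the block colourings be glued back together. One must also watch the degenerate blocks with $\Delta(H)\le 2$ that arise after decomposing: these are disjoint unions of paths and cycles, and since the ambient hypothesis $\Delta(G)\ge 3$ supplies at least three colours, even odd cycles cause no trouble.

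For the base case $G\in\mathcal C'_0$, the defining property is that the nodes of degree at least $3$ form an independent set, so in particular no edge has both endnodes of degree at least $3$; equivalently, every edge has an endnode of degree at most $2$. I would exploit this directly: the nodes of maximum degree are pairwise non-adjacent, so a greedy, Vizing-type fan argument yields a $\Delta(G)$-edge-colouring. For the total-colouring, each node $v$ of degree at least $3$ is incident to at most $\Delta(G)$ edges, leaving room among the $\Delta(G)+1$ colours to colour $v$ itself; the independence of these high-degree nodes, together with the low degree of all remaining nodes, then makes completing the total-colouring routine. This is the easy part.

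In the inductive step the $0$-cutset (disconnected) and $1$-cutset cases are straightforward. If $x$ is a cutnode, I would colour each block by induction and then use the fact that permuting the colour classes of an edge- or total-colouring is harmless: since the edges meeting $x$ from different blocks number $\deg_G(x)\le\Delta(G)$ in total, one can relabel colours inside each block so that all edges at $x$, and the colour of $x$ itself, become pairwise compatible. The blocks are induced subgraphs, hence chordless, and strictly smaller, so induction applies.

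The real work, and the main obstacle, is the proper $S_2$-cutset $\{a,b\}$ with split $(\{a,b\},C_1,C_2)$. Here $H_1=G[\{a,b\}\cup C_1]$ and $H_2=G[\{a,b\}\cup C_2]$ are chordless (being induced subgraphs), strictly smaller, and share only the non-adjacent nodes $a$ and $b$, while their edge sets partition $E(G)$. I would colour each $H_i$ by induction and then merge, the only conflicts occurring at $a$ and at $b$, where the colour sets used by $H_1$ and by $H_2$ must be made disjoint (and, for total-colouring, the colours of $a$ and of $b$ reconciled). Because $\deg_{H_1}(a)+\deg_{H_2}(a)=\deg_G(a)\le\Delta(G)$, a global relabelling of the colours of $H_2$ can always repair one of the two cutnodes; the difficulty is that the colours seen at $a$ and at $b$ within $H_2$ may be coupled through shared colour classes, so a single permutation need not fix both at once. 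This is where I expect essentially all of the effort to go: having aligned $a$, I would repair $b$ by Kempe exchanges, swapping a carefully chosen pair of colours along the component of the corresponding two-coloured subgraph of $H_2$ that meets $b$, and arguing that this chain can be chosen so as not to disturb the already-fixed colours at $a$. Guaranteeing that such a chain through $b$ avoids $a$, and absorbing the extra vertex-colour constraints in the total-colouring version, is the technical heart of the argument and the reason the full proof is long.
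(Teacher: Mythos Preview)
The paper does not actually prove Theorem~\ref{th:mft}: it explicitly states ``we do not reproduce the proof, which is a bit long'' and cites the external reference~\cite{mft:chordless}. The only information the paper gives is that ``the only approach we are aware of is to use Theorem~\ref{th:decChLess}''. Your proposal does exactly this---structural induction driven by the decomposition theorem---so at the level of overall strategy you are aligned with what the paper asserts is the method.

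A couple of remarks on your sketch. First, the paper (and presumably~\cite{mft:chordless}) defines the blocks for a proper $S_2$-cutset by adding a marker path between $a$ and $b$ (the graphs $G_X$, $G_Y$), and Lemma~\ref{l:extremal} guarantees one can choose an extreme split with $G_X\in\mathcal C'_0$ and $a,b$ of degree at least~$3$ there. You instead take the bare induced subgraphs $H_i=G[\{a,b\}\cup C_i]$; this is also chordless, but you lose the $2$-connectivity and the control on $\deg(a),\deg(b)$ that the marker construction and the extreme choice provide, and this control is likely what makes the Kempe-chain merging tractable in the original. Second, your base case for edge-colouring is essentially Fournier's theorem (graphs whose maximum-degree vertices form a stable set are Class~1), which is fine; the total-colouring base case is less obviously ``routine'' than you suggest. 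Your honest identification of the $S_2$-cutset merge as ``where essentially all of the effort goes'' is accurate, and is precisely why the paper declines to reproduce the argument.
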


Dirac~\cite{dirac:chordless} and Plummer~\cite{plummer:68}
independently showed that minimally 2-connected graphs have at least
two nodes of degree at most~2 and chromatic number at most 3.  We
now show how Theorem \ref{th:decChLess} can be used to give simple
proofs of these results for chordless graphs in general.  In the rest
of this subsection, when $(\{ a,b\}, X,Y)$ is a split of a proper
$S_2$-cutset of a graph $G$, we denote by $G_{X}$ the graph obtained
from $G[X \cup \{ a ,b\}]$ by adding a node $y$ that is adjacent to
both $a$ and $b$ ($G_Y$ is defined similarly from $G[Y \cup \{ a,
b\}]$ by adding a node $x$ that is adjacent to both $a$ and $b$).

\begin{theorem}\label{t:chord1}
  Every chordless graph on at least two nodes has at least two
  nodes of degree at most 2.
\end{theorem}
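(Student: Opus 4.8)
The plan is to use the decomposition theorem (Theorem~\ref{th:decChLess}) together with an induction on the number of nodes. The base cases will be the basic class $\mathcal{C}'_0$ and the non-2-connected cases. The inductive step handles a proper $S_2$-cutset by combining degree-2 nodes found recursively in the two blocks of the decomposition.

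\medskip

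\begin{proof}
Let $G$ be a chordless graph on at least two nodes. We proceed by induction on $|V(G)|$.

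First I would dispose of the disconnected and low-connectivity cases. If $G$ has a $0$-cutset, it has at least two connected components, and each component that has at least two nodes contains (by induction) two nodes of degree at most~$2$, while a component that is a single node trivially gives a node of degree~$0$; in either case two such nodes are easily collected across the components. If $G$ has a $1$-cutset $\{w\}$, then $G\sm w$ has at least two connected components $C_1, C_2$. The key observation is that in each $G[C_i\cup\{w\}]$, applying the induction hypothesis yields two nodes of degree at most~$2$, and since at most one of them can be $w$ itself (and moreover $w$ is the only node whose degree in $G$ might exceed its degree in $G[C_i\cup\{w\}]$), each block contributes at least one node of $C_i$ whose degree in $G$ is at most~$2$; these two nodes lie in different blocks and are thus distinct.

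The main case is when $G\in\mathcal{C}'_1\sm\mathcal{C}'_0$ and $G$ is $2$-connected, so that by Theorem~\ref{th:decChLess} it has a proper $S_2$-cutset $\{a,b\}$ with split $(\{a,b\}, X, Y)$. Here I would pass to the auxiliary graphs $G_X$ and $G_Y$ defined just before the statement: $G_X$ is obtained from $G[X\cup\{a,b\}]$ by adding a marker node $y$ adjacent to $a$ and $b$, and symmetrically for $G_Y$. The crucial routine verifications are that $G_X$ and $G_Y$ are again chordless and have strictly fewer nodes than $G$, so induction applies to each, giving two degree-$\le 2$ nodes in each of $G_X$ and $G_Y$. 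The point of the marker node $y$ is that, because $\{a,b\}$ is a \emph{proper} $S_2$-cutset, the block $G[X\cup\{a,b\}]$ is not a chordless $ab$-path, which will force at least one of the two low-degree nodes found in $G_X$ to be an interior node of $X$, i.e.\ a genuine node of $G$ whose degree is unchanged; the symmetric statement holds for $G_Y$, and the two resulting nodes lie in $X$ and $Y$ respectively, hence are distinct.

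The step I expect to be the main obstacle is the bookkeeping on the marker node: I must argue that the two low-degree nodes supplied by induction in $G_X$ cannot both be among $\{a, b, y\}$, and that the one I extract is not the artificial node $y$ but a real node of $G$ with the correct degree. This is exactly where the \emph{properness} of the $S_2$-cutset is needed, since if $G[X\cup\{a,b\}]$ were an $ab$-path then its only low-degree nodes could be concentrated at the endpoints $a,b$, whereas ruling that out guarantees an interior low-degree node survives into $G$. Once this is handled on both sides, the two extracted nodes complete the induction.
\end{proof}
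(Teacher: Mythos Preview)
Your overall strategy matches the paper's: induct on $|V(G)|$, handle $\mathcal{C}'_0$ and the $0$/$1$-cutset cases directly, and in the $2$-connected case pass to the blocks $G_X$, $G_Y$ of a proper $S_2$-cutset. The disconnected and $1$-cutset cases are fine, and the verifications that $G_X$ is chordless and strictly smaller than $G$ are indeed routine.

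The gap is exactly at the point you flagged as the main obstacle, and your proposed resolution does not work. You claim that properness of the $S_2$-cutset (``$G[X\cup\{a,b\}]$ is not a chordless $ab$-path'') \emph{guarantees} that one of the two low-degree nodes of $G_X$ lies in $X$. It does not. In $G_X$ the marker $y$ always has degree~$2$, so it eats one of your two guaranteed low-degree nodes; the other could perfectly well be $a$ (or $b$) whenever $a$ has a unique neighbour in $X$. Properness rules out $G[X\cup\{a,b\}]$ being a path, but it does not force $a$ or $b$ to have two neighbours in $X$. So from the induction hypothesis alone you cannot conclude that any node of $X$ has degree at most~$2$.

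The paper fixes this by choosing, among all proper $S_2$-splits $(\{a',b'\},X',Y')$ with $X'\subseteq X$, one with $|X'|$ minimum, and then proving (using minimality together with chordlessness and $2$-connectivity) that both $a'$ and $b'$ have at least two neighbours in $X'$; hence $a',b'$ have degree $\ge 3$ in $G_{X'}$, and the second low-degree node produced by induction must lie in $X'\subseteq X$. This minimisation step is the missing idea in your argument. (The paper also notes an alternative: strengthen the induction hypothesis to assert that in every cycle of a $2$-connected chordless graph that is not itself a cycle, there are nodes $a,b,c,d$ in cyclic order with $a,c$ of degree~$2$ and $b,d$ of degree $\ge 3$.)
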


\begin{proof}
  We prove the result by induction on the number of nodes.  If $G
  \in \mathcal C'_0$ then clearly the statement holds.  Let $G\in
  \mathcal C'_1 \setminus \mathcal C_0'$, and assume the statement
  holds for graphs with fewer than $|V(G)|$ nodes.  Suppose $G$ has
  a 0-cutset or 1-cutset $S$, and let $C_1, \ldots ,C_k$ be the
  connected components of $G \setminus S$. For $i=1, \ldots ,k$, by
  induction applied to $G_i=G[V(C_i) \cup S]$, $C_i$ contains a node
  of degree at most 2 in $G_i$. Note that such a node is of degree
  at most 2 in $G$ as well, and hence $G$ has at least two nodes of
  degree at most~2.  So we may assume that $G$ is 2-connected, and
  hence by Theorem~\ref{th:decChLess}, $G$ has a proper $S_2$-cutset
  with split $(\{ a,b\}, X,Y)$. We now show that both $X$ and $Y$
  contain a node of degree at most 2.

  Let $(\{ a',b'\},X',Y')$ be a split of a proper $S_2$-cutset of $G$
  such that $X'\subseteq X$, and out of all such splits assume that
  $|X'|$ is smallest possible. We now show that both $a'$ and $b'$
  have at least two neighbors in $X'$. Since $G$ is 2-connected both
  $a'$ and $b'$ have a neighbor in every connected component of $G
  \setminus \{ a',b'\}$. In particular $G[Y' \cup \{ a',b'\}]$
  contains an $a'b'$-path $Q$ and $a'$ has a neighbor $a_1$ in
  $X'$. Suppose $N(a')\cap X'=\{ a_1\}$. If $a_1b'$ is not an edge,
  then (since $G[X'\cup \{ a',b'\}]$ is not a chordless path), $(\{
  a_1,b'\}, X' \sm \{a_1\}, Y' \cup \{a'\})$ is a split of a proper
  $S_2$-cutset of $G$, contradicting our choice of $(\{ a',b,\}
  ,X',Y')$. So $a_1b'$ is an edge. Then since $G[X'\cup \{ a',b'\}]$
  is not a chordless path, $X'\setminus \{ a_1\}$ contains a node $c$.
  Since $a_1$ cannot be a cutnode of $G$, there is a $b'c$-path in $G
  \setminus a_1$ whose interior nodes are in $X'$.  Since $b'$ cannot
  be a cutnode of $G$, there is an $a_1c$-path in $G \setminus b'$
  whose interior nodes are in $X'$. Therefore $G[X'\cup b']\setminus
  a_1b']$ contains an $a_1b'$-path $P$. But then $V(P)\cup V(Q)$
  induces a cycle with a chord, a contradiction. Therefore, $a'$ has
  at least two neighbors in $X'$ and by symmetry so does $b'$.

  Note that $|V(G_{X'})|<|V(G)|$, and clearly since $G$ is chordless
  so is $G_{X'}$. So by induction, there is a node $t\in
  V(G_{X'})\setminus \{y'\}$ that is of degree at most~2 in
  $G_{X'}$. Since both $a'$ and $b'$ have at least two neighbors in
  $X'$, it follows that $t\in X'$, and hence $t$ is of degree at most
  2 in $G$ as well.  So $X$ contains a node of degree at most 2, and
  by symmetry so does $Y$, and the result holds.
\end{proof}

In the proof above, the key idea to make the induction work is to
consider a split minimizing one of the sides.  This can be avoided by
using a stronger induction hypothesis: in every cycle of a 2-connected
chordless graph that is not a cycle, there exist four nodes $a, b, c,
d$ that appear in this order, and such that $a, c$ have degree 2, and
$b, d$ have degree at least 3.

Note that for proving the theorem below, it is essential that the class we
work on is closed under taking induced subgraphs.  This is why proofs
of 3-colorability in \cite{dirac:chordless} and \cite{plummer:68} are
more complicated (they consider only minimally 2-connected graphs,
that are not closed under taking subgraphs).

\begin{corollary}
If $G$ is a chordless graph then $\chi (G)\leq 3$.
\end{corollary}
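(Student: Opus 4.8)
The plan is to prove this by a straightforward induction on $|V(G)|$, exploiting the degeneracy bound supplied by Theorem~\ref{t:chord1} together with the fact, already noted in the text, that the class $\mathcal C'_1$ of chordless graphs is closed under taking induced subgraphs. The single crucial structural ingredient is that every chordless graph (on at least two nodes) has a node of degree at most~2; the remark immediately preceding the corollary already signals that the hereditary nature of $\mathcal C'_1$ is what makes this work, in contrast with the class of minimally 2-connected graphs.

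First I would handle the base case: if $|V(G)|\leq 1$ then $G$ has no edge and is trivially $3$-colourable (indeed $1$-colourable). For the inductive step, assume the result holds for all chordless graphs with fewer than $|V(G)|$ nodes and let $G$ be chordless with $|V(G)|\geq 2$. By Theorem~\ref{t:chord1}, $G$ has a node $v$ of degree at most~$2$. Consider $G\sm v$. Since $\mathcal C'_1$ is closed under taking induced subgraphs, $G\sm v$ is again chordless, and it has fewer nodes than $G$, so by the induction hypothesis it admits a proper $3$-colouring $c$.

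It then remains to extend $c$ to $v$. As $v$ has at most~$2$ neighbours in $G$, these neighbours use at most two of the three available colours, so at least one colour is free; assigning it to $v$ yields a proper $3$-colouring of $G$, completing the induction. In effect this argument simply records that every induced subgraph of a chordless graph again lies in $\mathcal C'_1$ and hence, by Theorem~\ref{t:chord1}, has a node of degree at most~$2$, so chordless graphs are $2$-degenerate and therefore $3$-colourable.

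I do not expect any genuine obstacle here, as the whole argument is a routine degeneracy colouring. The only point that must be stated carefully is the invocation of heredity: deleting $v$ must leave us inside the class so that Theorem~\ref{t:chord1} (via the induction hypothesis) can be reapplied to the smaller graph. This is precisely the property that the proof of $3$-colourability for general chordless graphs has over the classical proofs for minimally $2$-connected graphs, which work in a class that is not closed under deletion of nodes.
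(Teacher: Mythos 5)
Your proof is correct and is essentially the same as the paper's: both arguments delete a node of degree at most~2 guaranteed by Theorem~\ref{t:chord1}, colour the remaining chordless graph inductively, and extend the colouring using the free colour at that node. The paper's version is just more terse; your explicit remarks about heredity and the base case fill in details the paper leaves implicit.
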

\begin{proof}
  Let $G$ be a chordless graph and by Theorem \ref{t:chord1} let $x$
  be a node of $G$ of degree at most 2. Inductively color $G\setminus
  x$ with at most 3 colors. This coloring can be extended to a
  3-coloring of $G$ since $x$ has at most two neighbors in $G$.
\end{proof}

We now show how Theorem~\ref{th:decChLess} may be used to prove the
main result in \cite{plummer:68}, that is Theorem~\ref{t:plummer}
below.  We need the next lemma whose simple proof is omitted.

\begin{lemma}[see \cite{mft:chordless}]
  \label{l:extremal}
  Let $G$ be a 2-connected  chordless graph not in ${\cal C}'_0$.  Let $(X, Y, a,
  b)$ be a split of a $S_2$-cutset of $G$ such that $|X|$ is minimum among
  all possible such splits.  Then $G_X$ is in ${\cal C}'_0$.  Moreover, $a$ and $b$
  both have degree at least~3 in $G$ and in $G_X$.
\end{lemma}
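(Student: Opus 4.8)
The plan is to minimize $|X|$ over all splits $(\{a,b\},X,Y)$ of $S_2$-cutsets of $G$ (this family is nonempty: since $G$ is $2$-connected and not in $\mathcal{C}'_0$, Theorem~\ref{th:decChLess} supplies a proper $S_2$-cutset, which is in particular an $S_2$-cutset). I would first dispose of the degree statement, reusing verbatim the argument from the proof of Theorem~\ref{t:chord1}. Suppose $a$ has a unique neighbor $a_1$ in $X$. If $a_1b\notin E(G)$, then $(\{a_1,b\},\,X\sm\{a_1\},\,Y\cup\{a\})$ is a split of an $S_2$-cutset of $G$ whose first side is strictly smaller than $X$ (it is a genuine cutset because $a$'s only link to $X$ was $a_1$, and $X\sm\{a_1\}\neq\emptyset$ since the $2$-connected graph forces $b$ to have a neighbor in $X$), contradicting minimality; if $a_1b\in E(G)$, then combining a $b a_1$-path through $X$, a $b$-to-$a$ path through $Y$, and the edges $aa_1$, $a_1b$ produces a cycle with a chord, contradicting $G\in\mathcal{C}'_1$. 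Hence $a$, and by symmetry $b$, has at least two neighbors in $X$. Consequently $\deg_G a,\deg_G b\geq 3$ (two in $X$, at least one in $Y$) and $\deg_{G_X}a,\deg_{G_X}b\geq 3$ (two in $X$, plus the new node $y$); in particular the minimizing split is proper, since $G[X\cup\{a,b\}]$ is not a chordless $ab$-path.

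Next I would record two structural facts about $G_X$. It is chordless: the only node incident to the added edges is $y$, which has degree $2$, so any cycle of $G_X$ through $y$ is of the form $a\,y\,b\,P\,a$ with $P$ a $ba$-path in $G[X\cup\{a,b\}]$, and replacing $y$ by an $ab$-path through $Y$ turns any chord of such a cycle into a chord of a cycle of $G$, which is impossible; cycles of $G_X$ avoiding $y$ lie in the induced subgraph $G[X\cup\{a,b\}]$ and are chordless for the same reason. It is also $2$-connected: every component of $G\sm\{a,b\}$ meeting $X$ is adjacent to both $a$ and $b$ (else one of $a,b$ would be a cutnode of $G$), so in $G_X$ each such component, as well as $y$, joins $a$ to $b$; a short check that removing $y$, $a$, $b$, or any node of $X$ leaves $G_X$ connected (using $2$-connectivity of $G$ for the last case) then gives the claim.

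Finally, for $G_X\in\mathcal{C}'_0$ I would argue by contradiction. If $G_X\notin\mathcal{C}'_0$, then by Theorem~\ref{th:decChLess} the $2$-connected chordless graph $G_X$ has a proper $S_2$-cutset, giving a split $(\{c,d\},X_1,X_2)$ with $cd\notin E(G_X)$ and neither side a chordless $cd$-path; arrange $y\in X_1\cup\{c,d\}$. In the principal case $y\in X_1$, both neighbors $a,b$ of $y$ lie in $X_1\cup\{c,d\}$, so $X_2\subseteq X$ and $c,d\neq y$; since $X_2$ has no neighbour in $X_1$ and $a,b\notin X_2$, one gets $N_G(X_2)\subseteq\{c,d\}$, whence $(\{c,d\},X_2,V(G)\sm(X_2\cup\{c,d\}))$ is a split of an $S_2$-cutset of $G$ with $|X_2|<|X|$ (strictness fails only if $X_1=\{y\}$ and $\{c,d\}=\{a,b\}$, but then the side $\{a,y,b\}$ is the chordless $ab$-path $a\,y\,b$, excluded), contradicting minimality. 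The main obstacle is the remaining case $y\in\{c,d\}$, say $d=y$: here $2$-connectivity forces $y$'s two neighbors onto opposite sides, $a\in X_1$ and $b\in X_2$, with $c\in X$ a cutnode of $G[X\cup\{a,b\}]$ separating $a$ from $b$. One then checks that $N_G(X_1\sm\{a\})\subseteq(X_1\sm\{a\})\cup\{a,c\}$ and $N_G(X_2\sm\{b\})\subseteq(X_2\sm\{b\})\cup\{b,c\}$, so $\{a,c\}$ and $\{b,c\}$ are $2$-cutsets of $G$ with first sides strictly inside $X$; each yields the desired contradiction as soon as the relevant pair is non-adjacent. The delicate point, where I expect the real work to lie, is ruling out the degenerate possibility $ac,bc\in E(G)$ simultaneously — here I would exploit chordlessness together with the fact that neither $X_1$ nor $X_2$ is a chordless $cd$-path (which translates, via the leaf $y$, into $G[\{c\}\cup X_1]$ not being a chordless $ca$-path and $G[\{c\}\cup X_2]$ not being a chordless $cb$-path) to force a cycle with a chord, completing the contradiction and hence the proof.
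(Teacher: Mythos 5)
A preliminary remark: the paper does not prove Lemma~\ref{l:extremal} at all (it is stated ``whose simple proof is omitted'', with a pointer to \cite{mft:chordless}), so your proposal can only be judged on its own terms. Your overall strategy --- take a split with $|X|$ minimum, establish the degree claim as in the proof of Theorem~\ref{t:chord1}, verify that $G_X$ is $2$-connected and chordless, and then push a cutset of $G_X$ given by Theorem~\ref{th:decChLess} back into $G$ to contradict minimality --- is the natural and correct one. The genuine gap is at the very first step: you minimize $|X|$ over splits of \emph{all} $S_2$-cutsets, whereas the lemma is only true (and $G_X$ is only defined in the paper, and the lemma is only applied in Theorem~\ref{t:plummer}) for \emph{proper} $S_2$-cutsets. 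Under your reading the statement is false: take $u,v,w$ with $uv\in E$, join $u$ to $w$ by two internally disjoint paths of length $3$ and $v$ to $w$ by two more; the resulting graph is $2$-connected, chordless and not in $\mathcal C'_0$, and any internal node $p$ of those paths has degree $2$ with non-adjacent neighbours $a,b$, so $(\{a,b\},\{p\},V(G)\sm\{a,b,p\})$ is a split of an $S_2$-cutset with $|X|=1$, necessarily minimum, for which $G_X$ is a $4$-cycle and $a,b$ have degree $2$ in it. Correspondingly your degree argument breaks exactly where it must: in the case $N(a)\cap X=\{a_1\}$ with $a_1b\in E(G)$ you need a $ba_1$-path through $X$ of length at least $2$ (otherwise $a_1b$ is an edge, not a chord, of the cycle you build), and when $X=\{a_1\}$ no such path exists. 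The proof of Theorem~\ref{t:chord1} extracts the required second node of $X$ precisely from properness of the cutset, which you have discarded.

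Two further points. First, once ``proper'' is restored in the minimization, every smaller split you manufacture to contradict minimality --- in both cases of the degree argument and in both cases of the final contradiction --- must itself be checked to be a split of a \emph{proper} $S_2$-cutset of $G$; these verifications are routine but not vacuous, and your write-up never performs them (understandably, since under your reading they are unnecessary). Second, the case you single out as ``the delicate point'' and leave unfinished ($ac,bc\in E(G)$) is in fact the easiest part and requires no joint analysis: if $ac\in E(G)$ and $\{a,c\}$ is a cutset of the $2$-connected graph $G$, then $a$ and $c$ each have a neighbour in every component of $G\sm\{a,c\}$, so there are two internally disjoint $ac$-paths of length at least $2$, whose union is a cycle with chord $ac$ --- that is, a $2$-connected chordless graph has no $K_2$-cutset whatsoever, so even one of $ac$, $bc$ being an edge is immediately contradictory. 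With the reading corrected, the properness checks supplied, and this last case closed as above, your argument does go through.
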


\begin{theorem}[Plummer \cite{plummer:68}]
  \label{t:plummer}
  Let $G$ be  a 2-connected graph.  Then $G$ is minimally 2-connected
  if and only if either
  \begin{enumerate}
    \item\label{i1} $G$ is a cycle; or 
    \item\label{i2} if $S$ denotes the set of nodes of degree 2 in
      $G$, then there are at least two components in $G\sm S$, each
      component of $G\sm S$ is a tree and if $C$ is any cycle in $G$
      and $T$ is any component of $G\sm S$, then $(V(C) \cap V(T),
      E(C) \cap E(T))$ is empty or connected.
  \end{enumerate}
\end{theorem}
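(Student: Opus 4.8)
The plan is to reduce the statement to chordless graphs and then treat the two directions. By Theorem~\ref{th:2cCM} a $2$-connected graph is minimally $2$-connected if and only if it is chordless, so it suffices to show that a $2$-connected graph $G$ is chordless if and only if it is a cycle (alternative~\ref{i1}) or it satisfies the three conditions of alternative~\ref{i2}. A cycle is chordless, so the whole content lies in proving that alternative~\ref{i2} forces chordlessness, and conversely that a $2$-connected chordless graph that is not a cycle satisfies~\ref{i2}.

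For the implication ``\ref{i2} $\Rightarrow$ chordless'' I would argue by contradiction, and this direction is essentially immediate. Suppose a cycle $C$ has a chord $e=uv$. Each endnode of $e$ lies on $C$ and is incident to $e\notin E(C)$, hence has degree at least~$3$; so $u,v\in V(G)\sm S$, and since $uv$ is an edge they lie in a common component $T$ of $G\sm S$, which by~\ref{i2} is a tree. Now $uv\in E(T)\sm E(C)$, whereas any $uv$-path inside the graph $(V(C)\cap V(T),\,E(C)\cap E(T))$ would be a $uv$-path of $T$ avoiding the edge $uv$; the tree $T$ has no such path. Hence $u$ and $v$ lie in distinct components of $(V(C)\cap V(T),\,E(C)\cap E(T))$, so this graph is nonempty and disconnected, contradicting~\ref{i2}.

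For the converse let $G$ be $2$-connected, chordless and not a cycle, and let $S$ be its set of degree-$2$ nodes; the three assertions of~\ref{i2} would be established in turn. The central tool is the reformulation of chordlessness used in the proof of Theorem~\ref{th:2cCM}: for every edge $uv$ of $G$ the graph $G\sm uv$ has a cutnode separating $u$ and $v$, equivalently no cycle through $u$ and $v$ avoids the edge $uv$. To see that each component of $G\sm S$ is acyclic, suppose it contained a cycle $C'$ on nodes of degree at least~$3$; each such node has a neighbor off $C'$, so $C'$ has ears, and combining an ear with an arc of $C'$ produces, for a suitably chosen edge $v_jv_{j+1}$ of $C'$, a cycle through $v_jv_{j+1}$ that omits the cutnode separating $v_j$ from $v_{j+1}$ in $G\sm v_jv_{j+1}$ --- a contradiction. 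The remaining two conditions I would treat by the same chord-forcing mechanism: if all nodes of degree at least~$3$ formed a single component, two of the degree-$2$ paths forced by $2$-connectivity would, together with a tree edge, exhibit a cycle having that edge as a chord; and if some cycle $C$ met a tree component $T$ in two consecutive disjoint subpaths, the unique path of $T$ joining their facing endnodes would (in the shortest case, directly; in general after a minimality reduction) yield a chord of $C$.

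The main obstacle is this converse, and specifically making the chord-forcing uniform: turning ``$G$ is chordless'' into the exact acyclicity and connected-intersection statements requires choosing the ears, arcs, and tree paths carefully (via Theorem~\ref{thm:whitney} and a minimality argument) so that the produced cycle provably avoids the relevant cutnode. A smoother alternative, which I would fall back on, is induction on $|V(G)|$ driven by the decomposition Theorem~\ref{th:decChLess}: the base case is a graph in $\mathcal C'_0$, where the nodes of degree at least~$3$ are independent, so every component of $G\sm S$ is a single node and~\ref{i2} holds trivially; for the inductive step one peels off the extremal block $G_X\in\mathcal C'_0$ supplied by Lemma~\ref{l:extremal} across a proper $S_2$-cutset, applies induction to the other side, and reassembles, exactly in the spirit of the proof of Theorem~\ref{t:chord1}. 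Here the delicate point is checking that the connected-intersection condition of~\ref{i2} is preserved under this gluing.
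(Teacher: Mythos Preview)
Your proposal is correct and, in its ``fallback'' form, is essentially the paper's proof: the paper proves the converse direction exactly as you do (a chord $uv$ forces $u,v$ into the same tree $T$, and then $(V(C)\cap V(T),E(C)\cap E(T))$ contains $u,v$ but not the tree edge $uv$, hence is disconnected), and for the forward direction it goes straight to the induction driven by Theorem~\ref{th:decChLess} and Lemma~\ref{l:extremal}, without attempting your direct chord-forcing sketch. The delicate point you flag---preserving the connected-intersection condition across the $S_2$-gluing---is precisely where the paper spends its effort, splitting into the three cases $V(C)\subseteq X\cup\{a,b\}$, $V(C)\subseteq Y\cup\{a,b\}$, and $C$ crossing the cutset.
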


\begin{proof}
  Suppose first that $G$ is minimally 2-connected (or equivalently
  chordless).  If $G$ is in ${\cal C}'_0$ then $G\sm S$ contains only
  isolated nodes.  Hence, either $G\sm S$ is empty, in which case
  all nodes of $G$ are of degree 2, meaning that $G$ is a cycle; or
  $G\sm S$ is not empty, in which case $G$ contains at least two
  nodes of degree at least 3, and the second outcome holds.

  So, by Theorem~\ref{th:decChLess}, we may assume that $G$ admits a
  proper $S_2$-cutset $\{a, b\}$.  By Lemma~\ref{l:extremal}, we consider
   $G_X$ and $G_Y$ so that $G_X$ is in ${\cal C}'_0$ and $a, b$ have
  degree 3 in $G_X$.  Note that from the definition of a proper
  $S_2$-cutset, none of $G_X, G_Y$ is a cycle.

  Inductively, let $S_Y$ be the set of nodes of degree 2 in $G_Y$ and
  let $T_1, \dots, T_k$ be the components of $G_Y\sm S_Y$.  If $a$ or
  $b$ has degree~2 in $G_Y$, then, it has neighbors in at most one of
  the $T_i$'s (and in fact has a unique neighbor in it).  So, if for
  some $i\in \{1, \dots, k\}$, $T_i$ does not contain $a$ (resp.\ $b$,
  resp.\ $a, b$), and if $T_i$ is linked by some edge to $a$ (resp.\
  $b$, resp.\ $a$ and $b$), then we define the tree $T'_i$ to be the
  tree obtained by adding the pendent node $a$ (resp.\ $b$, resp.\
  both $a$ and $b$) to $T_i$.  For all $j=1, \dots, k$ such that
  $T'_j$ is not defined above, we put $T'_j=T_j$.  Now, if we remove
  the nodes of degree 2 of $G$, $T'_1$, \dots, $T'_k$ are connected
  components (here we use the fact that since $G_X$ is in ${\cal
    C}'_0$, all neighbor of $a$ or $b$ in $X$ have degree 2).  The
  other components are the nodes of degree at least~3 from $X$.  They
  are all trees because $G_X$ is in ${\cal C}'_0$.

  It remains to prove that if $C$ is any cycle in $G$ and $T$ is any
  component of $G\sm S$, then $(V(C) \cap V(T), E(C) \cap E(T))$ is
  empty or connected.  Let $C$ be a cycle of~$G$.  There are three
  cases.  Either $V(C) \subseteq X\cup \{a, b\}$, or $V(C) \subseteq
  Y\cup \{a, b\}$, or $C$ is formed of a path $P_X$ from $a$ to $b$
  with interior in $X$ and a path $P_Y$ from $a$ to $b$ with interior
  in $Y$.  In the first case, the trees intersected by $C$ are all
  formed of one node, so~\ref{i2} holds.  In the second case, $C$ is
  also a cycle of $G_Y$.  Let $T$ be a tree of $G\sm S$ such that
  $V(T) \subset Y \cup \{a, b\}$ (all the other trees of $G\sm S$ are
  on 1 node).  Note that $a\in V(C) \cap V(T)$ implies that $a$ has
  degree at least 3 in $G_Y$ and so $T$ is also a tree of $G_Y \sm
  S_Y$.  Hence, $(V(C) \cap V(T), E(C) \cap E(T))$ is connected by the
  induction hypothesys applied to $G_Y$.  In the third case, we
  consider the cycle $C_Y$ formed by $P_Y$ and the marker node of
  $G_Y$.  We suppose that $T$ has more than one node (otherwise the
  proof is easy), so $V(T) \subseteq Y \cup \{a, b\}$.  Note that if
  $T$ goes through $a$, then it must go through some neighbor of $a$
  in $Y$.  This means that if $a$ has degree 2 in $G_Y$ and $a \in
  V(C) \cap V(T)$, then the neighbor $a'$ of $a$ in $G_Y$ has degree
  at least~3 and is therefore in a tree of $G_Y \sm S_Y$, so $a' \in
  V(C) \cap V(T)$.  The same remark holds for $b$.  Hence, $(V(C) \cap
  V(T), E(C) \cap E(T))$ is connected by the induction hypothesys
  applied to $G_Y$.

  Suppose conversly that one of \ref{i1}, \ref{i2} is satisfied by
  some 2-connected graph $G$ (here we reproduce the proof given by
  Plummer).  If $G$ is a cycle, then it is obviously minimally
  2-connected.  Otherwise, let $e=uv$ be an edge of $G$.  If it enough
  to prove that $G\sm e$ is not 2-connected.  If $u$ or $v$ has
  degree~2 in $G$ this holds obviously.  Otherwise, $u$ and $v$ are in
  the same component $T$ of $G\sm S$.  If $G\sm e$ is 2-connected,
  then some cycle $C$ of $G\sm e$ goes through $u$ and $v$, and $(V(C)
  \cap V(T), E(C) \cap E(T))$ is not connected nor empty because it
  contains $u$ and  $v$ but not $e=uv$ (and removing any edge from a
  tree disconnects it), a contradiction to~\ref{i2}.
\end{proof}

Note that we do not use the existence of nodes of degree 2 to prove
the theorem above. Hence, a new proof of their existence can be given:
if $G$ is 2-connected, then by Theorem~\ref{t:plummer}, the nodes
of degree 2 of $G$ form a cutset of $G$.  Hence, there must be at
least two of them; otherwise, the existence of two nodes of degree at
most~2 follows easily by induction.

\label{subS:unichord}We close this subsection by observing that there is
another well studied hereditary class that properly contains the class
$\mathcal C_1'$, namely the class of graphs that do not contain a
cycle with a unique chord as an induced subgraph.  In
\cite{nicolas.kristina:one}, a precise structural description of this
class is given and used to obtain efficient recognition and coloring
algorithms.  Interestingly, it was proved by McKee~\cite{mcKee:indep}
that these graphs can be defined by constraints on connectivity: the
graphs with no cycles with a unique chord are exactly the graphs such
that all minimal separators are independent sets (where a
\emph{separator} in a graph $G$ is a set $S$ of nodes such that $G\sm S$
has more connected components than $G$).

\subsection{Critically 2-connected graphs}\label{secc2c}

In this subsection we consider the class of critically 2-connected
graphs, that were studied by Nebesk\'y~\cite{nebesky:c2c}, and
investigate whether there exists an analogous sequence of theorems as
in the previous subsection, starting with critically $2$-connected
graphs instead of minimally 2-connected graphs.  An analogue of
Lemma~\ref{th:clM2c} exists with ``critically'' instead of
``minimally'' and ``propeller'' instead of ``cycle with a chord''.

\begin{lemma} \label{th:c2c}
  A graph $G$ does not contain a propeller  if and only if for
  every subgraph $H$ of $G$, either $H$ has connectivity at most~1 or
  it is critically 2-connected.
\end{lemma}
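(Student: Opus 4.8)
The plan is to prove both implications by contraposition, following closely the pattern of the proof of Lemma~\ref{th:clM2c}: the ``cycle with a chord'' there is replaced by a propeller, and ``minimally 2-connected'' by ``critically 2-connected''. The single observation that drives everything is that, since we work with the \emph{subgraph} containment relation, a propeller is present in a graph as soon as there is a cycle $C$ (\emph{not necessarily chordless in the ambient graph}) together with a node $x \notin V(C)$ having at least two neighbors on $C$: one simply keeps the edges of $C$ and two spokes from $x$, and this subgraph is isomorphic to a propeller. Unlike in the induced-subgraph setting, I therefore never have to worry about chords of the rim, which is what makes the argument short.

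For the direction ``if $G$ contains a propeller, then the right-hand side fails'', I would take the propeller $(C,x)$ itself as the subgraph $H$. It is 2-connected: the rim is a cycle and $x$ is joined to at least two of its nodes, so by Theorem~\ref{thm:whitney} any two nodes are joined by two internally disjoint paths. Hence its connectivity is not at most~$1$. On the other hand $H \setminus x$ is exactly the rim $C$, a cycle on at least three nodes, which is 2-connected; so deleting $x$ does \emph{not} drop the connectivity to~$1$, and $H$ is not critically 2-connected. This one subgraph witnesses the failure of the right-hand side.

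For the converse, I would assume the right-hand side fails, so that some subgraph $H$ of $G$ is 2-connected but not critically 2-connected, and exhibit a propeller in $H$ (hence in $G$). Not being critically 2-connected yields a node $v$ such that $H \setminus v$ is still 2-connected. Since $H$ is 2-connected on at least three nodes, $v$ has at least two neighbors $a,b$, and both lie in $H \setminus v$. Applying Theorem~\ref{thm:whitney} inside the 2-connected graph $H \setminus v$, the distinct nodes $a$ and $b$ are joined by two internally disjoint paths, whose union is a cycle $C$ with $a,b \in V(C)$ and $v \notin V(C)$. Then $v$ has two neighbors on $C$, so by the observation above the edges of $C$ together with the spokes $va, vb$ form a propeller.

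The only points needing a little care are the bookkeeping about connectivity exactly~$1$ versus at least~$2$ (so that ``not critically 2-connected'' is correctly read, within a 2-connected graph, as the existence of a node $v$ with $H \setminus v$ 2-connected, which incidentally forces $|V(H)| \geq 4$) and the check that $a \neq b$ so that the common cycle is genuine; neither is a real obstacle. The crux, and essentially the whole content, is the reduction to a plain cycle carrying an external node of degree at least~$2$ onto it, after which Menger's theorem (Theorem~\ref{thm:whitney}) does all the work, exactly as in Lemma~\ref{th:clM2c}.
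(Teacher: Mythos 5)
Your proof is correct and takes essentially the same route as the paper's: the propeller itself witnesses the failure of criticality in one direction, and for the converse a node $v$ with $H\setminus v$ still 2-connected yields, via Theorem~\ref{thm:whitney}, a cycle in $H\setminus v$ through two neighbors of $v$ and hence a propeller. Your explicit remark that chords of the rim are harmless under the subgraph relation is implicit in the paper's argument as well.
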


\begin{proof}
  A propeller has connectivity 2 and is not critically 2-connected
  since removing the center yields a 2-connected graph.  This proves
  the ``if'' part of the theorem.  To prove the ``only if'' part,
  consider a graph $G$ that contains no propeller, and suppose for a
  contradiction that some subgraph $H$ of $G$ does not satisfy the
  requirement on connectivity that is to be proved.  Hence $H$ is
  2-connected and not critically 2-connected.  So by deleting a node
  $v$, a 2-connected graph $H'$ is obtained. Note that $|V(H)|\geq
  4$. Since $v$ has at least two neighbors $u$ and $w$ in $H'$
  (because of the connectivity of $H$), by Theorem~\ref{thm:whitney},
  $H'$ contains a cycle $C$ through $u$ and $w$, and $(C,u)$ is a
  propeller of $H$, a contradiction.
\end{proof}

An anologue of Theorem~\ref{th:2cCM} seems hopeless. A critically
2-connected graph can contain anything as a subgraph: the class of the
subgraphs of critically 2-connected graphs is the class of all graphs.
To see this, consider a graph $G$ on $\{v_1, \dots, v_n\}$.  If $G$ is
not connected, then add a node $v_{n+1}$ adjacent to all nodes.  For
every node $v_i$, add a node $a_i$ adjacent to $v_i$ and a node $b_i$
adjacent to $a_i$.  Add a node $c$ adjacent to all $b_i$'s.  It is
easy to see that the obtained graph is critically 2-connected, and
contains $G$ as a subgraph.  So there cannot be a version of
Theorem~\ref{th:2cCM} with ``critically'' instead of ``minimally'': a
critically 2-connected graph may contain a propeller, since it may
contain anything.  Also, any property of graphs closed under taking
subgraphs, such as being $k$-colorable, is false for critically
2-connected graphs, unless it holds for all graphs.  However, there is
a sequence of theorems, proven here, that mimics the sequence obtained
by thinking of minimally 2-connected graphs.  Note that containing a
cycle with a chord as a subgraph is equivalent to containing a cycle
with a chord as an induced subgraph; while containing a propeller as a
subgraph is not equivalent to containing a propeller as an induced
subgraph.  So, there are two ways to find an analogue of chordless
graphs, and in this paper we consider both.

Let $\mathcal C_0$ be the class of graphs with no node having at least
two neighbors of degree at least three.  Let $\mathcal C_1$ be the
class of graphs that do not contain a propeller.  Let $\mathcal C_2$
be the class of graphs that do not contain a propeller as an induced
subgraph.  It is is easy to check that $\mathcal C_0\subsetneq
\mathcal C_1 \subsetneq \mathcal C_2$.

Before studying decomposition theorems for $\mathcal C_1$ and
$\mathcal C_2$, let us see that an analogue of Theorem~\ref{t:chord1}
can be proved directly for propeller-free graphs (and implies that
they are 3-colorable).  Nebesk\'y~\cite{nebesky:c2c} proved that every
critically 2-connected graph contains a node of degree~2, but
critically 2-connected graphs are not 3-colorable in general, since
they may contain any subgraph of arbitrarily large chromatic number.
Note that studying longest paths to obtain nodes of small degree in
graphs where ``propeller-like'' structures are excluded can give much
stronger results, see~\cite{turner:05}.

\begin{theorem}\label{t:vc1}
  If $G\in \mathcal C_2$, then $G$ has a node of degree at most 2 and
  $G$ is 3-colorable.
\end{theorem}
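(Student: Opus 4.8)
The plan is to prove the two assertions in order, deriving $3$-colourability from the existence of a low-degree node. Since the class $\mathcal C_2$ is hereditary (no induced subgraph of an induced-propeller-free graph can contain an induced propeller), it suffices to show that every $G \in \mathcal C_2$ has a node of degree at most~$2$; then $3$-colourability follows by a routine induction on $|V(G)|$: pick a node $v$ with $\deg(v) \le 2$, colour $G \sm v \in \mathcal C_2$ with three colours by the induction hypothesis, and extend the colouring to $v$, which has at most two forbidden colours. So the whole content is the degree statement, and the plan is to prove it by contradiction.

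Assuming every node of $G$ has degree at least~$3$, I would exhibit an induced propeller. First I would fix a longest \emph{chordless} path $Q = q_1 \ldots q_k$ of $G$ and examine the endnode $q_1$. On $Q$ the only neighbour of $q_1$ is $q_2$, so the set $Y = N(q_1) \sm \{q_2\}$ of neighbours of $q_1$ outside $Q$ has size at least~$2$. The maximality of $Q$ forces each $y \in Y$ to have a neighbour among $q_2, \ldots, q_k$: otherwise $y q_1 q_2 \ldots q_k$ would be a longer chordless path. Hence for each $y \in Y$ the \emph{entry index} $m_y = \min\{\, i \ge 2 : y q_i \in E(G)\,\}$ is well defined.

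The key step is the extremal choice: take $y^\ast \in Y$ maximizing $m_y$, and set $m = m_{y^\ast}$. By minimality of $m$ for $y^\ast$, the node $y^\ast$ is adjacent to $q_1$ and $q_m$ but to none of $q_2, \ldots, q_{m-1}$; since the subpath $q_1 \ldots q_m$ of $Q$ is chordless, $D = q_1 q_2 \ldots q_m y^\ast q_1$ is a chordless cycle. Now take any other $y' \in Y \sm \{y^\ast\}$ (which exists as $|Y| \ge 2$). Then $y'$ lies outside $Q$ and is distinct from $y^\ast$, so $y' \notin V(D)$, while $q_1 \in V(D)$ is a neighbour of $y'$. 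By the choice of $y^\ast$ we have $m_{y'} \le m$, and $m_{y'} \ge 2$, so $q_{m_{y'}}$ is a second neighbour of $y'$ lying on $D$ and distinct from $q_1$. Thus $G[V(D) \cup \{y'\}]$ is an induced propeller with rim $D$ and centre $y'$, contradicting $G \in \mathcal C_2$.

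I expect the delicate points to be bookkeeping rather than a genuine obstacle: one must check that $D$ is really chordless (which is exactly why $y^\ast$ is taken with its \emph{smallest} neighbour index realizing a chord-free subpath, and why the chordlessness of $Q$ is used), and that the second centre $y'$ is genuinely external to $D$ with two \emph{distinct} neighbours on it (guaranteed by $m_{y'} \ge 2$, so $q_{m_{y'}} \neq q_1$, together with $m_{y'} \le m$, so $q_{m_{y'}} \in V(D)$). The extremal choice of $y^\ast$ maximizing $m_y$ is precisely what forces every competing neighbour $y'$ to ``enter'' $Q$ early enough to land on $D$; without it the second neighbour on the cycle could not be guaranteed. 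The argument should be uniform across the degenerate cases ($m=2$ gives a triangle rim, and $k=2$ forces all entry indices equal to~$2$), so no separate case analysis is needed.
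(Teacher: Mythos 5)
Your proof is correct and takes essentially the same route as the paper: both fix a longest chordless path, use its maximality to force the extra neighbours of an endnode to attach to the path, and compare their first attachment points so that the neighbour attaching farther away closes a chordless rim on which the other neighbour has two neighbours, yielding an induced propeller. The deduction of $3$-colourability by induction on a node of degree at most $2$ is likewise the paper's argument.
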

\begin{proof}
  Suppose that for every $v\in V(G)$, $d(v) \ge 3$.  Let $P$ be a
  longest chordless path in $G$, and $x$ and $y$ the endnodes of $P$.
  As $d(x) \geq 3$, $x$ has at least two neighbors $u$ and $v$ not in
  $P$ and $u$ (resp.\ $v$) has a neighbor in $P\setminus x$, since
  otherwise $V(P) \cup \{ u\}$ (resp.\ $V(P) \cup \{ v\}$) would
  induce a longer path in $G$. We choose $u_1$ and $v_1$, neighbors of
  respectively $u$ and $v$ in $P\setminus x$ that are closest to $x$
  on $P$.  W.l.o.g.\ let us assume that $x, u_1, v_1$ appear in this
  order on $P$. Then $( v P_{xv_1} v , u)$ is an induced propeller of
  $G$, a contradiction.  This proves that $G$ has a node of degree
  at most~2.  It follows by an easy induction that every graph from
  $\mathcal C_2$ is 3-colorable.
\end{proof}

\section{Decomposition theorems}\label{sec:dt}

In this section we present decomposition theorems for graphs that
do not contain propellers and graphs that do not contain propellers
as induced subgraphs.

A $K_2$-cutset $S$ of a graph $G$ is \emph{proper} if $G \sm S$
contains no node adjacent to all nodes of $S$.

\begin{lemma}\label{l:pK2cutset}
  If $G$ is a 2-connected graph from $\mathcal C_2$, then every
  $K_2$-cutset of $G$ is proper.
\end{lemma}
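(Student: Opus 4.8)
The plan is to argue by contradiction: if some $K_2$-cutset fails to be proper, I will exhibit an induced propeller, contradicting $G\in\mathcal C_2$. So suppose $\{a,b\}$ is a $K_2$-cutset of $G$ (hence $ab\in E(G)$) and that some node $c\in V(G)\setminus\{a,b\}$ is adjacent to both $a$ and $b$. Since $\{a,b\}$ is a cutset, $G\setminus\{a,b\}$ has at least two connected components; let $D$ be the one containing $c$, and let $D'$ be another component. The idea is to route a chordless cycle through $a$, $b$ and the component $D'$, so that $c$ lies off this cycle but remains adjacent to two of its nodes.

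First I would use $2$-connectivity to show that each of $a$ and $b$ has a neighbor in $D'$. Indeed, if $a$ had no neighbor in $D'$, then the nodes of $D'$ would be joined to the rest of $G$ only through $b$, so $b$ would be a cutnode of $G$ (equivalently, this contradicts Theorem~\ref{thm:whitney}). Hence both $a$ and $b$ have neighbors in $D'$, and since $D'$ is connected there is an $ab$-path with nonempty interior contained in $D'$. Taking such a path $Q$ of minimum length, I would check that $Q$ is chordless: any chord of $Q$ would yield a strictly shorter $ab$-path whose interior still lies in $D'$, contradicting minimality. I then close $Q$ into a cycle $C'$ using the edge $ab$ and verify that $C'$ is chordless. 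Its interior nodes all lie in $D'$, so any chord of $C'$ other than $ab$ itself would be a chord of $Q$ (between two interior nodes, or between an endnode and an interior node), of which there are none; and $ab$ is an edge of $C'$, not a chord.

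Thus $C'$ is a chordless cycle through $a$ and $b$ whose remaining nodes lie in $D'$, so $c\notin V(C')$ because $c\in D$ and $D\cap D'=\emptyset$. Since $c$ is adjacent to both $a$ and $b$, which are both on $C'$, the induced subgraph $G[V(C')\cup\{c\}]$ is a propeller with rim $C'$ and center $c$; here I would stress that the definition of a propeller only requires the center to have \emph{at least} two neighbors on the chordless rim, so any additional adjacencies of $c$ to $C'$ cause no harm. This contradicts $G\in\mathcal C_2$. The main point to get right is precisely the verification that $C'$ is chordless and that it avoids $c$: the separation property of the cutset $\{a,b\}$ is exactly what lets the rim be routed through $D'$, away from $c$'s component, and everything else is a direct application of $2$-connectivity.
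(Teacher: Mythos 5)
Your proof is correct and follows essentially the same route as the paper's: both use $2$-connectivity to find a chordless cycle through the edge $ab$ whose remaining nodes lie in a component not containing the common neighbor, which then serves as the center of an induced propeller. Your version merely spells out the minimality argument for chordlessness that the paper leaves implicit.
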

\begin{proof}
  Let $(\{ a,b \}, A,B)$ be a split of a $K_2$-cutset that is not
  proper.  W.l.o.g.\ $A$ contains a node $x$ that is adjacent to both
  $a$ and $b$.  Since $G$ is 2-connected, both $a$ and $b$ have a
  neighbor in the same connected component of $G[B]$, and hence $G[\{
  a,b\} \cup B]$ contains a chordless cycle $C$ passing through edge
  $ab$.  But then $(C,x)$ is a propeller that is contained in $G$ as
  an induced subgraph, a contradiction.
\end{proof}

\begin{theorem}\label{decompositionC_1}
  A graph in $\mathcal C_1$ is either in $\mathcal C_0$ or it has a
  0-cutset, a 1-cutset, a proper $K_2$-cutset or a proper $S_2$-cutset.
\end{theorem}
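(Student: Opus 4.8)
The plan is to assume $G\in\mathcal C_1\sm\mathcal C_0$ has neither a $0$-cutset nor a $1$-cutset, so that $G$ is $2$-connected, and to produce one of the two proper cutsets. Since $G$ is $2$-connected and propeller-free, Lemma~\ref{th:c2c} applied to $G$ itself shows that $G$ is \emph{critically} $2$-connected. Also, if $G$ has \emph{any} $K_2$-cutset we are done at once: as $G\in\mathcal C_1\subseteq\mathcal C_2$, Lemma~\ref{l:pK2cutset} says this cutset is proper. So the genuine task is to show that, \emph{assuming $G$ has no $K_2$-cutset}, $G$ has a proper $S_2$-cutset; note that under this assumption every $2$-cutset is automatically an $S_2$-cutset.

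First I would manufacture a $2$-cutset from the failure of $G$ to lie in $\mathcal C_0$. As $G\notin\mathcal C_0$, some node $x$ has two neighbors $u,v$ of degree at least $3$. Deleting $x$ and using critical $2$-connectivity, $G\sm x$ is connected but not $2$-connected, so it has a cutnode $w$; thus $\{x,w\}$ is a $2$-cutset, necessarily an $S_2$-cutset since $xw\notin E(G)$ (otherwise it would be a $K_2$-cutset). Let $C_1,\dots,C_k$ ($k\ge 2$) be the components of $G\sm\{x,w\}$; since $G\sm x$ and $G\sm w$ are connected, both $x$ and $w$ have a neighbor in every $C_i$. The elementary observation driving everything is that a side $G[Z\cup\{x,w\}]$ of a split fails to be a chordless $xw$-path whenever $Z$ contains a node of degree at least $3$ in $G$ (all of whose neighbors are then trapped in $Z\cup\{x,w\}$), or whenever $x$ has two neighbors inside $Z$. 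Using $u,v$: if they lie in different components, the split separating the component of $u$ from the rest has a degree-$3$ node on each side; and if they lie in the same component while $k\ge 3$, the split putting that component against the remaining ($\ge 2$) components has a degree-$3$ node on the first side and forces $x$ to have two neighbors on the second. In both situations $\{x,w\}$ is a proper $S_2$-cutset.

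This leaves the single stubborn configuration: $k=2$, both $u,v\in C_1$, and $B:=G[C_2\cup\{x,w\}]$ is a chordless $xw$-path. Here I would pass to the block $H:=G[C_1\cup\{x,w\}]$, recalling the standard fact that $H$ plus the marker edge $xw$ is $2$-connected. If $H$ has a cut vertex $c$, then since adding $xw$ restores $2$-connectivity, $c$ must separate $x$ from $w$; the component $A$ of $H\sm c$ containing $x$ then satisfies $A\sm\{x\}\subseteq C_1$ with no neighbor outside $\{x,c\}$, so $\{x,c\}$ is a new $S_2$-cutset whose small side $A\sm\{x\}$ is a proper subset of $C_1$ that (by the no-$K_2$-cutset assumption, which forces $c\notin\{u,v\}$) still contains $u$ and $v$. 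As this small side carries a degree-$3$ node it is not a chordless path, so $\{x,c\}$ is already proper unless its other side is again a chordless path, in which case one repeats the argument; choosing from the start the $S_2$-cutset whose small side is minimal replaces this descent by a clean extremal choice. The alternative is that $H$ is $2$-connected, and then I would extract a \emph{propeller}, contradicting $G\in\mathcal C_1$: the degree-$3$ requirement on $u,v$ forces enough edges inside the $2$-connected $H$ that one finds either a chordless cycle avoiding $x$ through two neighbors of $x$ (propeller centred at $x$), or a chordless cycle through two neighbors of $u$ (resp.\ $v$) avoiding that node.

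The main obstacle is exactly this stubborn case. The split bookkeeping and the $K_2$-cutset are routine, the latter being handed to us by Lemma~\ref{l:pK2cutset}; what needs real care is proving that, when both degree-$3$ neighbors of $x$ lie on the same side and the opposite side is a chordless path, one can \emph{always} either shrink the cutset or exhibit a genuine propeller. The two delicate points are keeping every constructed cycle chordless — for which one takes shortest (hence induced) paths through $C_1$ and uses that $B$ contributes only degree-$2$ internal nodes, so it creates no chords — and confirming that the extremal/descent argument on the size of the small side really terminates.
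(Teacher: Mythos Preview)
Your opening moves are fine: reducing to the $2$-connected case, invoking Lemma~\ref{l:pK2cutset} to dispose of $K_2$-cutsets, and using Lemma~\ref{th:c2c} to see that $G$ is critically $2$-connected. The genuine gap is in your handling of the stubborn case, specifically the subcase where $H=G[C_1\cup\{x,w\}]$ is $2$-connected. You assert that the degree conditions on $u,v$ ``force enough edges inside the $2$-connected $H$'' to produce a propeller, but this cannot work as stated: $H$ is a subgraph of $G\in\mathcal C_1$, so $H$ contains no propeller whatsoever, and the path $B$ on the other side has only degree-$2$ interior nodes, so it supplies no candidate centre either. Your descent in the other subcase is also not yet an argument: the minimality you appeal to is over a class of $S_2$-cutsets you never specify, and you have not verified that the iteration cannot simply cycle back into the unproved $2$-connected subcase.

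Both problems evaporate if you choose the cutnode more carefully at the outset, and this is exactly what the paper does. Rather than taking \emph{any} cutnode of $G\sm x$, ask whether $G\sm x$ has a cutnode $w'$ separating $u$ from~$v$. If no such $w'$ exists, then by Theorem~\ref{thm:whitney} there is a cycle of $G\sm x$ through $u$ and~$v$, and together with $x$ this is a propeller in $G$, a contradiction. (The paper first disposes of the possibility $uv\in E(G)$ by a short direct propeller construction, so that the Menger argument applies to a nonadjacent pair.) If such a $w'$ does exist, then in $G\sm\{x,w'\}$ the nodes $u$ and $v$ lie in distinct components; since each of $u,v$ has degree at least~$3$ and all of its neighbours lie on its own side together with $\{x,w'\}$, neither side can be a chordless $xw'$-path. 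Hence $\{x,w'\}$ is already a proper $S_2$-cutset (and if you do not assume away $K_2$-cutsets in advance, the case $xw'\in E(G)$ gives a proper $K_2$-cutset via Lemma~\ref{l:pK2cutset}). Your stubborn configuration simply never arises.
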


\begin{proof}
  Let $G$ be a 2-connected graph in $\mathcal C_1 \sm \mathcal C_0$.  
So $G$ contains
  a node $w$ that has two neighbors $u$ and $v$ that are both
  of degree at least 3. Suppose $uv\in E(G)$ and let $u'\notin
  \{v, w\}$ be a neighbor of $u$.  Since $u$ cannot be a cutnode, 
  there is a path $P$ from $u'$ to $\{v, w\}$ in $G\sm u$ and hence $G[V(P)
  \cup \{u, v, w\}]$ contains a propeller, a contradiction.  
So $uv \notin
  E(G)$.

  If no node of $G\sm w$ is a cutnode separating $u$ from $v$, then
  by Theorem \ref{thm:whitney}, there is a cycle of $G\sm w$ going through
  $u$ and $v$, so that in $G$, $w$ is the center of a propeller, a
  contradiction.  Hence there is such a cutnode $w'$.  So, in
  $G\sm\{w, w'\}$, there are distinct components $C_u$ and $C_v$
  containing $u$ and $v$ respectively, and possibly other components
  whose union is denoted by $C$. 
But then $(\{w, w'\}, C\cup C_u, C_v)$ is a split of either an
$S_2$-cutset
  of $G$ (when $ww'\notin E(G)$), which is proper  
because of the degrees of $u$ and $v$, 
or a split of a $K_2$-cutset (when
  $ww'\in E(G)$), which is proper by Lemma \ref{l:pK2cutset}.  
\end{proof}

A 3-cutset $\{ u,v,w \}$
of a graph $G$ is an
\emph{$I$-cutset}  if the following hold.
\begin{itemize}
\item $G[\{ u,v,w \}]$ contains exactly one edge.
\item There is a partition $(\{u, v, w\}, K',
  K'')$ of $V(G)$ such that:
\begin{enumerate}
\item no edge of $G$ has an endnode in $K'$ and an endnode in $K''$;
\item\label{i:2} 
for some connected component $C'$ of $G[K']$,
$u,v$ and $w$ all have a neighbor in $C'$; and
\item\label{i:3} 
for some connected component $C''$ of $G[K'']$,
$u,v$ and $w$ all have a neighbor in $C''$.
\end{enumerate}
\end{itemize}

In these circumstances, we say that $(\{ u, v, w\}, K', K'')$ is a \emph{split}
of the $I$-cutset $\{ u,v,w\}$.

\begin{theorem}\label{decompositionC_2}
  If a graph $G$ is in $\mathcal C_2$, then either $G\in
  \mathcal C_1$ or $G$ has an $I$-cutset.
\end{theorem}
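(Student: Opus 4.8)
The plan is to assume $G\in\mathcal C_2\sm\mathcal C_1$ and to manufacture an $I$-cutset out of a smallest propeller subgraph. Since $G\notin\mathcal C_1$, some vertex $x$ has at least two neighbors on a cycle $D$ with $x\notin V(D)$. The basic reduction I would record first is that such a $D$ must have a chord: if $D$ were chordless, then in $G[V(D)\cup\{x\}]$ the only possible extra edges are chords of $D$ (none) and edges from $x$ to $V(D)$, so $G[V(D)\cup\{x\}]$ would itself be an induced propeller with rim $D$ and center $x$, contradicting $G\in\mathcal C_2$. I would then choose $x$ and $D$ so that $|V(D)|$ is minimum among all such configurations. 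A routine use of minimality (shortcutting $D$ with any offending edge and invoking Theorem~\ref{thm:whitney}) shows that $x$ has \emph{exactly} two neighbors $p,q$ on $D$, that no chord of $D$ is incident to $p$ or $q$, and that no chord has both ends on a single $pq$-arc; hence, writing $A_1,A_2$ for the two $pq$-arcs, every chord joins an internal vertex of $A_1$ to an internal vertex of $A_2$, both arcs have interior (so $p,q$ are nonadjacent), and there is at least one chord.

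The heart of the argument is then a structural analysis of the chords, carried out through the two holes $H_1=\{x\}\cup A_1$ and $H_2=\{x\}\cup A_2$, which are chordless by the previous paragraph. I would prove two claims. \emph{(a) The chords form a matching}: if an arc-vertex were an endpoint of two chords, it would have two neighbors on $H_1$ (or $H_2$) while lying off that hole, yielding an induced propeller. \emph{(b) No two chords cross}, i.e.\ the chords are laminar: if two chords crossed, then an endpoint of one chord would have two neighbors — namely the near endpoint of the other chord and its own chord-partner — on a hole obtained from $H_2$ (or $H_1$) by rerouting through the crossing chord and avoiding that endpoint, again producing an induced propeller. Establishing (b) rigorously, i.e.\ exhibiting the right propeller-free hole in every crossing configuration, is the step I expect to be the main obstacle, since it requires a small but genuine case analysis on the relative positions of the two chords' endpoints along $D$.

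With (a) and (b) in hand the decomposition drops out. Fix any chord $ab$ with $a\in A_1$, $b\in A_2$ internal. The set $\{x,a,b\}$ contains exactly one edge, since $ab$ is a chord while $xa$ and $xb$ are non-edges ($x$'s only neighbors on $D$ are $p,q$). Because the chords are non-crossing, every chord other than $ab$ has both endpoints on the same side of $ab$ along $D$, so inside $G[V(D)\cup\{x\}]$ the removal of $\{x,a,b\}$ separates the $p$-side $K'$ from the $q$-side $K''$ with no crossing edge. Moreover each of $x,a,b$ has a neighbor on both sides ($x$ sees $p\in K'$ and $q\in K''$; $a$ and $b$ each see their two arc-neighbors, one on each side), so the component of $G[K']$ containing $p$ and the component of $G[K'']$ containing $q$ each have $x$, $a$ and $b$ as neighbors, matching conditions~\ref{i:2} and~\ref{i:3} in the definition of an $I$-cutset.

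It remains to promote this local separation to a genuine split of all of $V(G)$. Here I would first observe that the propeller subgraph lies in a single $2$-connected block $B$ of $G$: blocks hanging off cutvertices of $B$ attach at a single vertex and can be folded into whichever of $K',K''$ (or the cutset) that vertex belongs to, creating no crossing edge and preserving the witnessing components. The delicate remaining point — the second place I expect resistance — is ruling out a path of $G\sm\{x,a,b\}$ inside $B$ that joins the $p$-side to the $q$-side; such a path, combined with the arcs of $D$, yields a cycle through $p$ and $q$ avoiding $x$, and I would argue via the minimality of $|V(D)|$ together with (a) and (b) that this either contradicts minimality or produces a chordless cycle on which $a$, $b$ or $x$ has two neighbors, i.e.\ an induced propeller. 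Once this is excluded, $(\{x,a,b\},K',K'')$ is a split of an $I$-cutset of $G$, completing the proof.
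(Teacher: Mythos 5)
Your outline coincides with the paper's own strategy: both start from an extremal propeller subgraph $(D,x)$ (you minimize $|V(D)|$, the paper minimizes the number of chords of the rim), both derive that $x$ has exactly two neighbours and that every chord joins the interiors of the two arcs, and both propose $\{x,a,b\}$ for a chord $ab$ as the $I$-cutset, with the two sides of $D\setminus\{a,b\}$ witnessing conditions~\ref{i:2} and~\ref{i:3}. That local part of your argument, including claim (a), is sound. But the two steps you yourself flag as obstacles are precisely where the paper does almost all of its work, and your sketches for them do not go through as written. For claim (b), the mechanism you describe (an endpoint of one chord seeing ``the near endpoint of the other chord and its own chord-partner'' on a rerouted hole) only applies when the crossing chords have consecutive endpoints on an arc; for crossing chords $a_ib_j$, $a_{i'}b_{j'}$ with $i'>i+1$ that vertex has a single neighbour on the rerouted cycle, and the rerouted cycle need not be chordless in any case. (A route that does work is the ``swap'' cycle using both crossing chords, which passes through $p$ and $q$ and is strictly shorter than $D$ unless the crossings are consecutive on both arcs --- but that argument, and the residual consecutive case, still has to be written.) The paper avoids proving global laminarity altogether: its Claim~3 only excludes chords crossing the one chosen chord, using the extremal count of chords.

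The more serious gap is the globalization step. Your assertion that a connecting path of $G\setminus\{x,a,b\}$ ``combined with the arcs of $D$, yields a cycle through $p$ and $q$ avoiding $x$'' fails for many positions of the path's endpoints (e.g.\ a path from $a_1$ to a vertex of the $q$-portion of the other arc closes up through an arc of $D$ containing $p$ or $q$ but never both), and even when such a cycle exists it has at least $|V(D)|$ vertices, so minimality of $|V(D)|$ yields nothing; everything then rests on the undeveloped ``or produces an induced propeller'' branch. Most importantly, you never rule out a single vertex $w\notin V(D)\cup\{x\}$ having one neighbour in each side --- the analogue of the paper's Claim~2 (every node off the rim other than $x$ has at most one neighbour on it), whose proof is itself a genuine propeller-hunting case analysis and which is indispensable for controlling the connecting path. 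Without it, the subsequent analysis (the paper's Claim~4 and its two cases, which turn the path plus pieces of $D$ and the chord into a forbidden propeller) cannot even be started. So the proposal identifies the right cutset, but the parts that make the theorem nontrivial are missing.
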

\begin{proof}
  Let $G$ be a graph in $\mathcal C_2\setminus \mathcal
  C_1$, and let $(C,x)$ be a propeller of $G$ whose rim has the fewest
  number of chords. Note that $C$ must have at least one chord.

\vspace{2ex}

\noindent{\bf Claim 1:} {\em Let $y'y''$ be a chord of $C$, and $P_1$ and 
$P_2$ the two $y'y''$-subpaths of $C$. If a node $u\in V(G)\setminus V(C)$ 
has more than one neighbor on $C$, then it has exactly two neighbors on $C$, 
one in the interior of $P_1$, and the other in the interior of $P_2$.}

\vspace{2ex}

\noindent{\em Proof of Claim 1:} Let $u\in V(G)\setminus V(C)$ and 
suppose that $u$ has at least two neighbors on $C$. If $u$ has at least 
two neighbors on $P_i$, for some $i\in\{1,2\}$, then $G[V(P_i)\cup\{u\}]$ 
contains a propeller that contradicts our choice of $(C,x)$. 
This completes the proof of Claim 1.

\vspace{2ex}

By Claim 1, $x$ has exactly two neighbors $x'$ and $x''$ on $C$.

\vspace{2ex}

\noindent{\bf Claim 2:} {\em If $u\in V(G)\setminus(V(C)\cup\{x\})$ 
then $u$ has at most one neighbor on $C$.}

\vspace{2ex}

\noindent{\em Proof of Claim 2:} Assume not. Then by Claim 1, $u$ has
exactly two neighbors $u'$ and $u''$ on $C$. Let $P_1$ and $P_2$ be
the two $u'u''$-subpaths of $C$. Note that since $C$ has a chord, by
Claim 1 that chord has one endnode in the interior of $P_1$ and the
other in the interior of $P_2$. In particular, neither $P_1$ nor $P_2$
is an edge. If $\{x',x''\}\subset V(P_i)$, for some $i\in\{1,2\}$,
then the graph induced by $V(P_i)\cup\{u,x\}$ contains a propeller
with center $x$ that contradicts our choice of $(C,x)$. So
w.l.o.g.\ $x'$ is contained in the interior of $P_1$ and $x''$ in the
interior of $P_2$. Let $y'y''$ be a chord of $C$. Then by Claim 1 we
may assume that nodes $u'$, $x'$, $y'$, $u''$, $x''$, $y''$ 
are all distinct and appear in this
order when traversing $C$ clockwise.
If $u'y''$ is an edge then the graph induced by
$V(P_1)\cup\{u,y''\}$ contains a propeller with center $y''$
that contradicts our choice of $(C,x)$. So $u'y''$ is not an edge, and
by symmetry neither is $u''y'$. Let $P_1'$ (respectively $P_2'$) be
the $u'y'$-subpath (respectively $u''y''$-subpath) of $C$ that
contains $x'$ (respectively $x''$). Then the graph induced by
$V(P_1')\cup V(P_2')\cup\{u,x\}$  contains a
propeller with center $x$ that contradicts our choice of
$(C,x)$. This completes the proof of Claim 2.

\vspace{2ex}

Let $y'y''$ be a chord of $C$. 
By Claim 1, nodes $x'$, $y'$, $x''$, $y''$ are all
distinct and w.l.o.g.\ appear in this order when traversing $C$
clockwise. Let $P'$ (respectively $P''$) be the $y'y''$-subpath of $C$
that contains $x'$ (respectively $x''$).

\vspace{2ex}

\noindent{\bf Claim 3:} {\em $C$ cannot have a chord $z'z''$ such that
$z' \in V(P') \setminus \{ y',y''\}$ and 
$z'' \in V(P'') \setminus \{ y',y''\}$.}

\vspace{2ex}

\noindent{\em Proof of Claim 3:}
Assume it does. W.l.o.g.\ $z'$ is on the $x'y'$-subpath of $P'$.
Then, by Claim 1, $z''$ is on the 
$x''y''$-subpath of $P''$. Let $C'$ be the cycle obtained by following
$P'$ from $z'$ to $y''$, going along edge $y''y'$, following $P''$ from
$y'$ to $z''$, and going along edge $z''z'$. Since $C'$ cannot have fewer
chords than $C$ (by the choice of $(C,x)$), it follows that both
$z'y'$ and $z''y''$ are edges. But then $G[V(P') \cup \{ z''\}]$
contains a propeller with center $y'$, that contradicts our
choice of $(C,x)$.
This completes the proof of Claim 3.

\vspace{2ex}
  
Assume that $S$ is not a cutset of $G$ that separates $x'$ from
$x''$. Then there exists a shortest path $P = p_1p_2 \dots p_k$ in
$G\sm S$ such that $p_1$ has a neighbor $u \in P'\sm \{y',y''\}$ and
$p_k$ has a neighbor $v \in P''\sm \{y',y''\}$. Finding a
contradiction will complete the proof, since Conditions~\ref{i:2}
and~\ref{i:3} in the definition of an $I$-cutset are satisfied because
of $C$ and $x$.  By Claim~3, $P$ has length at least~2.  By Claim~2
and the definition of $P$, $P$ is a chordless path, $N(p_1) \cap V(C)
= \{u\}$, $N(p_k) \cap V(C) = \{v\}$, and the only nodes of $(C, x)$
that may have a neighbor in the interior of $P$ are $x$, $y'$, and
$y''$.

Let $P_{uy''}$ (respectively $P_{y''v}$) be the $uy''$-subpath 
(respectively $y''v$-subpath) of $C$ that does not contain $y'$. 
Let $P_{uy'}$ (respectively $P_{y'v}$) be the $uy'$-subpath 
(respectively $y'v$-subpath) of $C$ that does not contain $y''$.

\vspace{2ex}

\noindent{\bf Claim 4:} {\em $y'$ and $y''$ have no neighbors in $P$.}

\vspace{2ex}

\noindent{\em Proof of Claim 4:} First suppose that both $y'$ and $y''$ have 
a neighbor in $P$. Let $p_i$ (respectively $p_j$) be the node of $P$ with 
smallest index adjacent to $y'$ (respectively $y''$). W.l.o.g.\ $i\leq j$. 
Let $Q$ be a chordless path from $u$ to $y''$ in $G[V(P_{uy''})]$. 
Then $V(Q)\cup\{p_1,p_2,\ldots,p_j,y'\}$ induces in $G$ a propeller 
with center $y'$, a contradiction.

So we may assume w.l.o.g that $y''$ does not have a neighbor in $P$. 
Suppose $y'$ does. Let $Q$ be the $uv$-subpath of $C$ that contains $y''$. 
Let $Q'$ be a chordless $uv$-path in $G[V(Q)]$. By Claim 3, $Q'$ contains
$y''$.
But then $G[V(Q')\cup V(P)\cup\{y'\}]$ is a propeller with center $y'$,
a contradiction. This completes the proof of Claim 4.

\vspace{2ex}

By symmetry 
it suffices to consider the following two cases.

\vspace{2ex}

\noindent
{\bf Case 1:} $x'\in V(P_{uy''})$ and $x''\in V(P_{y''v})$.

\noindent Let $C'$ be the cycle that consists of $P_{uy''}$, $P_{y''v}$ 
and $P$. Then by Claim 4 $(C',x)$ is a propeller that contradicts our choice 
of $(C,x)$.

\vspace{2ex}

\noindent
{\bf Case 2:} $x'\in V(P_{uy'})$ and $x''\in V(P_{y''v})$.

\noindent Suppose $y'v$ is an edge. Let $C'$ be the cycle that
consists of $P_{uy''}$, $P_{y''v}$ and $P$. Then by Claim~4 $(C',y')$
is a propeller that contradicts our choice of $(C,x)$. So $y'v$ is not
an edge, and by symmetry neither is $uy''$. Now let $C'$ be the cycle
that consists of $P_{uy'}$, $y'y''$, $P_{y''v}$ and $P$. Then by
Claim~4, $(C',x)$ is a propeller that contradicts our choice of
$(C,x)$.
\end{proof}

\section{Recognition algorithms}\label{sec:ra}

Deciding whether a graph contains a propeller can be done directly
as follows: for every 3-node path $xyz$, check whether there are
two internally disjoint $xz$-paths in $G \setminus y$. Since
checking whether there are two internally disjoint $xz$-paths
can be done in $\mathcal O(n)$ time (\cite{ni}, see also \cite{schrijver}),
this leads to an $\mathcal O(n^4)$ recognition algorithm for class
$\mathcal C_1$.

Recognizing whether a graph contains a propeller as an induced
subgraph is a more difficult problem, and we are not aware of any
direct method for doing that. Observe that the above method would not
work since checking whether there is a chordless cycle through two
specified nodes of an input graph is NP-complete~\cite{bien}.  In
Section~\ref{ss:d4p}, we give an NP-completeness result showing that
the detection of ``propeller-like'' induced subgraph may be hard.  In
Section~\ref{ss:raC}, an $\mathcal O(nm)$ decomposition based
recognition algorithm for $\mathcal C_1$ (using Theorem
\ref{decompositionC_1}) is given.  In Section~\ref{ss:rac2}, an
$\mathcal O(n^2m^2)$ decomposition based recognition algorithm for
$\mathcal C_2$ (using Theorem \ref{decompositionC_2}) is given.

\subsection{Detecting 4-propellers}
\label{ss:d4p}

A \emph{4-propeller} is a propeller whose center has at least four
neighbors on the rim.

\begin{theorem}
  The problem whose instance is a graph $G$ and whose question is
  ``does $G$ contain a 4-propeller as an induced subgraph?'' is
  NP-complete.
\end{theorem}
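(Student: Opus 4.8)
The plan is to show membership in NP and then give a polynomial reduction from a problem the paper already invokes, namely deciding whether a graph contains a chordless cycle through two prescribed nodes, which is NP-complete by~\cite{bien}. Membership in NP is immediate: a certificate is the node set $R$ of a candidate rim together with the candidate center $c$, and in polynomial time one checks that $G[R]$ is a chordless cycle, that $c\notin R$, and that $c$ has at least four neighbors in $R$. For the hardness direction I would fix an instance $(H,s,t)$ of the chordless-cycle problem, where we may assume $s$ and $t$ are nonadjacent and have no common neighbor (these conditions can be forced cheaply), and build a graph $G$ together with a distinguished node $x$ so that $G$ contains an induced $4$-propeller if and only if $H$ has a chordless cycle through both $s$ and $t$.

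The construction attaches a single new node $x$, intended to be the center, and replaces $s$ and $t$ by local gadgets that each supply two neighbors of $x$, giving the four neighbors required on the rim. Concretely, I would split $s$ into two nodes $s_1,s_2$ with $s_1s_2\in E(G)$ and $x\sim s_1$, $x\sim s_2$, and split $t$ into $t_1,t_2$ symmetrically; to transmit the adjacencies of $s$ without creating chords, each former edge $sp$ is routed through private degree-two connectors, so that $s_1$ and $s_2$ reach the neighbors of $s$ along internally disjoint chordless detours (and likewise for $t$). A chordless cycle $s\,P\,t\,Q\,s$ of $H$ then lifts to a rim that enters and leaves the $s$-gadget through $s_1,s_2$ and the $t$-gadget through $t_1,t_2$, so that $(R,x)$ is an induced $4$-propeller; here Theorem~\ref{thm:whitney} guarantees that the relevant pieces close up into a cycle. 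Conversely, from an induced $4$-propeller $(R,x')$ I would argue that the center $x'$ must be $x$, because the gadgets are designed so that no other node has four neighbors lying on a common chordless cycle, and then projecting the rim back to $H$ by contracting each gadget recovers a chordless cycle of $H$ through $s$ and $t$.

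The hard part will be engineering the two gadgets so that three things hold at once: each of $s_1,s_2,t_1,t_2$ has enough degree to lie on a hole that avoids $x$ (so that none of them degenerates into a pendant once $x$ is deleted); no chords are introduced (the naive identification of $s_1$ and $s_2$ with all neighbors of $s$ fails precisely here, which is why the neighborhood of $s$ must be duplicated along the two detours); and the rim is \emph{forced} to thread all four special nodes rather than shortcutting through only one of each pair. Ruling out such shortcuts, and more generally ruling out spurious $4$-propellers centered at nodes other than $x$, is the delicate step, and it is what dictates the precise form of the gadget as well as the nonadjacency assumptions on $s$ and $t$. Once the gadget is shown to have these properties, the equivalence between induced $4$-propellers of $G$ and chordless $(s,t)$-cycles of $H$ follows, and since the reduction is plainly polynomial, NP-completeness follows.
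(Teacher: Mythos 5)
There is a genuine gap, and it is the one you yourself flag: the entire difficulty of the reduction is concentrated in ``ruling out spurious $4$-propellers centered at nodes other than $x$,'' and your proposal defers exactly that step rather than carrying it out. Worse, as described the approach cannot be repaired by gadget engineering local to $s$ and $t$: you reduce from the \emph{unrestricted} chordless-cycle-through-two-nodes problem of~\cite{bien}, so the input graph $H$ may already contain a $4$-propeller (a node of large degree with four neighbors on some hole) that has nothing to do with $s$ or $t$. Your construction only modifies $H$ near $s$ and $t$, so such a propeller survives into $G$, and $G$ would be a YES-instance even when $H$ has no chordless cycle through $s$ and $t$. No choice of $s$- and $t$-gadgets fixes this; you would have to globally destroy all pre-existing $4$-propellers of $H$ while preserving the set of chordless $st$-cycles, which is essentially as hard as the problem you are trying to solve.

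The paper avoids all of this by reducing from the \emph{bounded-degree} version of the source problem: by Theorem~2.7 of~\cite{leveque.lmt:detect}, detecting an induced cycle through two prescribed nonadjacent degree-$2$ nodes $x,y$ remains NP-complete when $H$ has maximum degree~$3$. One then subdivides the four edges at $x$ and $y$ and attaches a single new node $v$ to the four subdivision vertices. Because $\Delta(H)\le 3$, the new node $v$ is the \emph{only} vertex of degree at least~$4$ in the resulting graph, so every $4$-propeller is forced to be centered at $v$ by a one-line degree count; and because $x$ and $y$ have degree~$2$, a rim containing all four subdivision vertices must pass through $x$ and $y$ and projects back to an induced $xy$-cycle of $H$. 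This is the key idea missing from your proposal: choose a source problem whose degree bound makes the center of any $4$-propeller unique for free, instead of trying to enforce uniqueness of the center by gadgets. Your NP-membership argument is fine, but without this (or an equivalent global control on spurious centers) the hardness direction does not go through.
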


\begin{proof}
  Let $H$ be a graph of maximum degree 3, with 2 non-adjacent nodes
  $x$ and $y$ of degree 2.  Detecting an induced cycle through $x$ and
  $y$ in $H$ is an NP-complete problem (see Theorem~2.7
  in~\cite{leveque.lmt:detect}).  We now show how to reduce this
  problem to the detection of a 4-propeller.  Let $x'$ and $x''$
  (resp.\ $y'$ and $y''$) be the neighbors of $x$ (resp.\ of $y$).
  Subdivide the edges $xx'$, $xx''$, $yy'$ and $yy''$.  Call $a$, $b$,
  $c$, $d$ the four nodes created by these subdivisions.  Add a
  node $v$ adjacent to $a$, $b$, $c$ and $d$.  Call $G$ this new
  graph.  Note that since $H$ has maximum degree~3, $v$ is the only
  node of degree at least~4 in $G$, so every 4-propeller of $G$ must
  be centered at~$v$.  Hence, $G$ contains a 4-propeller if and only
  if $H$ contains an induced cycle through $x$ and $y$.
\end{proof}

Note that detecting (as an induced subgraph) a propeller whose center
has exactly two neighbors on the rim is mentioned
in~\cite{leveque.lmt:detect} as an open problem (Section~3.3, the
first of the 7 open problems).

\subsection{Recognition algorithm for $\mathcal C_1$}
\label{ss:raC}

We first define blocks of decomposition w.r.t.\ different cutsets.

If $G$ has a 0-cutset, i.e.\ it is disconnected, 
then its {\em blocks of decomposition} are the
connected components of $G$.
If $G$ has a 1-cutset $\{ u \}$ and $C_1, \ldots ,C_k$ are the connected
components of $G \setminus u$, then the {\em blocks of decomposition} 
w.r.t.\ this cutset are graphs $G_i=G[C_i \cup \{ u \}]$, for 
$i=1, \ldots ,k$.

Let $(S, A, B)$ be a split of a proper $K_2$-cutset of $G$.  The {\em
  blocks of decomposition} of $G$ with respect to this split are graphs
$G'= G[S\cup A]$ and $G''=G[S\cup B]$.

Let $(\{u,v\}, K', K'')$ be a split of a proper $S_2$-cutset of $G$.  The
{\em blocks of decomposition} of $G$ with respect to this split are
graphs $G'$ and $G''$ defined as follows.  Block $G'$ is the graph
obtained from $G[V(K')\cup\{u,v\}]$ by adding new nodes $u'$ and $v'$,
and edges $uu'$, $u'v'$ and $v'v$.  Block $G''$ is the graph obtained
from $G[V(K'')\cup\{u,v\}]$ by adding new nodes $u''$ and $v''$, and
edges $uu''$, $u''v''$ and $v''v$.  Nodes $u', v', u'', v''$ are
called the \emph{marker nodes} of their block of decomposition.

\begin{lemma}\label{blocksC_1}
For 0-cutsets, 1-cutsets and proper $K_2$-cutsets the following holds:
 $G$ is in $\mathcal C_1$  (resp.\ $\mathcal C_2$) 
if and only if all the blocks of decomposition are in $\mathcal C_1$ 
(resp.\ $\mathcal C_2$).
\end{lemma}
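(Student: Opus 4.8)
The plan is to prove the equivalence separately for each of the three cutset types, and for each one to argue both directions by relating propellers (or induced propellers) in $G$ to propellers in the blocks of decomposition. Since the blocks for $0$- and $1$-cutsets are genuine induced subgraphs of $G$ (no marker nodes are added), and both $\mathcal C_1$ and $\mathcal C_2$ are closed under taking subgraphs and induced subgraphs respectively, one direction is immediate in those cases: any propeller in a block is a propeller in $G$. For $K_2$-cutsets the blocks $G'=G[S\cup A]$ and $G''=G[S\cup B]$ are again induced subgraphs of $G$, so the same easy observation applies.

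First I would handle the forward direction ($G$ in the class implies all blocks are in the class), which is the easy half: every block is an induced subgraph of $G$, so if $G$ contains no propeller (resp.\ no induced propeller) then neither does any block. The substantive work is the converse: assuming all blocks are in the class, show $G$ is. For $0$- and $1$-cutsets this is where a connectivity observation is needed. A propeller $(C,y)$ is $2$-connected, so it lives entirely within one block: for a $0$-cutset the cycle $C$ together with its center lies in a single connected component, and for a $1$-cutset $\{u\}$ the whole propeller avoids being separated by $u$ (a single node cannot be a cutnode of the $2$-connected propeller, unless $u$ itself is a node of the propeller, in which case the propeller still lies in one $G_i=G[C_i\cup\{u\}]$). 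Hence any propeller of $G$ is a propeller of some block, giving the contrapositive. The identical argument, applied to induced propellers, handles the $\mathcal C_2$ case since the property of being induced is inherited when the whole structure sits inside one induced block.

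The main obstacle is the converse direction for the proper $K_2$-cutset $S=\{a,b\}$ with $ab\in E(G)$. Here a propeller $(C,y)$ of $G$ need \emph{not} lie in a single block $G[S\cup A]$ or $G[S\cup B]$: the rim $C$ can pass from $A$ to $B$ through the edge $ab$. The plan is to argue that because $S$ is a $K_2$-cutset, any cycle $C$ meeting both $A$ and $B$ must use both $a$ and $b$, and moreover must traverse the edge $ab$ (since $a,b$ form the only passage between the two sides). If $C\subseteq S\cup A$ (resp.\ $S\cup B$) the propeller already lives in $G'$ (resp.\ $G''$) and we are done by assumption. Otherwise $C$ contains a subpath $P_A$ through $A$ from $a$ to $b$ and a subpath $P_B$ through $B$, joined at $a,b$; I would then analyze where the center $y$ lies and derive a propeller confined to one block. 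The key point to exploit is that $S$ is \emph{proper}, meaning no node of $G\sm S$ is adjacent to both $a$ and $b$; combined with Lemma~\ref{l:pK2cutset} this prevents certain pathological configurations. The delicate case is when $y\in\{a,b\}$ or $y$ lies on one side while its neighbors on $C$ straddle both sides; I expect to reroute the rim within a single block (using $2$-connectivity of $G[S\cup A]$ or $G[S\cup B]$ to find an alternate $ab$-path) so that the center keeps two neighbors on a cycle lying inside that block, contradicting that block being in the class. For $\mathcal C_2$ one must additionally verify the rerouted cycle is chordless and that the center has no extra neighbors on it, where properness of the cutset is precisely what rules out the forbidden chord through $a,b$.
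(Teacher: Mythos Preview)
Your treatment of $0$- and $1$-cutsets and of the forward direction matches the paper and is fine. The $K_2$-cutset converse, however, is where you overcomplicate matters and introduce two gaps.

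First, a factual slip: a cycle $C$ meeting both $A$ and $B$ must contain both $a$ and $b$, but it need \emph{not} traverse the edge $ab$; it may go $a$--$A$--$b$--$B$--$a$. Second, your plan to ``reroute the rim using $2$-connectivity of $G[S\cup A]$'' is unjustified: the lemma is stated for arbitrary $G$, so the blocks need not be $2$-connected, and in any case no such appeal is needed. The point you are missing is that the edge $ab$ itself is the rerouting device. For $\mathcal C_1$: once $C$ meets both sides and (w.l.o.g.) the center $x\in K'$, take the $ab$-subpath $P$ of $C$ with interior in $K'$, close it with the edge $ab$, and observe that every neighbor of $x$ on $C$ already lies on $P$ (since $x\in K'$ has no neighbor in $K''$); hence $(P\cup ab,\,x)$ is a propeller in $G'$. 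Your worry about ``neighbors straddling both sides'' is therefore groundless. The case $x\in\{a,b\}$ is even easier: then $x\notin V(C)$ forces $C$ to lie on one side already.

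For $\mathcal C_2$ the paper's argument is shorter still and avoids rerouting entirely: if the rim $C$ is chordless and contains both $a$ and $b$, then $ab$ is either an edge of $C$ or a chord; the latter is impossible, and the former forces $V(C)\subseteq S\cup K'$ (say). Then the center $x$ is either in $K'\cup S$ (so the induced propeller sits in $G'$) or in $K''$, whence $x$ is adjacent to both $a$ and $b$, contradicting properness. Your proposed verification that a rerouted cycle stays chordless is thus unnecessary; properness is used only to rule out a center on the far side, not to control chords.
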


\begin{proof}
Since a propeller is 2-connected, the theorem obviously holds for
0-cutsets and 1-cutsets. Suppose that $(\{ u,v \} ,K',K'')$ is a split
of a proper $K_2$-cutset of $G$, and let $G'$ and $G''$ be the blocks
of decomposition w.r.t.\ this split. Since $G'$ and $G''$ are induced
subgraphs of $G$, it follows that if $G \in \mathcal C_1$
(resp.\ $\mathcal C_2$),
then $G',G'' \in \mathcal C_1$ (resp.\ $\mathcal C_2$). 
If $G'$ and $G''$ are in $\mathcal C_2$, then clearly (since $\{ u,v \}$
is proper) $G$ is in $\mathcal C_2$.
Finally assume that 
$G'$ and $G''$ are in $\mathcal C_1$, but that a propeller $(C,x)$
is a subgraph of $G$. Since $\{ u,v \}$ is proper, it follows that
$C$ contains a node of $K'$ and a node of $K''$. Hence, $C$ contains
both $u$ and $v$, and so w.l.o.g.\ $x$ is in $K'$. But then the
$uv$-subpath of $C$ whose interior nodes are in $K'$, together with edge
$uv$ and node $x$, induces a propeller that is a subgraph of $G'$,
a contradiction.
\end{proof}

\begin{lemma}\label{blocksC_12}
Let $G$ be a 2-connected graph that does not have a proper $K_2$-cutset.
  Let  $(\{u,v\}, K', K'')$ be a split of a proper $S_2$-cutset of $G$, 
  and let $G'$ and $G''$ be the 
blocks of decomposition w.r.t.\ this split.
Then $G\in \mathcal C_1$ if and only if $G'\in \mathcal C_1$
and $G''\in \mathcal C_1$.
\end{lemma}
\begin{proof}
  Assume that $G \in \mathcal C_1$ and w.l.o.g.\ that a propeller
  $(C,x)$ is a subgraph of $G'$. Since $(C,x)$ is not a subgraph of
  $G$, $(C,x)$ must contain at least one of the marker nodes $u'$ or $v'$. By
  the definition of $u'$ and $v'$ it is clear that $u',v'\in V(C)$.  
Since $G$ has
  no 1-cutset both $u$ and $v$ have a neighbor in every connected
  component of $G\setminus\{u,v\}$, and hence $G[V(K'')\cup\{u,v\}]$
  contains a path $P$ from $u$ to $v$. If in $C$ we replace path
  $uu'v'v$ with $P$, we get a propeller $(C',x)$, which is a
  subgraph of $G$, a contradiction.

  To prove the converse assume that $G'\in\mathcal C_1$ and
  $G''\in\mathcal C_1$, but that a propeller $(C,x)$ is a subgraph of
  $G$. Since $(C,x)$ cannot be a subgraph of $G'$ or $G''$, $(C,x)$
  must contain nodes from both $K'$ and $K''$. Then clearly
  $x\not\in\{u,v\}$, so w.l.o.g.\ we may assume that $x\in K'$. If
  there is a node from $V(C)$ in $K'$, then there are $uv$-paths
  $P_1$ and $P_2$ in $G[V(K')\cup\{u,v\}]$ and $G[V(K'')\cup\{u,v\}]$,
  respectively, such that $V(P_1)\cup V(P_2)=V(C)$. Replacing in $C$
  path $P_2$ with $uu'v'v$, we get a propeller $(C',x)$ which is a
  subgraph of $G'$, a contradiction. Therefore, $C$ is contained in
  $G[V(K'')\cup\{u,v\}]$. Since $x\in V(K')$, it has no neighbor in
  $K''$. Since $(C,x)$ is a propeller, it follows that $C$ contains
  both $u$ and $v$, and $x$ is adjacent to both $u$ and $v$. By
  definition of the proper $S_2$-cutsets, $G[V(K')\cup\{u,v\}]$ is not a
  path, and hence $K'$ must contain a node $y$, distinct from
  $x$. 
If there is a node $w$ adjacent to both $x$ and $v$, then $w$ is in $K'$
and hence $\{ u,v,u',v',x,w\}$ induces a propeller in $G'$ with center $w$.
Therefore, no node of $G$ is adjacent to both $x$ and $v$, and by symmetry,
no node of $G$ is adjacent to both $x$ and $u$.
Since $\{x,v\}$ cannot be a proper $K_2$-cutset in $G$, there is a path
  from $u$ to $y$ in $G[(V(K')\cup\{u\})\setminus\{x\}]$. Similarly,
  since $\{x,u\}$ cannot be a proper $K_2$-cutset in $G$, 
there is a path from $y$
  to $v$ in $G[(V(K')\cup\{v\})\setminus\{x\}]$. Therefore, there is a
  path $P$ from $u$ to $v$ in
  $G[(V(K')\cup\{u,v\})\setminus\{x\}]$. But then
  $V(P)\cup\{u',v',x\}$ induces a graph in $G'$ that contains a propeller
  with center $x$, a contradiction.
\end{proof}

\begin{theorem}\label{a:r1}
There is an algorithm with the following specifications.
\begin{description}
\item[ Input:] A graph $G$.
\item[ Output:]
$G$ is correctly identified as not belonging to $\mathcal C_2$, 
or a list $\mathcal L$ of induced subgraphs of $G$ such that:
\begin{itemize}
\item[(i)] $G\in \mathcal C_1$ if and only if for every $L \in \mathcal L$,
$L \in \mathcal C_1$;
\item[(ii)] $G\in \mathcal C_2$ if and only if for every $L \in \mathcal L$,
$L \in \mathcal C_2$;
\item[(iii)] for every $L \in \mathcal L$, $L$ is 2-connected and does not
have a $K_2$-cutset; 
\item[(iv)] $\sum_{L\in \mathcal L} |V(L)| \leq 6n$ and
$\sum_{L\in \mathcal L} |E(L)| \leq 2n+m$
\end{itemize}
\item[ Running time:] $\mathcal O(nm)$
\end{description}
\end{theorem}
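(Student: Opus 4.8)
The plan is to design a recursive decomposition algorithm built directly on Theorem~\ref{decompositionC_1} and Lemma~\ref{blocksC_1}, using the cutsets in a fixed priority order so that the blocks strictly shrink and the required invariants are maintained. First I would preprocess $G$ by repeatedly extracting blocks with respect to $0$-cutsets, $1$-cutsets and proper $K_2$-cutsets. By Lemma~\ref{blocksC_1} these three cutsets are ``safe'' for both $\mathcal C_1$ and $\mathcal C_2$ simultaneously, so decomposing along them preserves membership in both classes, which is exactly what outputs (i) and (ii) demand of $\mathcal L$. Testing for a $0$- or $1$-cutset is a standard $\mathcal O(n+m)$ computation (connected components and cutvertices via DFS), and a proper $K_2$-cutset $\{a,b\}$ can be found by iterating over candidate pairs or, more efficiently, by exploiting $2$-connectivity structure; the properness condition (no common neighbour of $a,b$ in the relevant side) is checked while forming the blocks. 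After this phase every block is $2$-connected and, by construction, has no proper $K_2$-cutset, giving output (iii). Note that at this stage I have not yet invoked any proper $S_2$-cutset, because Lemma~\ref{blocksC_12}, which justifies splitting along $S_2$-cutsets, requires the graph to already be free of proper $K_2$-cutsets.

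The second phase handles proper $S_2$-cutsets on the $2$-connected, $K_2$-cutset-free blocks. Here I would locate a proper $S_2$-cutset $\{u,v\}$ (an $S_2$-cutset whose sides are not chordless $uv$-paths, equivalently the ``proper'' condition of Section~\ref{secc2c}) and form the two marker blocks $G'$ and $G''$ exactly as defined before Lemma~\ref{blocksC_1}, each obtained by attaching a three-edge path $u\,u'\,v'\,v$ (respectively $u\,u''\,v''\,v$) to one side. Crucially, I would place these $S_2$-cutset decompositions \emph{outside} the list $\mathcal L$: Lemma~\ref{blocksC_12} validates them only for $\mathcal C_1$, not for $\mathcal C_2$, so splitting along $S_2$-cutsets would break output~(ii) if the resulting blocks were simply returned as elements of $\mathcal L$. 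The clean way to reconcile this is to observe that adding a pendant path between the two marker nodes does not create any new proper $K_2$-cutset and leaves the block $2$-connected, so the $S_2$-decomposition can be iterated; the elements eventually placed in $\mathcal L$ are those $2$-connected blocks with no proper $K_2$-cutset, and the equivalences (i) and (ii) follow by composing Lemma~\ref{blocksC_1} (for the $0$-, $1$- and $K_2$-phases) with Lemma~\ref{blocksC_12} (for the $S_2$-phase). The ``$G$ is correctly identified as not belonging to $\mathcal C_2$'' escape clause is triggered whenever the cutset search exposes a propeller as an induced subgraph during block formation.

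The size bounds (iv) are a bookkeeping argument that I would carry out by amortising the marker nodes and edges against the original vertices. Each proper $S_2$-cutset split introduces four new marker nodes and six new marker edges total across the two blocks, but it also strictly separates the graph, so a charging scheme that assigns each marker to a distinct real vertex on the side it was attached to yields $\sum_{L}|V(L)|\le 6n$ and $\sum_{L}|E(L)|\le 2n+m$; the constants $6$ and $2$ come precisely from the three-edge gadget on each of the two sides. I expect the main obstacle to be the running time: achieving $\mathcal O(nm)$ overall requires that the total work across the entire recursion tree be bounded, not just the work at one node. The subtlety is that naively re-running cutset detection on each block from scratch could blow up the complexity once the recursion is deep. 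The fix is to observe, via the size bounds (iv), that the blocks are \emph{disjoint apart from the $\mathcal O(1)$ cutset vertices and markers}, so the edge sets are essentially partitioned; summing an $\mathcal O(m_L)$ or $\mathcal O(n_L m_L)$ per-block detection cost over a partition of the edges, together with the $\mathcal O(n)$ bound on the number of blocks implied by (iv), telescopes to $\mathcal O(nm)$. Verifying that each individual cutset test (particularly the proper $K_2$- and $S_2$-searches) can be implemented within this per-block budget, and that the recursion depth does not inflate the total, is the technically delicate part of the argument.
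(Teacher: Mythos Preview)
Your first phase is essentially the paper's entire algorithm, but you then bolt on a second phase with proper $S_2$-cutsets that is unnecessary, incoherent as written, and actually contradicts the theorem statement. The output specification requires that $\mathcal L$ be a list of \emph{induced subgraphs} of $G$ and that each $L$ be $2$-connected with no $K_2$-cutset; it says nothing about $S_2$-cutsets. Once you have finished decomposing along $0$-, $1$-, and $K_2$-cutsets, you are done: those blocks are induced subgraphs, they satisfy (iii), and Lemma~\ref{blocksC_1} gives (i) and (ii) simultaneously. The $S_2$-cutset blocks you describe carry marker nodes and are therefore not induced subgraphs of $G$ at all, so they cannot be placed in $\mathcal L$; your attempt to keep them ``outside the list'' while still somehow using them never resolves into a definite construction.

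Because the second phase should not exist, your argument for the size bounds (iv) is aimed at the wrong decomposition. There are no marker nodes in the real algorithm: $K_2$-cutset blocks are simply $G[S\cup A]$ and $G[S\cup B]$. The paper gets (iv) by combining two facts: the maximal $2$-connected components of $G$ have total vertex count at most $2n$ and total edge count exactly $m$; and for the $K_2$ phase, the potentials $\phi(H)=|V(H)|-2$ and $\psi(H)=|E(H)|-1$ are additive across a $K_2$-split, which bounds the number of leaves and hence the total blow-up. Your ``four marker nodes and six marker edges per split'' accounting has no bearing here. Finally, the escape clause is not a vague ``whenever a propeller is exposed'': it fires precisely when a $K_2$-cutset that is \emph{not} proper is encountered, and Lemma~\ref{l:pK2cutset} is what certifies $G\notin\mathcal C_2$ in that event. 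This is also how (iii) comes out as ``no $K_2$-cutset'' rather than merely ``no proper $K_2$-cutset''.
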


\begin{proof}
Consider the following algorithm.
\begin{description}
\item[ Step 1:]
Let $\mathcal L =\mathcal F =\emptyset$. 

\item[ Step 2:]
Find maximal 2-connected components of $G$ 
(i.e.\ decompose $G$ using 0-cutsets and 1-cutsets)
and add them to $\mathcal F$.

\item[ Step 3:]
If $\mathcal F =\emptyset$ then return $\mathcal L$ and stop.
Otherwise, remove a graph $F$ from $\mathcal F$.

\item[ Step 4:]
Decompose $F$ using proper $K_2$-cutsets as follows.

\begin{description}
\item[ Step 4.1:]
Let $\mathcal F'=\{ F \}$ and $\mathcal L'=\emptyset$.

\item[ Step 4.2:]
If $\mathcal F'=\emptyset$ then merge $\mathcal L'$ with $\mathcal L$
and go to Step 3.
Otherwise, remove a graph $H$ from $\mathcal F'$.

\item[ Step 4.3:]
Check whether $H$ has a $K_2$-cutset. If it does not, then add $H$ to 
$\mathcal L'$ and go to Step 4.2. Otherwise, let $S$ be a $K_2$-cutset of $H$.
Check whether $S$ is proper. If it is not, then output 
``$G\not\in \mathcal C_2$'' and stop. Otherwise, construct blocks of 
decomposition w.r.t.\ $S$, add them to $\mathcal F'$ and go to Step 4.2.
\end{description}
\end{description}

We first prove the correctness of this algorithm. Suppose the algorithm
terminates in Step 4.3 because it has identified a $K_2$-cutset of $H$
that is not proper. By Step~2, the graph $F$ that is placed in $\mathcal F'$
in Step 4.1 is 2-connected, and since blocks of decomposition of 
a 2-connected graph w.r.t.\ a $K_2$-cutset are also 2-connected, all graphs
that are ever placed in list $\mathcal F'$ are 2-connected, and in particular
$H$ is 2-connected. Therefore, by Lemma \ref{l:pK2cutset}, $H$ is correctly
identified as not belonging to $\mathcal C_2$.

We may now assume that the algorithm terminates in Step 3 by returning the
list $\mathcal L$. 
By Lemma~\ref{blocksC_1}  (i) and (ii) hold.
By the construction of the algorithm, clearly (iii) holds.

Let $\mathcal F^*$ be the list $\mathcal F$ at the end of Step 2.
Since every node of $G$ is in at most two graphs of $\mathcal F^*$ and
every edge of $G$ is in exactly one graph of $\mathcal F^*$, the following 
holds:
\\
\\
$(1) \hspace{2ex} \sum_{F\in \mathcal F^*} |V(F)|\leq 2n$ and 
  $\sum_{F\in \mathcal F^*} |E(F)|= m$.
\\
\\
Let $F$ be a graph placed in the list $\mathcal F'$ in Step 4.1, and let
$\mathcal L_F$ be the list $\mathcal L'$ at the time it is merged with
$\mathcal L$ in Step 4.2. We now show that:
\\
\\
$(2) \hspace{2ex} |\mathcal L_F| \leq |V(F)|$,  
  $\sum_{L\in \mathcal L_F} |V(L)|\leq 3|V(F)|$,
and $\sum_{L\in \mathcal L_F} |E(L)|\leq |V(F)|+|E(F)|$,
\\
\\
For any graph $T$ define $\phi (T)=|V(T)|-2$ and $\psi (T)=|E(T)|-1$. 
Suppose $(S,A,B)$ is a split
of a $K_2$-cutset of $H$ and let $H_A=H[S\cup A]$ and $H_B=H[S \cup B]$
be the blocks of decomposition. Clearly 
$\phi (H)=|A|+|B|=\phi (H_A)+\phi (H_B)$ and
$\psi (H)=\psi (H_A)+\psi (H_B)$. 
Since a block is of size at least 3,
it follows that $\phi (H),\phi (H_A)$, $\phi (H_B)$, 
$\psi (H),\psi (H_A)$, $\psi (H_B)$ 
are all at least 1.
Therefore $\phi (F)=\sum_{L\in \mathcal L_F} \phi (L) \geq |\mathcal L_F|$
and hence $|\mathcal L_F|\leq |V(F)|$. Furthermore, 
$|V(F)|-2=\sum_{L\in \mathcal L_F}(|V(L)|-2)$, and so
$\sum_{L\in \mathcal L_F}|V(L)|=|V(F)|-2+2|\mathcal L_F|\leq 3|V(F)|$.
By a similar argument, but using $\psi$,
$|E(F)|-1=\sum_{L\in \mathcal L_F}(|E(L)|-1)$, and so
$\sum_{L\in \mathcal L_F}|E(L)|=|E(F)|-1+|\mathcal L_F|\leq |V(F)|+|E(F)|$.
Therefore (2) holds.

Let $\mathcal L^*$ be the list $\mathcal L$ outputted by the algorithm. Then
\\
\\
\begin{eqnarray*}
\sum_{L\in \mathcal L^*}|V(L)| &=&
\sum_{F \in \mathcal F^*}\sum_{L\in \mathcal L_F} |V(L)|
\\
&\leq & \sum_{F \in \mathcal F^*}3|V(F)| 
\hspace{2ex} \mathrm{by} \hspace{1ex} (2)
\\
&\leq & 6n \hspace{2ex} \mathrm{by} \hspace{1ex} (1).
\end{eqnarray*}
\\
\\
Also
\\
\begin{eqnarray*}
\sum_{L\in \mathcal L^*}|E(L)| &=&
\sum_{F \in \mathcal F^*}\sum_{L\in \mathcal L_F} |E(L)|
\\
&\leq & \sum_{F \in \mathcal F^*}(|V(F)|+|E(F)| 
\hspace{2ex} \mathrm{by} \hspace{1ex} (2)
\\
&\leq &2n+m \hspace{2ex} \mathrm{by} \hspace{1ex} (1)
\end{eqnarray*}
\\
\\
and hence (iv) holds.

We now show that this algorithm can be implemented to run in $\mathcal O(nm)$
time. Step 2 can be implemented to run in $\mathcal O(n+m)$ time 
\cite{ht1, tarjan}.
Checking whether a graph $H$ has a $K_2$-cutset can be done by the algorithm
in \cite{ht2} in $\mathcal O(|V(H)|+|E(H)|)$ time. This algorithm finds 
triconnected components of $H$ in linear time, and in particular it finds all
$K_2$-cutsets of $H$ (and some $S_2$-cutsets). By (2) it follows that
Step 4 can be implemented to run in $\mathcal O(|V(F)||E(F)|)$ time.
Step 4 is applied to every graph $F \in \mathcal F^*$. Since
\\
\\
\begin{eqnarray*}
\sum_{F \in \mathcal F^*}|V(F)||E(F)| 
&\leq &\sum_{F\in \mathcal F^*}|V(F)|\sum_{F\in \mathcal F^*}|E(F)|
\\
&\leq & 2nm \hspace{2ex} \mathrm{by} \hspace{1ex} (1),
\end{eqnarray*}
\\
\\
it follows that the total running time is $\mathcal O(nm)$.
\end{proof}

A cycle of length $k$ is denoted by $C_k$.

\begin{lemma}\label{noT-S-C_5}
Let $G$ be a 2-connected graph that does not have a $K_2$-cutset.
  If $G\in \mathcal C_2$ and it contains a $C_k$, for some
  $k\in\{3,4,5\}$, as an induced subgraph, then  $G=C_k$.
\end{lemma}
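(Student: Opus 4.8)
The plan is to suppose $G \neq C_k$ and to produce either an induced propeller or a $K_2$-cutset, both of which are forbidden. Write $C=C_k$ for the given induced cycle. The observation driving everything is that, since $C$ is chordless and $G\in\mathcal C_2$, no node of $V(G)\sm V(C)$ can have two neighbours on $C$: such a node would be the centre of a propeller whose rim is $C$, and this propeller would be induced (adding the centre creates no chord of $C$). So every node off $C$ has at most one neighbour on $C$. Since $G$ is $2$-connected and $G\neq C$, some component $D$ of $G\sm V(C)$ exists, and $2$-connectivity forces $D$ to have at least two attachments on $C$ (a single attachment would be a cutnode); joining two attachments of $D$ through $D$ yields an \emph{ear}, i.e.\ a path with both endnodes on $C$ and interior in $D$. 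I would then split according to whether some ear joins two \emph{non-adjacent} nodes of $C$.

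Suppose first that some ear joins two non-adjacent nodes, and let $Q=aq_1\dots q_m b$ be an ear with non-adjacent endnodes and the fewest interior nodes. Because $a,b$ are non-adjacent and $k\le 5$, the shorter $ab$-arc $R_1=awb$ of $C$ has length exactly $2$, while the other arc $R_2$ has length $2$ (if $k=4$) or $3$ (if $k=5$). Minimality makes $Q$ an induced path, so $q_1$ (resp.\ $q_m$) has $a$ (resp.\ $b$) as its unique neighbour on $C$; and if an interior $q_i$ had a neighbour $c\in V(C)\sm\{a,b\}$, then $aq_1\dots q_i c$ and $cq_i\dots q_m b$ would be ears with fewer interior nodes, so by minimality both must have adjacent endnodes, forcing $c$ to be a common neighbour of $a$ and $b$ — that is, $c$ is the midpoint of a length-$2$ arc. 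I would then pick an arc $R\in\{R_1,R_2\}$ whose interior receives no edge from $Q$; then $Q\cup R$ is a chordless cycle, and a common neighbour $w^\ast$ of $a,b$ not lying on $Q\cup R$ serves as the centre of the propeller $(Q\cup R,w^\ast)$. For $k=5$ such an arc is immediate, since the only common neighbour of $a,b$ is the midpoint of $R_1$, so $R_2$ works. For $k=4$ one first checks, again by minimality of $Q$, that $Q$ cannot reach both arc-midpoints $w,w'$ (otherwise the subpath of $Q$ between the two witnesses, capped by $w$ and $w'$, would be a shorter ear joining the non-adjacent nodes $w,w'$); hence at least one arc is usable. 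Either way we get a contradiction, so every ear joins adjacent nodes of $C$.

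Finally, assume every ear joins adjacent nodes. For $k=3$ all pairs of nodes of $C$ are adjacent, and I would argue directly: an ear $Q=aq_1\dots q_m b$ has $m\ge 2$ (else $q_1$ has two neighbours on $C$), the cycle $C^\ast=Q+ab$ is chordless, and the third node of the triangle is adjacent to both $a,b\in V(C^\ast)$, giving the propeller $(C^\ast,\cdot)$. For $k\in\{4,5\}$, ``every ear adjacent'' means that in each component $D$ all attachments are pairwise adjacent on $C$; a pairwise-adjacent set on $C_k$ with $k\ge 4$ has at most two nodes, and $D$ has at least two attachments, so $D$ attaches to exactly two adjacent nodes $a,b$. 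Then the (non-empty) set $V(C)\sm\{a,b\}$ is separated from $D$ by $\{a,b\}$, and since $ab\in E(G)$ this is a $K_2$-cutset, contradicting the hypothesis.

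The step I expect to be the main obstacle is the non-adjacent case for $k=4$: this is the only configuration in which \emph{both} $ab$-arcs have length $2$, so neither choice of arc is automatically chord-free, and one must exploit the minimality of the ear to forbid simultaneous attachment to the two arc-midpoints. The other genuinely essential ingredient is the ``no $K_2$-cutset'' hypothesis, used only in the adjacent case for $k\in\{4,5\}$, where it converts the confinement of a component's attachments to a single edge into a forbidden cutset.
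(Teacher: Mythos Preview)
Your argument is correct and follows essentially the same minimal-ear route as the paper: both pick a shortest path between nonadjacent attachments, use minimality to confine interior attachments to common neighbours of the endnodes, and then take the remaining arc of $C$ as the rim with its midpoint as centre (you invoke the no-$K_2$-cutset hypothesis contrapositively in the ``all ears adjacent'' case whereas the paper uses it directly at the outset, but this is the same dichotomy). One small slip: in your $k=3$ case you must take a \emph{shortest} ear $Q$, since otherwise $C^\ast=Q+ab$ need not be chordless (an arbitrary ear could have a chord or an interior node adjacent to $a$ or $b$).
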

\begin{proof}
  Let $G\in \mathcal C_2$ and suppose that $G$ contains a
  $C_k=x_1x_2\ldots x_kx_1$ as an induced subgraph, for some
  $k\in\{3,4,5\}$. Assume $G\neq C_k$ and that $G$ has no 1-cutset nor
  $K_2$-cutset. Let $K$ be a connected component of $G\setminus C_k$.

If a node $x\in K$ is adjacent to more than one node of $C_k$, then
$V(C_k)\cup \{x\}$ induces a propeller of $G$. So a node of $K$ can
have at most one neighbor in $C_k$. Since $G$ has no 1-cutset 
nor $K_2$-cutset, 
$|N(K)\cap V(C_k)|\geq 2$, and if $|N(K)\cap V(C_k)|=2$, then
the two nodes of $N(K)\cap V(C_k)$ are nonadjacent.

Suppose $k=3$, and let $P$ be a minimal path of $K$ such that its
endnodes are adjacent to different nodes of $C_k$. Then $V(P)\cup
V(C_k)$ induces a propeller. Therefore $k\in\{4,5\}$, and hence
$N(K)\cap V(C_k)$ contains nonadjacent nodes. Let $P$ be a minimal
path of $K$ such that its endnodes are 
adjacent to nonadjacent nodes of $C_k$. We may
assume w.l.o.g.\ that the endnodes of $P$ are adjacent to $x_1$ and
$x_3$. By the choice of $P$, we may assume w.l.o.g.\ that nodes of
$V(C_k)\setminus\{x_1,x_2,x_3\}$ have no neighbors in $P$. But then
$V(C_k)\cup V(P)$ induces a propeller.
\end{proof}

\begin{lemma}\label{l:S2preserve}
Let $G$ be a 2-connected graph with no $K_2$-cutset.
Let $(\{ u,v \} ,A,B)$ be a split of a proper $S_2$-cutset of $G$,
and $G_A$ and $G_B$ the corresponding
blocks of decomposition. Then the following hold.
\begin{itemize}
\item[(i)] $G_A$ and $G_B$ are 2-connected and have no $K_2$-cutset.
\item[(ii)] If $|A|\leq 2$ or $|B|\leq 2$, then $G \not\in \mathcal C_2$. 
\end{itemize}
\end{lemma}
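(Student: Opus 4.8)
The plan is to prove the two parts separately, relying throughout on one structural observation. Since $G$ is 2-connected and there are no edges between $A$ and $B$, every connected component $K$ of $G[A]$ satisfies $N(K)=\{u,v\}$: indeed $N(K)\subseteq\{u,v\}$, while $|N(K)|\ge 2$ because a single neighbor would be a cutnode of $G$. The same holds for components of $G[B]$. Consequently $G[A\cup\{u,v\}]$ and $G[B\cup\{u,v\}]$ are connected, both $u$ and $v$ have neighbors in $A$ and in $B$, and (taking a shortest $uv$-path with interior in $B$) there is a chordless $uv$-path $Q$ with interior in $B$. This path $Q$ will be used to close holes, and in $G$ it plays exactly the role that the marker path $uu'v'v$ plays in $G_A$.

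For (i), I would check that $G_A$ has no cutnode. Removing $u'$ or $v'$ obviously leaves a connected graph, and removing $u$ (resp.\ $v$) leaves $G[A\cup\{v\}]$ (resp.\ $G[A\cup\{u\}]$), which is connected by the component property, with the surviving marker nodes hanging off $v$ (resp.\ $u$). For $a\in A$ I would use that $G\setminus a$ is connected: a path of $G\setminus a$ from any $w\in A\setminus a$ towards $u$ cannot reach $B$ without first meeting $\{u,v\}$, so its initial segment stays in $(A\setminus a)\cup\{u,v\}$ and already reaches $\{u,v\}$; since the marker path keeps $u$ and $v$ together, $G_A\setminus a$ is connected. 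For the absence of a $K_2$-cutset, suppose $\{p,q\}$ is one. It cannot contain a marker node, since the only edges at $u',v'$ are $uu'$, $u'v'$, $v'v$, and deleting either endpair of these leaves $G_A$ connected; hence $\{p,q\}\subseteq A\cup\{u,v\}$ and $pq\in E(G)$. Because the path $uu'v'v$ keeps the survivors of $\{u,v\}$ in one component, some component $D$ of $G_A\setminus\{p,q\}$ lies in $A$ and avoids the markers, so $N_G(D)=N_{G_A}(D)\subseteq\{p,q\}$; since $B\ne\emptyset$ remains on the other side, $\{p,q\}$ is then a $K_2$-cutset of $G$, contradicting the hypothesis.

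For (ii), assume without loss of generality $|A|\le 2$. Properness forces $|A|=2$: if $A=\{a\}$ then $a$ is adjacent to both $u$ and $v$, so $G[\{a,u,v\}]$ is a chordless $uv$-path, contradicting properness. Write $A=\{a_1,a_2\}$. Since $G$ is 2-connected it has no node of degree $1$, so each $a_i$ has a neighbor in $\{u,v\}$; a short inspection of the edges between $\{a_1,a_2\}$ and $\{u,v\}$ shows that the only configurations in which neither $a_i$ is adjacent to both $u$ and $v$ make $G[\{a_1,a_2,u,v\}]$ a chordless $uv$-path (namely $a_1a_2\in E(G)$ with $a_1,a_2$ seeing exactly one, and distinct, endpoint of $\{u,v\}$), which properness excludes. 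Hence some $a_i$, say $a_1$, is adjacent to both $u$ and $v$. I then form the hole $C$ consisting of the edges $ua_1$, $a_1v$ and the chordless $B$-side path $Q$. The node $a_2$ has no neighbor in $B$ and has at least two neighbors on $C$ — both $u$ and $v$ if $a_1a_2\notin E(G)$, and $a_1$ together with one of $u,v$ if $a_1a_2\in E(G)$ — so $G[V(C)\cup\{a_2\}]$ is an induced propeller with center $a_2$, giving $G\notin\mathcal C_2$.

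The step I expect to be the main obstacle is the $K_2$-cutset part of (i): the delicate point is to argue that any genuine separation of $G_A$ is already witnessed inside $A$, and therefore survives when the marker path is replaced by the $B$-side of $G$. The clean way to see this is the remark that the marker path and the path $Q$ both merely reconnect $u$ and $v$, so neither can create or destroy a separation located in $A$. Once this reduction is isolated, every remaining verification comes down to the component identity $N(K)=\{u,v\}$ and the existence of the chordless path $Q$.
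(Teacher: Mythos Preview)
Your proof is correct. Part~(i) follows the same line as the paper: any small cutset of $G_A$ that avoids the marker path must already separate $G$, because the ``extra'' component lies entirely in $A$. You spell out the marker-node cases more explicitly than the paper does, but the argument is the same.

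Part~(ii) is where you genuinely diverge. The paper argues by contradiction assuming $G\in\mathcal C_2$: from $|A|=2$ it extracts a short induced cycle (of length at most~4) and then invokes Lemma~\ref{noT-S-C_5} to conclude that $G$ would have to equal that cycle, which is absurd. You instead build an explicit induced propeller: once some $a_i$ sees both $u$ and $v$, the triangle-free hole $u\,a_1\,v\,Q\,u$ together with $a_2$ as center does the job. Your route is more self-contained --- it does not need the auxiliary lemma about short induced cycles --- at the cost of a small case inspection on the adjacencies of $a_1,a_2$ to $\{u,v\}$. The paper's route is shorter on the page because the casework has been packaged into Lemma~\ref{noT-S-C_5}, which it needs elsewhere anyway.
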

\begin{proof}
Since  $G$ is connected, then clearly by the construction of blocks,
so are $G_A$ and $G_B$. To prove (i) assume w.l.o.g.\ that $S$ is a
1-cutset or a $K_2$-cutset of $G_A$. Since $G$ is 2-connected, both
$u$ and $v$ have a neighbor in every connected component of 
$G \setminus \{ u,v \}$. So we may assume that $S\cap \{ u',v'\}=\emptyset$
(where $u'$ and $v'$ are the marker nodes of $G_A$). Then w.l.o.g.\ we may
assume that $v\not\in S$. Let $C$ and $D$ be connected components of
$G_A\setminus S$ such that $v \in C$. Then $u',v'\in C$, and if $u\not\in S$
then $u\in C$. Therefore $D\subseteq A$, and hence $S$ is a cutset of $G$,
a contradiction. Therefore (i) holds.

To prove (ii) assume w.l.o.g.\ that $|A|\leq 2$ and $G \in {\cal
  C}_2$. Since $G$ is 2-connected, there is a chordless $uv$-path $P$
in $G[A \cup \{ u,v \}]$. Since $\{ u,v \}$ is a proper $S_2$-cutset,
it follows that $P$ has length 3, say $P=uxv$, and $|A|=2$, say $A=\{
x,y\}$. Since $G$ is 2-connected, $|N(y)\cap \{ u,v,x \}|\geq 2$, so
$G$ contains a cycle of length at most~4.  But then by Lemma
\ref{noT-S-C_5}, $G$ is a chordless cycle, a contradiction.
\end{proof}

\begin{theorem}\label{a:r2}
There is an algorithm with the following specifications.
\begin{description}
\item[ Input:] A 2-connected graph $G$ that does not have a $K_2$-cutset.
\item[ Output:]
YES if $G \in \mathcal C_1$, and NO otherwise.
\item[ Running time:] $\mathcal O(nm)$
\end{description}
\end{theorem}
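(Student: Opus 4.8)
The plan is to decide membership in $\mathcal C_1$ by recursively decomposing along proper $S_2$-cutsets, with Theorem~\ref{decompositionC_1} as the structural backbone. For a $2$-connected graph $H$ with no $K_2$-cutset, Theorem~\ref{decompositionC_1} says that if $H\in\mathcal C_1$ then $H\in\mathcal C_0$ or $H$ has a proper $S_2$-cutset. So I would run the following recursion on such an $H$. First test $H\in\mathcal C_0$ in linear time (no node has two neighbours of degree at least~$3$); if so return YES, as $\mathcal C_0\subseteq\mathcal C_1$. Otherwise look for a proper $S_2$-cutset; if there is none, return NO, which is correct since a $2$-connected, $K_2$-cutset-free graph outside $\mathcal C_0$ with no proper $S_2$-cutset cannot lie in $\mathcal C_1$ by Theorem~\ref{decompositionC_1}. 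If a proper $S_2$-cutset with witnessing split $(\{u,v\},A,B)$ is found, return NO whenever $|A|\le 2$ or $|B|\le 2$ (correct by Lemma~\ref{l:S2preserve}(ii), as then $H\notin\mathcal C_2$ and therefore $H\notin\mathcal C_1$); otherwise build the blocks $G_A,G_B$, recurse on each, and return YES if and only if both calls return YES.

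Correctness of the recursive step is exactly Lemma~\ref{blocksC_12} ($H\in\mathcal C_1$ iff $G_A,G_B\in\mathcal C_1$), while Lemma~\ref{l:S2preserve}(i) keeps the blocks $2$-connected and $K_2$-cutset-free so the precondition is preserved. The size test also forces termination: a witnessing split has no side of size~$1$ (such a side together with $u,v$ would induce a chordless $uv$-path), and when $|A|,|B|\ge 3$ one gets $|V(G_A)|=|A|+4\le|V(H)|-1$ and likewise for $G_B$, so both blocks are strictly smaller than $H$. With the two base cases above, induction gives overall correctness.

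For the running time I would use the potentials $\phi(T)=|V(T)|-6$ and $\psi(T)=|E(T)|-6$. An $S_2$-cutset split adds four marker nodes and three marker edges to each block and deletes no original edge (recall $uv\notin E(H)$), so $|V(G_A)|+|V(G_B)|=|V(H)|+6$ and $|E(G_A)|+|E(G_B)|=|E(H)|+6$; hence $\phi(H)=\phi(G_A)+\phi(G_B)$ and $\psi(H)=\psi(G_A)+\psi(G_B)$. Since we recurse only when both sides have at least three nodes, every block arising has at least seven nodes and thus $\phi,\psi\ge 1$. Consequently the number of leaves, and so the total number of nodes of the binary decomposition tree, is $O(n)$, and its depth is $O(n)$ because block sizes strictly decrease along any root-to-leaf path. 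Summing the additive identities over the whole tree gives $\sum_H|E(H)|\le(\max_L\mathrm{depth}(L)+1)\,\psi(\mathrm{root})+O(n)=O(nm)$, and similarly $\sum_H|V(H)|=O(n^2)=O(nm)$ for connected inputs. Thus, if each recursive call does $O(|V(H)|+|E(H)|)$ work, the total running time is $O(nm)$.

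The main obstacle is to realise a single call in linear time, that is, to find a proper $S_2$-cutset of $H$ (or certify there is none) within $O(|V(H)|+|E(H)|)$. I would extract all $2$-cutsets from the triconnected-components decomposition, computable in linear time by~\cite{ht2} and already invoked in Theorem~\ref{a:r1}; since $H$ has no $K_2$-cutset these are all $S_2$-cutsets. Properness is then a local question: a side induces a chordless $uv$-path only if it is a single component of $H\setminus\{u,v\}$ forming, with $u,v$, such a path, so for a fixed cutset one classifies the components of $H\setminus\{u,v\}$ as path-like or not and checks whether they split into two nonempty sides, neither a single path-like component. The technical heart is to carry out this classification and the search over all candidate cutsets within the global linear budget, exploiting the triconnected-component structure rather than paying separately per cutset; everything else follows directly from Theorem~\ref{decompositionC_1} and Lemmas~\ref{blocksC_12} and~\ref{l:S2preserve}.
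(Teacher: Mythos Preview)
Your overall architecture—test $\mathcal C_0$, otherwise decompose along a proper $S_2$-cutset, recurse on the two blocks, and use Lemma~\ref{blocksC_12} and Lemma~\ref{l:S2preserve} for correctness together with the potential $\phi(T)=|V(T)|-6$ to bound the recursion—is the same as the paper's. The difference lies in how the cutset is found, and this is exactly the step you flag as ``the main obstacle''.

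The paper does \emph{not} search for proper $S_2$-cutsets at all. Instead it re-runs the constructive proof of Theorem~\ref{decompositionC_1} inside the algorithm: once $F\notin\mathcal C_0$, pick a node $w$ with two neighbours $u,v$ of degree $\ge 3$; if $uv\in E(F)$ output NO (the proof of Theorem~\ref{decompositionC_1} exhibits a propeller in this case); otherwise test whether $F\setminus w$ has a cutnode $w'$ separating $u$ from $v$; if not, output NO (again the proof of Theorem~\ref{decompositionC_1} shows $w$ is the centre of a propeller); if yes, $\{w,w'\}$ is automatically a proper $S_2$-cutset with the required split. Each of these steps is trivially $O(|V(F)|+|E(F)|)$, so the whole per-call cost is linear with no appeal to triconnected components and no need to classify components as ``path-like''. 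Together with $|\mathcal F|\le n$ from the $\phi$-potential, this gives $O(nm)$ immediately.

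Your route via the triconnected decomposition is plausible but you have not actually established the linear bound: the SPQR tree can have $\Theta(|V(F)|)$ virtual edges (hence $\Theta(|V(F)|)$ candidate $S_2$-cutsets), and you would need to argue that the properness test over all of them can be amortised to $O(|V(F)|+|E(F)|)$, which you only sketch. More importantly, the exercise is unnecessary: following the proof of Theorem~\ref{decompositionC_1} hands you the cutset (or a certificate that $F\notin\mathcal C_1$) for free. That is the idea you are missing.
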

\begin{proof}
Consider the following algorithm.
\begin{description}
\item[ Step 1:]
If $G$ has fewer than 7 nodes, then check directly whether $G\in \mathcal C_1$,
return the answer and stop.

\item[ Step 2:]
Let $\mathcal L  =\{ G \}$. 

\item[ Step 3:]
If $\mathcal L =\emptyset$ then output YES and stop.
Otherwise, remove a graph $F$ from $\mathcal L$.

\item[ Step 4:]
Check whether $F \in \mathcal C_0$. If it is then go to Step 3.
Otherwise, let $w$ be a node of $F$ and $u$ and $v$ its neighbors that
are of degree at least 3.

\item[ Step 5:]
If $uv$ is an edge then output NO and stop.
Otherwise, check whether $F\setminus w$ has a cutnode $w'$ that
separates $u$ from $v$.
If it does not then output NO and stop.
Otherwise, let $C_u$ and $C_v$ be the connected components of 
$F\setminus \{ w,w'\}$ that contain $u$ and $v$ respectively, and
denote by $C$ the union of the remaining components. If
$|C_u \cup C|\leq 2$ or $|C_v|\leq 2$ then output NO and stop.
Otherwise for the split $(\{ w,w'\} ,C\cup C_u,C_v)$ of the proper
$S_2$-cutset $\{ w,w'\}$ construct the corresponding blocks of decomposition,
add them to $\mathcal L$ and go to Step 3.
\end{description}

We first prove the correctness of this algorithm.
We may assume that the algorithm does not terminate in Step 1.
By Step 1 and Lemma \ref{l:S2preserve}, all the graphs that are ever
put on list $\mathcal L$ are 2-connected, have no $K_2$-cutset
and have at least 7 nodes. 
Note that by Lemma~\ref{blocksC_12}, at every stage of the algorithm
when blocks of decomposition are added to $\mathcal L$ in Step 5
the following holds: $G$ belongs to $\mathcal C_1$ if and only if
all the graphs in $\mathcal L$ belong to $\mathcal C_1$.
If the algorithm terminates in Step 5, then
by  Lemma \ref{l:S2preserve} and 
the proof of Theorem \ref{decompositionC_1} 
it does so correctly.
If the algorithm terminates in Step 3, then by Lemma~\ref{blocksC_12}
it does so correctly.

We now show that this algorithm can be implemented to run in
$\mathcal O(nm)$ time.
Step 1 can clearly be implemented to run in constant time.
For a graph $F$ that is removed from list $\mathcal L$ in Step 3, clearly
Steps 4 and 5 can be implemented to run in
$\mathcal O(|V(F)|+|E(F)|)$ time. If we show that the number of times
these steps are applied is at most $n$, then it follows that the total
running time is $\mathcal O(nm)$.

Let $\mathcal F$ be the set of graphs that are identified as belonging to
$\mathcal C_0$ in Step 4. Then the number of times Steps 4 and 5 are applied
is at most $|\mathcal F|$. We now show that $|\mathcal F|\leq n$.
For any graph $H$ define $\phi (H)=|V(H)|-6$. Now let $F$ be a graph
that is decomposed by a proper $S_2$-cutset in Step 5. Denote by
$(S,A,B)$ the split used for the decomposition, and by $F_A$ and $F_B$
the corresponding blocks of decomposition.
Clearly $\phi (F)=|A|+|B|-4=\phi (F_A)+\phi (F_B)$. Since all graphs that
are ever placed on list $\mathcal L$ have at least 7 nodes, it follows
that $\phi (F),\phi (F_A)$ and $\phi (F_B)$ are all at least 1.
Hence $\phi (G)=\sum_{L\in \mathcal F} \phi (L)\geq |\mathcal F|$.
Therefore $|\mathcal F|\leq n$.
\end{proof}

\begin{theorem}\label{a:r3}
There is an algorithm with the following specifications.
\begin{description}
\item[ Input:] A  graph $G$.
\item[ Output:]
YES if $G \in \mathcal C_1$, and NO otherwise.
\item[ Running time:] $\mathcal O(nm)$
\end{description}
\end{theorem}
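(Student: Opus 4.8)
The plan is to chain together the two algorithms furnished by Theorems~\ref{a:r1} and~\ref{a:r2}. Given an arbitrary graph $G$, I would first run the algorithm of Theorem~\ref{a:r1}. If it reports that $G \notin \mathcal C_2$, then, since $\mathcal C_1 \subseteq \mathcal C_2$, we may immediately output NO. Otherwise it returns a list $\mathcal L$ of induced subgraphs of $G$ meeting properties (i)--(iv); I would then run the algorithm of Theorem~\ref{a:r2} on each $L \in \mathcal L$ and output YES if and only if every one of these runs answers YES.

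For correctness, the relevant facts are that $\mathcal C_1 \subseteq \mathcal C_2$ (so the early rejection is sound: $G \notin \mathcal C_2$ forces $G \notin \mathcal C_1$), and that property~(iii) guarantees each $L$ is $2$-connected and has no $K_2$-cutset, which is exactly the input requirement of Theorem~\ref{a:r2}. Hence each call correctly decides whether $L \in \mathcal C_1$. Property~(i) then asserts precisely that $G \in \mathcal C_1$ if and only if every $L \in \mathcal L$ lies in $\mathcal C_1$, so taking the conjunction of the answers is correct. Note that we never need to know whether $G$ itself is in $\mathcal C_2$; only property~(i) is used in the case a list is returned.

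The running time is where a little care is needed. Theorem~\ref{a:r1} runs in $\mathcal O(nm)$, and applying Theorem~\ref{a:r2} to a block $L$ costs $\mathcal O(|V(L)|\,|E(L)|)$, so the second phase costs $\mathcal O\!\left(\sum_{L \in \mathcal L} |V(L)|\,|E(L)|\right)$. The naive estimate $\sum_L |V(L)|\,|E(L)| \le \bigl(\sum_L |V(L)|\bigr)\bigl(\sum_L |E(L)|\bigr) \le 6n(2n+m)$ from~(iv) is not good enough, as it carries a spurious $n^2$ term (which genuinely dominates when $G$ is sparse but has many isolated nodes, so that $n \gg m$). The fix is to exploit that every $L$ is an \emph{induced} subgraph of $G$, whence $|E(L)| \le m$; then $\sum_L |V(L)|\,|E(L)| \le m \sum_L |V(L)| \le 6nm$ by~(iv), giving the desired $\mathcal O(nm)$ bound overall.

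The main (and essentially only) obstacle is this last estimate: one must avoid the loose product bound and instead bound the edge count of each block by $m$, using that the blocks produced by Theorem~\ref{a:r1} are induced subgraphs of $G$. Everything else is a direct assembly of the two preceding theorems together with the containment $\mathcal C_1 \subseteq \mathcal C_2$.
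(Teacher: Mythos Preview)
Your proposal is correct and follows the same approach as the paper: run Theorem~\ref{a:r1}, then apply Theorem~\ref{a:r2} to each block in the returned list, with correctness coming from properties~(i) and~(iii) together with $\mathcal C_1 \subseteq \mathcal C_2$.

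The only difference is in the running-time analysis. The paper simply uses the product bound $\sum_{L} |V(L)|\,|E(L)| \le \bigl(\sum_{L} |V(L)|\bigr)\bigl(\sum_{L} |E(L)|\bigr) \le 6n(2n+m)$ and declares this to be $\mathcal O(nm)$, implicitly assuming $n = O(m)$ (i.e.\ ignoring isolated vertices, which is harmless here since they are in $\mathcal C_1$ trivially and never enter any block). Your analysis is more careful: you bound each $|E(L)|$ by $m$ directly (using that $L$ is an induced subgraph of $G$) to get $\sum_L |V(L)|\,|E(L)| \le 6nm$ without any side assumption. Both arguments are fine; yours is cleaner.
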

\begin{proof}
First apply the algorithm from Theorem \ref{a:r1}. If this algorithm
returns $G\not\in \mathcal C_2$, then return NO and stop. Otherwise,
let $\mathcal L$ be the outputted list. Now apply the algorithm from
Theorem \ref{a:r2} to every graph in $\mathcal L$. If any of the
outputs is NO then return NO and stop, and otherwise return YES and stop.
Since $\sum_{L \in \mathcal L} |V(L)||E(L)|\leq 
\sum_{L \in \mathcal L} |V(L)|\sum_{L \in \mathcal L} |E(L)|$, it follows
by Theorems \ref{a:r1} and \ref{a:r2} that the running time is
$\mathcal O(nm)$.
\end{proof}

\subsection{Recognition algorithm for $\mathcal C_2$}
\label{ss:rac2}

We say that an $I$-cutset $\{ u,v,w\}$ is {\em proper} if no node of
$G \setminus \{ u,v,w \}$ has at least two neighbors in $\{ u,v,w\}$.
Let 
$(\{u,v,w\} , K', K'')$ be a split of a proper $I$-cutset of a graph $G$, 
and assume $uv$ is an edge.
The {\em blocks of decomposition} of $G$ w.r.t.\ this split are 
graphs $G'$ and $G''$ defined as follows.
Block $G'$ is the graph obtained from
$G[V(K')\cup\{u,v,w\}]$ by adding new nodes $u_1'$, $u_2'$, $v_1'$ 
and $v_2'$ (called the {\em marker nodes} of $G'$) 
and edges $uu_1'$, $u_1'u_2'$, $u_2'w$, $vv_1'$, $v_1'v_2'$ 
and
$v_2'w$. Block $G''$ is the graph obtained from
$G\left[V(K'')\cup\{u,v,w\}\right]$ by adding new nodes $u_1''$,
$u_2''$, $v_1''$ and $v_2''$ (called the {\em marker nodes} of $G''$) 
and edges $uu_1''$, $u_1''u_2''$,
$u_2''w$, $vv_1''$, $v_1''v_2''$ and $v_2''w$.

We use the following notation in the proofs that follow.  By the
definition of an $I$-cutset, $G[K' \cup \{u,v,w\}]\sm uv$ contains a
chordless $uv$-path $P'_{uv}$, a chordless $uw$-path $P'_{uw}$, and a
chordless $vw$-path $P'_{vw}$, whose interiors belong to the same
connected component of $G[K']$. Define $P''_{uv}$ , $P''_{uw}$ ,
$P''_{vw}$ for $K''$ in the obvious analogous manner.

\begin{lemma}\label{l:2nbrs}
  If $G$ is a 2-connected graph in $\mathcal C_2$ that has no
  $K_2$-cutset, then every  $I$-cutset of $G$ is proper. 
\end{lemma}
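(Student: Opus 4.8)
The plan is to argue by contradiction: assume $\{u,v,w\}$ is an $I$-cutset that is not proper, so some node $t\in V(G)\setminus\{u,v,w\}$ has at least two neighbors in $\{u,v,w\}$, and then exhibit an induced propeller, contradicting $G\in\mathcal C_2$. Fix a split $(\{u,v,w\},K',K'')$, let $C'$ and $C''$ be the connected components of $G[K']$ and $G[K'']$ to which $u,v,w$ all attach, and let $uv$ be the unique edge inside $\{u,v,w\}$, so that $uw,vw\notin E(G)$. Since there are no edges between $K'$ and $K''$ and $t$ lies on one side, say $t\in K'$, the three chordless paths $P''_{uv}$, $P''_{uw}$, $P''_{vw}$ have their interiors in $K''$ and hence miss $t$. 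This is the main device: a chordless rim that avoids $t$, on which $t$ becomes the center.

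I would then split on which pair of $\{u,v,w\}$ lies in $N(t)$. If $\{u,v\}\subseteq N(t)$, then the cycle $C$ formed by the edge $uv$ together with $P''_{uv}$ is chordless (it is the union of a chordless path with interior in $K''$ and the single closing edge $uv$), and since $t\in K'$ has no neighbor in $K''$, the graph $G[V(C)\cup\{t\}]$ is exactly a propeller with rim $C$ and center $t$, a contradiction. Up to exchanging $u$ and $v$, the remaining case is $w\in N(t)$ together with one of $u,v$, say $\{u,w\}\subseteq N(t)$; here I may also assume $v\notin N(t)$, as otherwise the previous case applies. Now the goal is a chordless cycle through $u$ and $w$ that avoids $t$: I combine $P''_{uw}$ (interior in $K''$) with a second chordless $uw$-path whose interior lies in $K'\setminus\{t\}$. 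Because there are no $K'$–$K''$ edges and $uw\notin E(G)$, these two paths are internally disjoint and their union is a chordless cycle missing $t$, on which $t$ again serves as a center.

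The main obstacle is producing that second $uw$-path in the $\{u,w\}$ case. When $t\notin C'$ the $C'$-path $P'_{uw}$ works immediately. The delicate situation is $t\in C'$, and here I would use that $G$ is $2$-connected with no $K_2$-cutset. If there were no $uw$-path in $G[C'\cup\{u,v,w\}]\setminus t$, then $t$ together with one further vertex $x$ would separate $u$ from $w$; a two-sided analysis shows $x$ must lie on $P''_{uw}$ (so $x\in K''$) while $P'_{uw}$ is forced through $t$, which in turn lets one examine $w$'s side of the separation and, using the edge $tw$, extract a $K_2$-cutset $\{t,w\}$ (the only alternative being the degenerate case in which $w$ attaches to $C'$ solely through $t$, which is handled by the same counting). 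I expect this no-$K_2$-cutset bookkeeping, rather than the construction of the propeller itself, to be the technical heart of the argument.
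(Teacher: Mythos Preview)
Your Case~1 (\,$\{u,v\}\subseteq N(t)$\,) is correct and in fact cleaner than the paper's treatment, which disposes of this case via Lemma~\ref{noT-S-C_5} (a triangle $uvt$ would force $G=C_3$). Your basic strategy in Case~2 --- pair $P''_{uw}$ with a chordless $uw$-path on the $K'$ side avoiding $t$ --- also matches the paper in the easy subcase $t\notin V(P'_{uw})$.

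The gap is in the hard subcase, where $P'_{uw}=utw$. There is a mismatch between what you test for (``no $uw$-path in $G[C'\cup\{u,v,w\}]\setminus t$'') and what you actually need (``interior in $K'\setminus\{t\}$'', i.e.\ \emph{not} through $v$). A path $u\,v\,\ldots\,w$ through $v$ may well exist --- e.g.\ when $t$ is $u$'s sole neighbour in $K'$ but $v$ reaches the $w$-side of $C'\setminus t$ --- yet it is useless to you, since $v$ can have neighbours on the interior of $P''_{uw}$ and create chords. In that same configuration your fallback also fails: if every component of $K'\setminus\{t\}$ sees $v$ (and perhaps $w$), then neither $\{t,u\}$ nor $\{t,w\}$ is a cutset of $G$, and the ``one further vertex $x$'' you posit need not exist at all, because $G\setminus t$ has many $uw$-paths through $K''$. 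So the $K_2$-cutset extraction you sketch does not go through.

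The paper avoids this obstacle by a different device: rather than seeking a $uw$-path that dodges $t$, it takes a \emph{shortest $tv$-path} $P$ inside $G[C'\cup\{v\}]$ and does a short case analysis on whether $w$ and then $u$ have neighbours on $P\setminus t$. Each branch produces an induced propeller directly (using $P''_{vw}$ or $P''_{uw}$ on the other side), so there is never any need to route around $t$ on the $K'$ side. If you want to salvage your approach, the natural fix is to drop the $K_2$-cutset argument and instead analyse what happens along such a $tv$-path --- which is exactly the paper's route.
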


\begin{proof}
  Let $(\{ u,v,w \},K',K'')$ be a split of an $I$-cutset such that
  $uv$ is an edge.  Suppose that $x \in V(G) \setminus \{ u,v,w \}$
  has at least two neighbors in $\{ u,v,w \}$.  W.l.o.g.\ $x\in K'$
  and by Lemma \ref{noT-S-C_5} w.l.o.g.\ $N(x) \cap \{ u,v,w\} =\{ u,w
  \}$.  If $x \not \in V(P'_{uw})$ then $G[V(P'_{uw})\cup V(P''_{uw})
  \cup \{ x \} ]$ is a propeller, and hence $G\not\in \mathcal C_2$.
  So we may assume that $x \in V(P'_{uw})$, i.e.\ $P'_{uw}=uxw$.  Let
  $C'$ be the connected component of $G[K']$ that contains $x$, and
  let $P$ be a shortest $xv$-path in $G[C' \cup \{ v\} ]$. If $w$ does
  not have a neighbor in $P \setminus x$, then $G[V(P) \cup
  V(P''_{vw}) \cup \{ u \}]$ is a propeller with center $u$, and hence
  $G\not\in \mathcal C_2$.  So we may assume that $w$ has a neighbor
  in $P \setminus x$. If $u$ does not have a neighbor in $P \setminus
  x$, then $G[V(P)\cup \{ u,v,w \}]$ is a propeller, and hence
  $G\not\in \mathcal C_2$.  So, we may assume that $u$ has a neighbor
  in $P \setminus x$. Then $G[(V(P) \setminus \{ x,v\}) \cup \{ u,w
  \}]$ contains a chordless $uw$-path $Q$, and hence $G[V(Q) \cup
  V(P''_{uw}) \cup \{ x \} ]$ is a propeller, implying that $G\not\in
  \mathcal C_2$.
\end{proof}

\begin{lemma}\label{blocksC_2}
Let $G$ be a 2-connected graph that does not have a  $K_2$-cutset. 
Let 
$(\{u,v,w\} , K', K'')$ be a split of a proper $I$-cutset of $G$, and
$G'$ and $G''$ the corresponding 
blocks of decomposition.
Then $G\in\mathcal C_2$  if and only if 
$G'\in\mathcal C_2$\mbox{ and }$G''\in\mathcal C_2$.
\end{lemma}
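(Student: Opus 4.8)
The plan is to prove the biconditional in Lemma~\ref{blocksC_2} by establishing both directions separately, relying heavily on the structure that the $I$-cutset imposes, as spelled out in the paths $P'_{uv}, P'_{uw}, P'_{vw}$ and their double-primed analogues.

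For the forward direction, I would prove the contrapositive: assuming (say) $G'\not\in\mathcal C_2$, I must produce an induced propeller in $G$. So suppose $(C,z)$ is an induced propeller of $G'$. The marker nodes $u_1',u_2',v_1',v_2'$ of $G'$ form two internally disjoint paths of length~$3$ joining $u$ to $w$ and $v$ to $w$ respectively. If $C$ avoids all four marker nodes, then $(C,z)$ already lives inside $G[K'\cup\{u,v,w\}]$, hence is an induced propeller of $G$ and we are done. Otherwise $C$ uses some marker nodes; because these nodes have degree~$2$ in $G'$ and their only role is to realize a $u$--$w$ or $v$--$w$ connection, any marker-node subpath of $C$ must be the full length-$3$ path $u\,u_1'\,u_2'\,w$ (resp.\ $v\,v_1'\,v_2'\,w$). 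The key move is then to \emph{reroute}: replace such a marker subpath by the corresponding path $P''_{uw}$ (resp.\ $P''_{vw}$) through $K''$, which exists by the definition of the $I$-cutset. I must check that this substitution yields a chordless cycle inducing a propeller in $G$: no chords can appear between the rerouted $K''$-portion and the $K'$-portion because $K'$ and $K''$ are separated by $\{u,v,w\}$ (condition~\ref{i:2}--\ref{i:3} of the cutset) and the center $z$ keeps its at-least-two neighbors. The same argument with roles reversed handles $G''\not\in\mathcal C_2$.

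For the converse, I assume $G'\in\mathcal C_2$ and $G''\in\mathcal C_2$ but that $G$ contains an induced propeller $(C,z)$, and derive a contradiction. Since $(C,z)$ cannot be entirely inside $G[K'\cup\{u,v,w\}]$ nor inside $G[K''\cup\{u,v,w\}]$ (those are induced subgraphs of $G'$ and $G''$), the rim $C$ together with $z$ must use nodes from both $K'$ and $K''$. Because $\{u,v,w\}$ separates $K'$ from $K''$, any traversal of $C$ crossing between the two sides must pass through $\{u,v,w\}$; I would analyze how many of $u,v,w$ lie on $C$ and where $z$ sits. The natural strategy is to isolate the portion of $(C,z)$ lying in one side, say $K''$, and contract the $K'$-portion of $C$ down to one of the marker gadgets $u\,u_i''\,w$ or $v\,v_i''\,w$ of $G''$, thereby exhibiting an induced propeller inside $G''$ and contradicting $G''\in\mathcal C_2$. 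Properness of the $I$-cutset (Lemma~\ref{l:2nbrs}, giving that no outside node has two neighbors in $\{u,v,w\}$) is what guarantees the reconstructed cycle stays chordless and that the center's neighbors are correctly preserved.

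The main obstacle I anticipate is the converse-direction case analysis, specifically controlling where the center $z$ lies relative to $\{u,v,w\}$ and which of these three nodes the rim $C$ actually visits. Unlike the $S_2$-cutset case (Lemma~\ref{blocksC_12}), here the cutset has three nodes with exactly one edge ($uv$), so $C$ might enter and leave a side through different pairs among $\{u,v\},\{u,w\},\{v,w\}$, and a propeller rim can plausibly cross the cutset more than once. I expect to need the fact that $C$ is an \emph{induced} cycle to limit these crossings, and to use the explicit marker gadgets (each a chordless length-$3$ connector) to guarantee that whatever crossing pattern $C$ realizes in $G$ can be faithfully simulated by marker paths in the appropriate block without creating spurious chords. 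Handling the subcase where $z\in\{u,v,w\}$, or where $z$ is adjacent to cutset nodes on both sides, will require the most care, and this is where properness of the cutset does the decisive work.
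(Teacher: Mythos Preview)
Your overall strategy matches the paper's, but the forward direction has a genuine gap that you dismiss too quickly. You write that after replacing a marker subpath (say $u\,u_1'\,u_2'\,w$) by $P''_{uw}$, ``no chords can appear between the rerouted $K''$-portion and the $K'$-portion because $K'$ and $K''$ are separated by $\{u,v,w\}$.'' This is false as stated: the node $v$ is a cutset node, it may well lie on the non-marker $uw$-subpath $P'$ of $C$ (which passes through $K'$), and $v$ certainly can have neighbors in $P''_{uw}\setminus u$. So the rerouted cycle need not be chordless, and the resulting object need not be an \emph{induced} propeller. The paper's proof spends most of its effort on exactly this point: it splits into the cases $v\notin V(P')$ versus $v\in V(P')$, and in the latter case further analyzes where the neighbors of $x$ on $C$ lie and whether $v$ has a neighbor on $P''_{uw}\setminus u$, each time exhibiting a different induced propeller in $G$. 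Your separation argument does not substitute for this.

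A second, smaller gap: when $C$ uses \emph{both} marker paths (so $V(C)=\{u,u_1',u_2',w,v_2',v_1',v\}$), your rerouting recipe would replace both by $P''_{uw}$ and $P''_{vw}$, but these two paths may share interior nodes, so you do not get a cycle. The paper handles this case directly: the center $x$ must then be adjacent to at least two of $u,v,w$, contradicting properness of the $I$-cutset. You should isolate this case before attempting any rerouting.

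For the converse, your plan is essentially the paper's, and you are right that properness does the decisive work. One subcase you do not mention explicitly: the rim $C$ could lie entirely in $K'\cup\{u,v,w\}$ while the center $z$ lies in $K''$; then $z$ has at least two neighbors in $\{u,v,w\}$, again contradicting properness. After that, the paper's argument is shorter than you anticipate: once $C$ genuinely crosses, it contains $w$ and (w.l.o.g.) $u$, one takes the $uw$-subpath $P$ of $C$ on the $K'$ side, and $V(P)\cup\{z,u_1',u_2'\}$ already induces a propeller in $G'$ (with a small extra wrinkle when $z=v$).
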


\begin{proof}
Let $G\in \mathcal C_2$ and assume w.l.o.g.\ that $G'$ contains a 
propeller $(C,x)$ as an induced subgraph. 
Clearly $x\in K'\cup\{u,v,w\}$. 
Since $(C,x)$ cannot be contained in $G$,
$V(C) \cap \{ u_1',u_2',v_1',v_2'\} \neq \emptyset$. W.l.o.g.\ we may assume
that $uu_1'u_2'w$ is a subpath of $C$. If 
$V(C) \cap \{ v_1',v_2'\} \neq \emptyset$, then 
$V(C)=\{ u,v,u_1',u_2',v_1',v_2',w\}$, and since $x$ is adjacent to at least 
two nodes of $C$ it follows that $x$ is adjacent to at least two nodes of
$\{ u,v,w \}$, 
contradicting the assumption that $\{ u,v,w \}$ is a proper $I$-cutset.
Therefore $V(C) \cap \{ v_1',v_2' \} =\emptyset$. Let $P'$ be the 
$uw$-subpath of $C$ that does not contain $u_1'$. 
If $v \notin V(P')$ then $V(P') \cup V(P''_{uw}) \cup \{ x \}$
induces a propeller in $G$ with center $x$, a contradiction.
Hence $v \in V(P')$. 
If $x$ has at least two neighbors in $P'\setminus u$, then
$V(P') \cup V(P''_{vw}) \cup \{ x \}$ induces a propeller in $G$, 
a contradiction. So $N(x) \cap V(P')=\{ u,a \}$, where $a$ is a node 
of $P'\setminus \{ u,v,w \}$.
If $v$ does not have a neighbor in $P''_{uw}\setminus u$,
then $V(P''_{uw}) \cup V(P') \cup \{ x\}$ 
induces a propeller in $G$ with center
$x$, a contradiction. Hence $v$ has a neighbor in $P''_{uw}\setminus u$.
But then the $wa$-subpath of $P'$ together with $V(P''_{uw})\cup \{ x,v \}$
induces a propeller in $G$ with center $v$, a contradiction.

To prove the converse assume that $G'\in \mathcal C_2$ and
$G''\in\mathcal C_2$, but that $G$ contains as an induced subgraph
a propeller $(C,x)$.
Let us first assume that $C$ is contained in $G'$ or $G''$, w.l.o.g
$V(C)\subset V(G')$. If $x\in K'\cup\{u,v,w\}$ then $(C,x)$ is in
$G'$, a contradiction. 
Otherwise $x\in K''$, and hence it has at least two neighbors in
$\{u,v,w\}$, contradicting the assumption that $\{ u,v,w \}$ is a proper
$I$-cutset.
So $C$ must contain nodes from both $K'$ and $K''$, and
therefore it contains $w$ and at least one node from the set
$\{u,v\}$. W.l.o.g.\ we may assume that it contains $u$ and that $x\in
V(G')$. 
Let $P$ be
the $uw$-subpath of $C$
contained in $G'$. 
First let us assume that $x\neq v$.  
But 
then the node set $V(P)\cup \{x,u_1',u_2'\}$ induces 
a propeller in $G'$, a
contradiction. So $x=v$, and therefore $v$ is adjacent to a node $y$
of $C$ different from $u$. We may assume w.l.o.g.\ that $y\in G'$. 
But 
then the node set $V(P)\cup \{x,u_1',u_2'\}$ induces 
a propeller in $G'$, a contradiction.
\end{proof}

\begin{lemma}\label{l:Ipreserve}
Let $G$ be a 2-connected graph that does not have a $K_2$-cutset.
Let $(\{ u,v,w\},K',K'')$ be a split of a proper $I$-cutset of $G$, and
$G'$ and $G''$ the corresponding blocks of decomposition.
Then the following hold.
\begin{itemize}
\item[(i)] $G'$ and $G''$ are 2-connected and have no $K_2$-cutset.
\item[(ii)] If $|K'|\leq 4$ or $|K''|\leq 4$ then $G\not\in \mathcal C_2$.
\end{itemize}
\end{lemma}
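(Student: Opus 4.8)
The plan is to treat the two parts separately, following closely the pattern already used for $S_2$-cutsets in Lemma~\ref{l:S2preserve}. For~(i) I would prove the statement for $G'$ only, the argument for $G''$ being symmetric with $C''$ in place of $C'$. Suppose $S$ is a $1$-cutset or a $K_2$-cutset of $G'$; the goal is to contradict the hypothesis that $G$ is $2$-connected with no $K_2$-cutset. Since $G$ is $2$-connected, each of $u,v,w$ has a neighbour in the connected set $C'$, and more generally every component of $G[K']$ attaches to $\{u,v,w\}$ (otherwise a single node of $\{u,v,w\}$ would separate it), so $G[K'\cup\{u,v,w\}]$ is connected. The four marker nodes have degree~$2$ and lie on the two paths $u\,u_1'u_2'\,w$ and $v\,v_1'v_2'\,w$, whose endpoints are already joined through $C'$; exactly as in Lemma~\ref{l:S2preserve}, this redundancy lets me assume that $S$ contains no marker node, so $S\subseteq K'\cup\{u,v,w\}$ consists of genuine nodes of~$G$.

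With $S$ real, the second step is to show that all four marker nodes lie in one component $C$ of $G'\setminus S$. The two marker paths meet at $w$, and the only way $S$ could sever them would be $w\in S$; but then $u$ and $v$ remain joined by the edge $uv$, so the markers stay together with whichever of $u,v,w$ survive (using $|S|\le 2$). Consequently any other component $D$ of $G'\setminus S$ is disjoint from the markers and from $\{u,v,w\}$, whence $D\subseteq K'$. Since $K'$ has no edge to the nonempty set $K''$ and $D$'s only exits lie in $S$, the very same $S$ is a cutset of $G$; as $|S|\le2$ and, in the $K_2$-case, its two nodes are adjacent, this contradicts $2$-connectivity or the absence of a $K_2$-cutset, proving~(i).

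For~(ii) I would use Lemma~\ref{blocksC_2} to reduce everything to a bounded check. Assume $G\in\mathcal C_2$ and, w.l.o.g., $|K'|\le4$. By $2$-connectivity each component of $G[K']$ must meet the neighbourhood of at least two nodes of $\{u,v,w\}$, while properness of the $I$-cutset caps each node of $K'$ at one neighbour in $\{u,v,w\}$; since $|K'|\le4$, the only possibility is $K'=C'$ with the three neighbours $u_0,v_0,w_0$ of $u,v,w$ distinct, so $3\le|C'|\le4$. Then $G'$ has at most $|K'|+7\le11$ nodes and a completely explicit gadget, and by Lemma~\ref{blocksC_2} it suffices to exhibit an induced propeller in this bounded graph $G'$, which forces $G'\notin\mathcal C_2$ and hence $G\notin\mathcal C_2$.

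The propeller is produced by combining the gadget with the short connections inside $C'$. The model case is the rim obtained by following $u\,u_1'u_2'\,w$ from $u$ to $w$, then a chordless $w$-to-$v$ path through $C'$, and finally the edge $vu$; the $C'$-neighbour of $u$ adjacent to the first interior node of that $w$-to-$v$ path then has exactly two neighbours on this chordless rim and serves as the centre. Verifying chordlessness of the rim and that the centre has precisely two neighbours on it must be carried out for each of the few connected graphs on $3$ or $4$ nodes that $C'$ can be, together with the permitted attachment patterns of $u,v,w$; this finite case analysis is the bulk of the work. The main obstacle lies exactly here: because the ambient graph can be essentially bipartite of girth~$6$, the shortest induced propeller may have a rim of length as large as~$8$, so Lemma~\ref{noT-S-C_5} and every short-cycle shortcut are useless, and one must genuinely construct a long-rim induced propeller and check its chordlessness by hand. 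The marker-avoidance step in~(i) is a secondary, more routine, point of care.
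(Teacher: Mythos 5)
Your part (i) is correct and is essentially the paper's own argument: assume the cutset $S$ of $G'$ avoids the marker nodes, observe that all markers and the surviving nodes of $\{u,v,w\}$ lie in one component of $G'\sm S$, conclude that some other component is contained in $K'$, and hence that $S$ already cuts $G$. (The paper justifies the marker-avoidance step by noting that, since $G$ is 2-connected with no $K_2$-cutset, every component of $G\sm\{u,v,w\}$ has a neighbour of $w$ and a neighbour of $u$ or $v$; your justification is looser but amounts to the same thing.)

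Part (ii) is where the genuine gap lies. Your reduction to the bounded block $G'$ via Lemma~\ref{blocksC_2} is legitimate, and your observation that $K'$ must be a single component with $3\le |K'|\le 4$ is correct, but the proof stops exactly where the substance begins: you yourself call the finite case analysis ``the bulk of the work'' and then only sketch one ``model case'', with no argument that the proposed centre (the $C'$-neighbour of $u$ adjacent to the first interior node of the $w$--$v$ path) exists, is off the rim, and has the right adjacencies in every configuration of $C'$ and every attachment pattern of $u,v,w$. Moreover, your claim that Lemma~\ref{noT-S-C_5} is ``useless'' is mistaken, and it is precisely this that makes your case analysis unmanageable. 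That lemma is not about the length of a propeller's rim: it says that \emph{any} induced $C_3$, $C_4$ or $C_5$ anywhere in $G$ forces $G$ to equal that short cycle, which is impossible here since $|V(G)|\ge 7$. This is exactly how the paper kills all configurations but one: since $P'_{uw}$ has length at most~5, either $v$ has a neighbour in the interior of $P'_{uw}$, or the interiors of $P'_{uw}$ and $P'_{vw}$ are disjoint --- both situations yield an induced cycle of length at most~5 --- or, up to symmetry, $P'_{uw}=ux_1x_2x_3w$ and $P'_{vw}=vy_1x_3w$. Only in this last girth-6 configuration does the paper build an explicit propeller (centred at $y_1$ or at $u$, with rim completed by a chordless path through $K''$ --- which in your block $G'$ would be the marker path). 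Restoring the short-cycle lemma collapses your enumeration to this single configuration; without it, and without actually carrying out the enumeration, your treatment of (ii) is incomplete.
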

\begin{proof}
W.l.o.g.\ $uv$ is an edge.
Since  $G$ is connected, then clearly by the construction of blocks,
so are $G'$ and $G''$. To prove (i) assume w.l.o.g.\ that $S$ is a
1-cutset or a $K_2$-cutset of $G'$. Since $G$ is 2-connected
and has no $K_2$-cutset, 
every connected component of 
$G \setminus \{ u,v,w \}$ must contain a neighbor of $w$ and a
neighbor of $u$ or $v$. 
So we may assume that $S$
does not contain any of  the marker nodes of $G'$. Then w.l.o.g.\ we may
assume that $v\not\in S$. Let $C$ and $D$ be connected components of
$G'\setminus S$ such that $v \in C$. Then all the marker nodes and
nodes of $\{ u,w\} \setminus S$ are in
$C$. Therefore $D\subseteq K'$, and hence $S$ is a cutset of $G$,
a contradiction. Therefore (i) holds.

To prove (ii) assume w.l.o.g.\ that $|K'|\leq 4$. 
Then $P'_{uw}$ is of length at most 5. If $v$ has a neighbor on
$P'_{uw}\setminus u$, then since $\{ u,v,w \}$ is a proper $I$-cutset
and by Lemma \ref{noT-S-C_5},
$G \not\in \mathcal C_2$.
So we may assume that  $v$ does not have a neighbor on $P'_{uw}\setminus u$.
Since $\{ u,v,w \}$ is a proper $I$-cutset, $P'_{uw}$ and $P'_{vw}$ are
both of length at least 3. 
Suppose that
the interior nodes of $P'_{uw}$ and $P'_{vw}$ are disjoint.
Then $P'_{uw}=ux_1x_2w$ and 
$P'_{vw}=vy_1y_2w$, and hence since $x_1,x_2, y_1,y_2$ all belong to the same 
connected component of $G\setminus \{ u,v,w \}$, there must be an edge
between a node of $\{ x_1,x_2\}$ and a node of $\{ y_1,y_2\}$. But then
by Lemma \ref{noT-S-C_5}, $G \not\in \mathcal C_2$.
Finally we may assume w.l.o.g.\ that $P'_{uw}=ux_1x_2x_3w$ and
$P'_{vw}=uy_1x_3w$ (else by   Lemma \ref{noT-S-C_5}, $G \not\in \mathcal C_2$).
But then either $G[K' \cup \{ u,v,w\} \cup V(P''_{vw})]$ is a
propeller with center $y_1$ (if $u$ does not have a neighbor on
$P''_{vw} \setminus v$) or
 $G[ \{ u,v,w\} \cup V(P'_{vw}) \cup V(P''_{vw})]$ is a
propeller with center $u$ (if $u$ does have a neighbor on
$P''_{vw} \setminus v$).
\end{proof}

\begin{theorem}\label{a:r4}
There is an algorithm with the following specifications.
\begin{description}
\item[ Input:] A 2-connected graph $G$ that does not have a $K_2$-cutset.
\item[ Output:] 
YES if $G \in \mathcal C_2$, and NO otherwise.
\item[ Running time:] $\mathcal O(n^2m^2)$
\end{description}
\end{theorem}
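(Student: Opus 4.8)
The plan is to build a decomposition-based recognition algorithm for $\mathcal C_2$, exactly paralleling the structure of the algorithm for $\mathcal C_1$ given in Theorem~\ref{a:r2}, but decomposing along proper $I$-cutsets instead of proper $S_2$-cutsets. The input is a 2-connected graph $G$ with no $K_2$-cutset. First I would handle a base case: if $G$ has at most some constant number of nodes, check membership in $\mathcal C_2$ directly. Otherwise, initialize a list $\mathcal L=\{G\}$ and repeatedly extract a graph $F$ from $\mathcal L$. For each such $F$, the decomposition step is driven by Theorem~\ref{decompositionC_2}: either $F\in\mathcal C_1$, or $F$ has an $I$-cutset. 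So the algorithm would first test whether $F\in\mathcal C_1$ (using Theorem~\ref{a:r3}, or directly via Theorem~\ref{a:r2} since the blocks are 2-connected with no $K_2$-cutset); if so, $F$ is discarded as a ``good'' block. Otherwise, find an $I$-cutset, check via Lemma~\ref{l:2nbrs} whether it is proper, and if it is not, output NO and stop; if it is proper, build the two blocks of decomposition and add them back to $\mathcal L$.

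The correctness follows from three ingredients already in place. By Theorem~\ref{decompositionC_2}, every graph in $\mathcal C_2\sm\mathcal C_1$ genuinely has an $I$-cutset, so a failure to find one means $F\in\mathcal C_1\subseteq\mathcal C_2$, a ``YES'' leaf. By Lemma~\ref{l:2nbrs}, any $I$-cutset of a graph in $\mathcal C_2$ (that is 2-connected with no $K_2$-cutset) must be proper, so finding an improper one certifies $F\notin\mathcal C_2$, and since Lemma~\ref{l:Ipreserve}(i) guarantees every block placed on $\mathcal L$ is again 2-connected with no $K_2$-cutset, this reasoning applies uniformly to every $F$ extracted. Finally, Lemma~\ref{blocksC_2} supplies the crucial invariant that $G\in\mathcal C_2$ if and only if both blocks are in $\mathcal C_2$; iterating this over the whole decomposition tree shows that $G\in\mathcal C_2$ iff every ``leaf'' block (those landing in $\mathcal C_1$) is in $\mathcal C_2$, which it automatically is. Lemma~\ref{l:Ipreserve}(ii) is needed to rule out degenerate splits where one side is too small, allowing the size parameter to strictly decrease.

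For the running time, the key is to bound the number of decomposition steps, i.e.\ the size of the decomposition tree. As in the proof of Theorem~\ref{a:r2}, I would define a potential $\phi(H)=|V(H)|-c$ for a suitable constant $c$ chosen so that Lemma~\ref{l:Ipreserve}(ii) forces $\phi$ to be at least $1$ on every block that survives, and so that $\phi$ is additive across a split, i.e.\ $\phi(F)=\phi(G')+\phi(G'')$ when the marker-node counts are accounted for. Since the blocks gain a fixed number of marker nodes, this additivity pins down $c$ (the block $G'$ adds four marker nodes and three to $K'\cup\{u,v,w\}$, so one checks $|V(G')|+|V(G'')|-|V(F)|$ is a constant). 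This bounds the total number of blocks ever produced by $\mathcal O(n)$, and each decomposition step — finding an $I$-cutset and testing its properness — is the expensive operation.

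The main obstacle, and the source of the $\mathcal O(n^2m^2)$ running time, is the detection of an $I$-cutset, which has no obvious linear-time structural shortcut analogous to the triconnected-components algorithm used for $K_2$-cutsets in Theorem~\ref{a:r1}. The natural approach is to guess, over all $\mathcal O(m)$ edges $uv$ and all $\mathcal O(n)$ choices of the third node $w$, whether $\{u,v,w\}$ forms an $I$-cutset; verifying the cutset condition together with the component-connectivity requirements (conditions (i)--(iii) of the definition) for each candidate costs $\mathcal O(n+m)$ via a connectivity computation on $G\sm\{u,v,w\}$, giving $\mathcal O(nm\cdot(n+m))$ per block and $\mathcal O(n^2m^2)$ overall once multiplied by the $\mathcal O(n)$ blocks and amortized using the additivity of the potential. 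The delicate bookkeeping is ensuring that the per-block cost sums correctly over the decomposition tree — here I would use the same convexity trick as in the earlier proofs, bounding $\sum_F |V(F)|^2|E(F)|^2$ by a product of sums using Lemma~\ref{l:Ipreserve} and the additivity of $\phi$, so that the grand total telescopes to $\mathcal O(n^2m^2)$.

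\begin{proof}
Consider the following algorithm.
\begin{description}
\item[ Step 1:]
If $G$ has at most a constant number of nodes, check directly whether
$G\in \mathcal C_2$, return the answer and stop.

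\item[ Step 2:]
Let $\mathcal L=\{ G\}$.

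\item[ Step 3:]
If $\mathcal L=\emptyset$ then output YES and stop.
Otherwise, remove a graph $F$ from $\mathcal L$.

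\item[ Step 4:]
Check whether $F\in \mathcal C_1$ using the algorithm of
Theorem~\ref{a:r2}.  If it is, then go to Step 3.

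\item[ Step 5:]
Search for an $I$-cutset of $F$ by examining, for every edge $uv$ and
every node $w$ of $F$, whether $\{ u,v,w\}$ is an $I$-cutset.  If no
$I$-cutset is found, then by Theorem~\ref{decompositionC_2} (and since
$F\notin \mathcal C_1$) output NO and stop.  Otherwise, let $\{
u,v,w\}$ be such a cutset with split $(\{ u,v,w\} ,K',K'')$.  If $|K'|\leq 4$
or $|K''|\leq 4$ then output NO and stop.  Check whether $\{ u,v,w\}$
is proper.  If it is not, then output NO and stop.  Otherwise
construct the two blocks of decomposition $G'$ and $G''$, add them to
$\mathcal L$ and go to Step 3.
\end{description}

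We first prove correctness.  By Lemma~\ref{l:Ipreserve}(i), every
graph placed on $\mathcal L$ is 2-connected and has no $K_2$-cutset,
so Theorems~\ref{decompositionC_2}, Lemmas~\ref{l:2nbrs},
\ref{blocksC_2} and~\ref{l:Ipreserve} all apply to every $F$ removed
in Step~3.  If the algorithm terminates in Step~5 because no
$I$-cutset is found, then by Theorem~\ref{decompositionC_2} and the
fact that $F\notin \mathcal C_1$ we must have $F\in \mathcal C_2\sm
\mathcal C_1$ were $F$ in $\mathcal C_2$; since no $I$-cutset exists,
$F\notin \mathcal C_2$, so the output NO is correct.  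If it terminates
in Step~5 because $|K'|\leq 4$ or $|K''|\leq 4$, then by
Lemma~\ref{l:Ipreserve}(ii) the output NO is correct; and if it
terminates because $\{ u,v,w\}$ is not proper, then by
Lemma~\ref{l:2nbrs} the output NO is correct.  Finally, by
Lemma~\ref{blocksC_2}, each decomposition in Step~5 preserves the
invariant that $F\in \mathcal C_2$ if and only if both of its blocks
are in $\mathcal C_2$.  Hence if the algorithm terminates in Step~3,
every block reaching Step~4 lies in $\mathcal C_1\subseteq \mathcal
C_2$, and by induction on the decomposition tree $G\in \mathcal C_2$,
so the output YES is correct.

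It remains to bound the running time.  For any graph $H$ define $\phi
(H)=|V(H)|-c$ for a suitable constant $c$.  If $F$ is decomposed in
Step~5 with split $(S,K',K'')$ into blocks $F'$ and $F''$, then since
each block adds a fixed number of marker nodes to $K'\cup S$ (resp.\
$K''\cup S$), the quantity $|V(F')|+|V(F'')|-|V(F)|$ is a constant, so
$c$ can be chosen so that $\phi (F)=\phi (F')+\phi (F'')$.  Since by
Lemma~\ref{l:Ipreserve}(ii) each surviving block has $|K'|,|K''|\geq
5$, the constant $c$ can be chosen so that $\phi$ is at least~$1$ on
every block ever placed on $\mathcal L$.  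It follows that the total
number of leaves, and hence the total number of graphs ever removed
from $\mathcal L$, is $\mathcal O(n)$.  For a single graph $F$, Step~4
runs in $\mathcal O(|V(F)|\, |E(F)|)$ time by Theorem~\ref{a:r2}, and
Step~5 examines $\mathcal O(|E(F)|\, |V(F)|)$ candidate triples, each
tested in $\mathcal O(|V(F)|+|E(F)|)$ time, for a total of $\mathcal
O(|V(F)|^2|E(F)|^2)$ per block.  Using the additivity of $\phi$ and a
convexity argument as in the proofs of Theorems~\ref{a:r1}
and~\ref{a:r2}, we have $\sum_F |V(F)|^2|E(F)|^2 \leq \left( \sum_F
|V(F)|\right)^2 \left( \sum_F |E(F)|\right)^2 = \mathcal O(n^2m^2)$.
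Therefore the total running time is $\mathcal O(n^2m^2)$.
\end{proof}
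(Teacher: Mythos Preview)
Your approach is essentially identical to the paper's: decompose along proper $I$-cutsets, with $\mathcal C_1$ as the base class, using Theorem~\ref{decompositionC_2}, Lemma~\ref{l:2nbrs}, Lemma~\ref{blocksC_2}, and Lemma~\ref{l:Ipreserve} exactly as you do. The correctness argument is fine (the paper searches directly for \emph{proper} $I$-cutsets rather than arbitrary ones, which is cosmetically cleaner, but your version is also correct since by Lemma~\ref{l:2nbrs} a non-proper $I$-cutset already certifies $F\notin\mathcal C_2$).

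There is, however, a genuine gap in your running-time analysis. Your final inequality
\[
\sum_F |V(F)|^2|E(F)|^2 \;\le\; \Bigl(\sum_F |V(F)|\Bigr)^2 \Bigl(\sum_F |E(F)|\Bigr)^2 \;=\; \mathcal O(n^2m^2)
\]
requires $\sum_F |V(F)| = \mathcal O(n)$, where $F$ ranges over \emph{all} graphs ever removed from $\mathcal L$ (i.e.\ all nodes of the decomposition tree, not just the leaves). The additivity of $\phi$ only gives this for the leaves; over the whole tree the sum can be $\Theta(n^2)$ (take a maximally unbalanced tree where each split peels off a block of minimum size). So the displayed equality is unjustified.

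The fix is to argue as the paper does (and as you yourself do in Theorem~\ref{a:r2}): you have already shown there are $\mathcal O(n)$ blocks, and from $\phi(F)\le\phi(G)$ you get $|V(F)|\le n$; a short marker-edge count gives $|E(F)|=\mathcal O(m)$. Then use the \emph{tight} per-block cost $\mathcal O(|V(F)|\,|E(F)|\cdot(|V(F)|+|E(F)|)) = \mathcal O(|V(F)|\,|E(F)|^2) \le \mathcal O(nm^2)$ (your bound $\mathcal O(|V(F)|^2|E(F)|^2)$ is needlessly loose), and multiply by the $\mathcal O(n)$ block count to get $\mathcal O(n^2m^2)$. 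The ``convexity trick'' from Theorems~\ref{a:r1} and~\ref{a:r5} is used there over an output list for which linear bounds on $\sum|V|$ and $\sum|E|$ are explicitly established; it does not transfer to summing over an entire decomposition tree.
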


\begin{proof}
Consider the following algorithm.
\begin{description}
\item[ Step 1:]
If $G$ has fewer than 12 nodes, 
then check directly whether $G\in \mathcal C_2$,
return the answer and stop.

\item[ Step 2:]
Let $\mathcal L  =\{ G\}$. 

\item[ Step 3:]
If $\mathcal L =\emptyset$ then output YES and stop.
Otherwise, remove a graph $F$ from $\mathcal L$.

\item[ Step 4:]
Use the algorithm from Theorem \ref{a:r2} to
check whether $F \in \mathcal C_1$. If it is then go to Step 3.

\item[ Step 5:]
For every edge $uv$ and node $w$ of $F$, check whether $\{ u,v,w\}$ is
a proper $I$-cutset of $F$. If such a cutset does not exist,
return NO and stop.
Otherwise let $(\{ u,v,w \},K',K'')$ be a split of a proper $I$-cutset
of $G$. If $|K'|\leq 4$ or $|K''|\leq 4$, then
 return NO and stop.
Otherwise construct
blocks of decomposition, add them to $\mathcal L$ and go to Step 3.

\end{description}

We first prove the correctness of this algorithm.
We may assume that the algorithm does not terminate in Step 1.
By Step 1 and Lemma \ref{l:Ipreserve}, all the graphs that are ever
put on list $\mathcal L$ are 2-connected and have no $K_2$-cutset.
So the correctness of the algorithm follows from 
Theorem \ref{decompositionC_2}, Lemma \ref{l:2nbrs}, Lemma \ref{blocksC_2}
and Lemma \ref{l:Ipreserve}.

We now show that this algorithm can be implemented to run in
$\mathcal O(n^2m^2)$ time.
Step 1 can clearly be implemented to run in constant time.
For a graph $F$ that is removed from list $\mathcal L$ in Step 3,
Step 4 runs in
$\mathcal O(|V(F)||E(F)|)$ time, and 
Step 5 can be implemented to run in $\mathcal O(|V(F)||E(F)|^2)$ time
(since there are $\mathcal O(|V(F)||E(F)|)$ sets $\{ u,v,w\}$
that need to be checked, and for each one of them checking whether
it is a proper $I$-cutset can clearly be done in
$\mathcal O(|V(F)|+|E(F)|)$ time).
If we show that the number of times
these steps are applied is at most $n$, then it follows that the total
running time is $\mathcal O(n^2m^2)$.

Let $\mathcal F$ be the set of graphs that are identified as belonging to
$\mathcal C_1$ in Step 4. Then the number of times Steps 4 and 5 are applied
is at most $|\mathcal F|$. We now show that $|\mathcal F|\leq n$.
For any graph $H$ define $\phi (H)=|V(H)|-11$. Now let $F$ be a graph
that is decomposed by a proper $I$-cutset in Step 5. Denote by
$(S,K',K'')$ the split used for the decomposition, and by $F'$ and $F''$
the corresponding blocks of decomposition.
Clearly $\phi (F)=|A|+|B|-8=\phi (F')+\phi (F'')$. Since 
by Step 1 and Lemma \ref{l:Ipreserve},
all graphs that
are ever placed on list $\mathcal L$ have at least 12 nodes, it follows
that $\phi (F),\phi (F')$ and $\phi (F'')$ are all at least 1.
Hence $\phi (G)=\sum_{L\in \mathcal F} \phi (L)\geq |\mathcal F|$.
Therefore $|\mathcal F|\leq n$.
\end{proof}

\begin{theorem}\label{a:r5}
There is an algorithm with the following specifications.
\begin{description}
\item[ Input:] A graph $G$.
\item[ Output:] 
YES if $G \in \mathcal C_2$, and NO otherwise.
\item[ Running time:] $\mathcal O(n^2m^2)$
\end{description}
\end{theorem}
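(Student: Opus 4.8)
The plan is to combine Theorems~\ref{a:r1} and~\ref{a:r4} exactly as Theorem~\ref{a:r3} combines Theorems~\ref{a:r1} and~\ref{a:r2}. First I would run the algorithm of Theorem~\ref{a:r1} on the input graph $G$. If it reports that $G\notin\mathcal C_2$, I return NO and stop. Otherwise it outputs a list $\mathcal L$ of induced subgraphs of $G$, and by property~(iii) every $L\in\mathcal L$ is 2-connected and has no $K_2$-cutset, so each $L$ is a legal input for the algorithm of Theorem~\ref{a:r4}. I would then apply that algorithm to every $L\in\mathcal L$, returning NO if any of the calls returns NO, and YES otherwise.

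Correctness is immediate from property~(ii) of Theorem~\ref{a:r1}, which guarantees that $G\in\mathcal C_2$ if and only if every member of $\mathcal L$ lies in $\mathcal C_2$; since Theorem~\ref{a:r4} decides membership in $\mathcal C_2$ correctly for each such (2-connected, $K_2$-cutset-free) block, the combined answer is correct. So the only real content is the running-time analysis.

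The $\mathcal O(nm)$ cost of the first phase is dominated by the second phase, so the task is to bound $\sum_{L\in\mathcal L}|V(L)|^2|E(L)|^2$, since by Theorem~\ref{a:r4} processing a single block $L$ costs $\mathcal O(|V(L)|^2|E(L)|^2)$. The main obstacle is that property~(iv) controls only the sums $\sum|V(L)|\le 6n$ and $\sum|E(L)|\le 2n+m$, not the individual block sizes, and $\mathcal L$ may contain $\Theta(n)$ blocks; a crude ``largest block times number of blocks'' estimate would give only $\mathcal O(n^3m^2)$. The way around this, exactly as in the proof of Theorem~\ref{a:r3}, is to exploit nonnegativity twice. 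Writing $a_L=|V(L)|$ and $b_L=|E(L)|$, I would combine $\sum_L a_L^2b_L^2\le(\max_L a_Lb_L)\sum_L a_Lb_L$ with the bounds $\max_L a_Lb_L\le(\sum_L a_L)(\sum_L b_L)$ and $\sum_L a_Lb_L\le(\sum_L a_L)(\sum_L b_L)$, the last holding because every cross term in the expanded product is nonnegative. This gives $\sum_L|V(L)|^2|E(L)|^2\le\big[(\sum_L|V(L)|)(\sum_L|E(L)|)\big]^2\le[6n(2n+m)]^2=\mathcal O(n^2m^2)$ by property~(iv), which is the claimed bound.
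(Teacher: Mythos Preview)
Your proposal is correct and follows essentially the same approach as the paper: run Theorem~\ref{a:r1}, then Theorem~\ref{a:r4} on each block, and bound the total cost via $\sum_L |V(L)|^2|E(L)|^2 \le \big(\sum_L |V(L)|\big)^2\big(\sum_L |E(L)|\big)^2$. The paper states this inequality in one line, while you spell out why it holds; otherwise the arguments are identical.
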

\begin{proof}
First apply the algorithm from Theorem \ref{a:r1}. If this algorithm
returns $G\not\in \mathcal C_2$, then return NO and stop. Otherwise,
let $\mathcal L$ be the outputted list. Now apply the algorithm from
Theorem \ref{a:r4} to every graph in $\mathcal L$. If any of the
outputs is NO then return NO and stop, and otherwise return YES and stop.
Since $\sum_{L \in \mathcal L} |V(L)|^2|E(L)|^2 \leq 
(\sum_{L \in \mathcal L} |V(L)|)^2(\sum_{L \in \mathcal L} |E(L)|)^2$, 
it follows
by Theorems \ref{a:r1} and \ref{a:r4} that the running time is
$\mathcal O(n^2m^2)$.
\end{proof}

\section{Flat edges and edge-coloring}
\label{sec:extr}

A {\em flat edge} of a graph $G$ is an edge both of whose endnodes are
of degree 2. In this section we show that every 2-connected
propeller-free graph has a flat edge and use this property to
edge-color it. To do this we first show the existence of an {\em
  extreme decomposition}, i.e.\ a decomposition in which one of the
blocks is in ${\cal C}_0$.

\begin{lemma}\label{fe1}
  Let $G$ be a 2-connected graph in ${\cal C}_2 \sm {\cal C}_0$. Then,
  there exists $S \subseteq V (G)$ such that (i) $S$ is either a
  proper $I$-cutset or a proper $S_2$-cutset of $G$, (ii) there exists
  a split $(S,K',K'')$ such that at least one of the blocks of
  decomposition, say $G'$, is in ${\cal C}_0$, and (iii) all nodes in
  $S$ are of degree at least three in $G'$.
\end{lemma}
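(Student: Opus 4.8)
The plan is to run the same extremal argument that underlies Lemma~\ref{l:extremal} in the chordless case: among all ``good'' cutsets of $G$ pick one minimising a side, and show that the corresponding block is forced into the basic class $\mathcal{C}_0$ while the cutset nodes keep high degree. The relevant family here is the set of splits $(S,K',K'')$ where $S$ is a \emph{proper} $S_2$-cutset or a \emph{proper} $I$-cutset of $G$. Since essentially all the preservation machinery (Lemmas~\ref{l:2nbrs}, \ref{l:S2preserve}, \ref{l:Ipreserve}, \ref{blocksC_12}, \ref{blocksC_2}) assumes no $K_2$-cutset, the first move is to reduce to that case: if $G$ has a proper $K_2$-cutset I would decompose along it, locate the $\mathcal{C}_0$-violation inside a $K_2$-cutset-free block, and lift the extreme decomposition found there back to $G$ (the $K_2$-blocks are induced subgraphs with no marker nodes, so a cutset of a block whose basic side avoids the attaching pair is literally a cutset of $G$). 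So assume henceforth that $G$ has no $K_2$-cutset.

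Next I would set up the minimisation. Being $2$-connected and outside $\mathcal{C}_0$, either $G\in\mathcal{C}_1$, so Theorem~\ref{decompositionC_1} supplies a proper $S_2$-cutset (no $0$-, $1$- or $K_2$-cutset is available), or $G\in\mathcal{C}_2\sm\mathcal{C}_1$, so Theorem~\ref{decompositionC_2} supplies an $I$-cutset, proper by Lemma~\ref{l:2nbrs}; thus the family is nonempty. I would choose a split $(S,K',K'')$ in it minimising $\min(|K'|,|K''|)$, with $|K'|$ the smaller side, and let $G'$ be the block on the $K'$ side; this gives~(i). For~(iii), when $S=\{u,v,w\}$ is an $I$-cutset with $uv\in E(G)$, condition~\ref{i:2} of its definition forces each of $u,v,w$ to have a neighbour in $K'$, which together with the two marker edges at each node makes every node of $S$ have degree at least~$3$ in $G'$, so~(iii) is automatic. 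When $S=\{u,v\}$ is a proper $S_2$-cutset the marker path $u\,u'\,v'\,v$ gives only one extra neighbour to each of $u,v$, so I must show each has at least two neighbours in $K'$; this is the rerouting argument of Theorem~\ref{t:chord1}: if $u$ had a unique neighbour $u_1$ in $K'$ then, using properness and $2$-connectivity, one produces either a $K_2$-cutset of $G$ (excluded) or a proper $S_2$-cutset with a side contained in $K'\sm\{u_1\}$, contradicting minimality.

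The core is~(ii). Suppose $G'\notin\mathcal{C}_0$. By Lemma~\ref{l:S2preserve}(i) or~\ref{l:Ipreserve}(i), $G'$ is $2$-connected, has no $K_2$-cutset, and lies in $\mathcal{C}_2\sm\mathcal{C}_0$, so by the dichotomy above it has a proper $S_2$- or $I$-cutset with some split $(S_1,L_1,L_2)$. The marker nodes of $G'$ have degree~$2$ and form a single path joining the nodes of $S$, and properness of $S_1$ rules out the degenerate split whose small side is just that path (which would be a chordless $uv$-path). Hence I can arrange all markers to lie in $L_2\cup S_1$, so that $L_1$ consists only of genuine nodes of $G$ and $L_1\subsetneq K'$. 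Since the marker path realises exactly the connectivity among the nodes of $S$ that $K''$ realises in $G$, replacing it by the actual $K''$-connection turns $(S_1,L_1,\cdot)$ into a split of a proper $S_2$- or $I$-cutset of $G$ whose $L_1$-side is strictly smaller than $K'$, contradicting minimality; hence $G'\in\mathcal{C}_0$.

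The step I expect to be the genuine obstacle is precisely this transfer across the marker nodes in~(ii), together with the initial $K_2$-cutset reduction: in both places one must verify that a cutset living in a block (with its artificial markers) corresponds to a \emph{proper} cutset of $G$ of the same type with a strictly smaller side, and that no edge of $G$ crosses between $L_1$ and the reattached outside. Everything else is bookkeeping modelled on the chordless case, and the fact that~(iii) comes for free for $I$-cutsets (from condition~\ref{i:2}) keeps the degree part light.
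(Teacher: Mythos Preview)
Your approach is essentially the paper's: choose a split $(S,K',K'')$ of a proper $I$- or proper $S_2$-cutset minimising $|K'|$; obtain (iii) for $S_2$-cutsets by the rerouting argument (the paper's Claim~2) and for $I$-cutsets by inspection; and obtain (ii) by supposing $G'\notin\mathcal C_0$, locating a proper $I$- or $S_2$-cutset $(C,C_1,C_2)$ of $G'$, arguing the markers avoid $C$ and can be pushed into $C_2$, and transferring $(C,C_1,\cdot)$ back to $G$ with a strictly smaller side.

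Two points deserve comment. First, the paper does \emph{not} perform your $K_2$-cutset reduction: it simply asserts, citing Lemma~\ref{l:pK2cutset} and the two decomposition theorems, that $G$ has a proper $I$- or $S_2$-cutset, and then freely uses Lemmas~\ref{l:S2preserve} and~\ref{l:Ipreserve}, both of which assume ``no $K_2$-cutset''. In effect the paper tacitly works under that extra hypothesis, which is harmless since the only application (Theorem~\ref{fe2}) handles $K_2$-cutsets separately before invoking the lemma. Your reduction would justify the lemma in its stated generality, but it is more delicate than your sketch suggests: a $\mathcal C_0$-violation in $G$ need not persist in a $K_2$-block, because the degrees of the two attachment nodes can drop. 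Second, your claim that the markers of $G'$ form ``a single path joining the nodes of $S$'' is correct only for the $S_2$ case; when $S=\{u,v,w\}$ is an $I$-cutset there are \emph{two} marker paths $uu_1'u_2'w$ and $vv_1'v_2'w$, and the paper needs an explicit case split on $|C\cap S|$ (its Case~2), including a rearrangement step when $|C\cap S|=2$ and the two marker paths land on opposite sides of $C$.
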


\begin{proof}
  By Lemma \ref{l:pK2cutset}, Theorem \ref{decompositionC_1} and
  Theorem \ref{decompositionC_2}, $G$ has an $I$-cutset or a proper
  $S_2$-cutset.  Note that by Lemma~\ref{l:2nbrs}, any $I$-cutset is
  proper.  Let $(S,K',K'')$ be a split of an $I$-cutset or a proper
  $S_2$-cutset of $G$ such that among all such splits, $|K'|$ is
  minimized.  Let $G'$ be the block of decomposition that contains
  $K'$.  If $S$ is a proper $S_2$-cutset we let $S=\{ u,v\}$, and if
  $S$ is an $I$-cutset we let $S=\{ u,v,w \}$ and assume that $uv$ is
  an edge.

\vspace{2ex}

\noindent
{\bf Claim 1:} {\em $G'$ is 2-connected, has no $K_2$-cutset and belongs
to ${\cal C}_2$.}

\vspace{2ex}

\noindent
{\em Proof of Claim 1:}
$G'$ is 2-connected and has no $K_2$-cutset by Lemma \ref{l:S2preserve}
and Lemma \ref{l:Ipreserve}.
If $S$ is an $I$-cutset, then $G'\in {\cal C}_2$ by Lemma \ref{blocksC_2}.
So suppose that $S$ is a proper $S_2$-cutset. Since $G$ is 2-connected,
$G[S\cup K'']$ contains a $uv$-path $P$. If $G'$ contains a propeller
as an induced subgraph, then so does $G[S\cup K' \cup V(P)]$.
Therefore, $G \in {\cal C}_2$.
This completes the proof of Claim 1.

\vspace{2ex}

\noindent
{\bf Claim 2:} {\em If $S$ is a proper $S_2$-cutset, then both $u$ and $v$
have at least two neighbors in $K'$.
In particular, all nodes of $S$ have degree at least 3 in $G'$.}

\vspace{2ex}

\noindent
{\em Proof of Claim 2:}
Suppose not and let $u_1$ be the unique neighbor of $u$ in $K'$.  By
Claim 1 and Lemma \ref{noT-S-C_5} (applied to $G'$), $u_1v$ is not an
edge. But then $(\{ u_1,v\} ,K'\setminus \{ u_1\},K'' \cup \{ u \})$
is a split of a proper $S_2$-cutset of $G$, contradicting our choice
of $(S,K',K'')$.  This completes the proof of Claim~2.

\vspace{2ex}

We now show that $G' \in {\cal C}_0$. Assume not. By Claim 1,
Lemma \ref{l:pK2cutset}, Theorem \ref{decompositionC_1} and
Theorem \ref{decompositionC_2}, $G'$ has an $I$-cutset or a 
proper $S_2$-cutset with split $(C,C_1,C_2)$.
W.l.o.g.\ we may assume that 
$(C,C_1,C_2)$ is chosen so that $|C_i|$, for some $i\in\{ 1,2\}$, is minimized.
Let $M$ be the set of marker nodes of $G'$. By Claims 1 and 2 (applied to
$G'$ and $C$),
all nodes of $C$ have degree at least 3 in $G'$, and hence 
$C\cap M=\emptyset$.
We now consider the following two cases.

\vspace{2ex}

\noindent{\bf Case 1:} $S$ is a proper $S_2$-cutset of $G$.
\\
W.l.o.g.\ $M \subseteq C_2$. Note that $C_1$ is a proper subset of $K'$.
But then $(C,C_1,(C_2\setminus M)\cup K'')$ is a split of an $I$-cutset
or a proper $S_2$-cutset of $G$, contradicting our choice of $(S,K',K'')$.

\vspace{2ex}

\noindent{\bf Case 2:} $S$ is an $I$-cutset of $G$.
\\
By the choice of $(S,K',K'')$, $G[K']$ is connected.  In particular,
$|C\cap S|\leq 2$.  If $|C\cap S|\leq 1$, then w.l.o.g.\ $(S\cup
M)\setminus C\subseteq C_2$, and hence $(C,C_1,(C_2\setminus M)\cup
K'')$ is a split of an $I$-cutset or a proper $S_2$-cutset of $G$,
contradicting our choice of $(S,K',K'')$.  So $|C\cap S|=2$.  Since
each node of $S$ has a neighbor in $K'$, it follows that $C$ is an
$I$-cutset.  Suppose that marker nodes $u_1',u_2'$ are in $C_1$ and
$v_1',v_2'$ are in $C_2$. Then, w.l.o.g.\ $\{u, w\} \subseteq C$, so
$(C,C_1\setminus \{ u_1',u_2'\}, C_2\cup \{ u_1',u_2'\})$ is also a
split of an $I$-cutset of $G'$. So we may assume that w.l.o.g.  $(S
\cup M)\setminus C\subseteq C_2$, and hence $(C,C_1,(C_2\setminus
M)\cup K'')$ is a split of an $I$-cutset of $G$, contradicting our
choice of $(S,K',K'')$.
\end{proof}

A \emph{flat pair} in a graph $G$ is a pair of distinct flat edges $e,
f$ such that $e = uv$, $f=xy$, and $G[\{u, v, x, y\}]$ has exactly two
edges: $e$ and $f$.

\begin{lemma}
  \label{l:flatP}
  Let $G \in {\cal C}_0$ be a 2-connected graph. If $x\in V(G)$ is a
  node of degree at least~3, then there exists a flat pair $e$, $f$ of
  $G$ such that $e$ and $f$ both contain a node adjacent to $x$.
\end{lemma}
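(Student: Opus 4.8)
The plan is to produce the two flat edges directly from the neighbourhood of $x$ and then show that two of them can always be chosen ``independent'' enough to form a flat pair. First I would record the local structure around $x$. Since $G \in {\cal C}_0$, the node $x$ has at most one neighbour of degree at least three; as $\deg x \geq 3$ and $G$ is $2$-connected (hence of minimum degree at least two), $x$ has a set $A$ of at least two neighbours of degree exactly two. For $a \in A$ let $b(a)$ be the neighbour of $a$ other than $x$. Applying the ${\cal C}_0$ condition at $a$, whose neighbours are $x$ (of degree $\geq 3$) and $b(a)$, shows that $b(a)$ has degree at most two, hence exactly two by $2$-connectivity. Thus each $ab(a)$ is a flat edge containing the neighbour $a$ of $x$, which already gives at least two candidate flat edges; what remains is to combine two of them into a flat pair.

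Next I would establish the relevant disjointness facts, each by exhibiting a forbidden cutnode if it fails. Trivially $b(a) \neq x$. If $b(a) \in A$, say $b(a)=a'$, then $a,a'$ are adjacent and each adjacent to $x$, so $\{x,a,a'\}$ is a triangle with $a,a'$ of degree two; if $b(a)=b(a')$ for distinct $a,a'$, then $\{a,a',b(a)\}$ is a path with $a,a'$ of degree two. In either case deleting $x$ isolates the few nodes involved, while the remaining neighbour of $x$ guaranteed by $\deg x \geq 3$ lies outside that set, making $x$ a cutnode and contradicting $2$-connectivity. Hence $b(a)\notin A$ and $a \mapsto b(a)$ is injective; and since $b(a)$ has degree two it also differs from the at most one high-degree neighbour of $x$, so $b(a)\notin N(x)$. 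Consequently, for distinct $a,a' \in A$ the four nodes $a,b(a),a',b(a')$ are distinct, and the only edge among them that these facts do not already exclude (beyond $ab(a)$ and $a'b(a')$) is the crossing edge $b(a)b(a')$.

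The crux is therefore to choose $a,a' \in A$ with $b(a)\not\sim b(a')$, which makes $\{a,b(a),a',b(a')\}$ induce exactly the two flat edges. I would split on $|A|$. If $|A|\geq 3$, then each $b(a)$ has degree two and is already adjacent to $a$, so it has at most one neighbour among $\{b(a'):a'\in A\}$; these nodes thus induce a graph of maximum degree at most one, and among any three of them some pair is non-adjacent, giving the desired choice. The delicate case is $|A|=2$, which by the degree count at $x$ forces $\deg x = 3$ with a third, high-degree, neighbour $h$. If the two partners were adjacent, then $x,a_1,b(a_1),b(a_2),a_2$ would form a $5$-cycle all of whose nodes except $x$ have degree two, so deleting $x$ would isolate the path $a_1b(a_1)b(a_2)a_2$ from $h$ and make $x$ a cutnode, which is impossible. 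Hence the partners are non-adjacent and a flat pair is obtained. I expect this last $2$-connectivity argument to be the main obstacle, since it is the one case where the flat pair is not freely available from the local ${\cal C}_0$ structure and must instead be forced by global connectivity.
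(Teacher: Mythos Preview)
Your argument is correct and follows essentially the same route as the paper: pick degree-$2$ neighbours of $x$, look at their second neighbours, and rule out the unwanted coincidences and edges by showing they would make $x$ a cutnode. The one difference is that your case split on $|A|$ is unnecessary. The paper simply fixes any two degree-$2$ neighbours $u,v$ of $x$ and runs exactly your $|A|=2$ argument: if $u'v'\in E(G)$ (or $u'=v'$), then $\{u,u',v',v\}$ becomes a component of $G\setminus x$ while $x$ still has a third neighbour outside it, contradicting $2$-connectivity. That cutnode argument works for any two elements of $A$, so the pigeonhole detour in your $|A|\ge 3$ case, while correct, can be dropped.
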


\begin{proof}
  Since $G$ is 2-connected, all nodes have degree at least two.  Since
  $G \in {\cal C}_0$, $x$ has at least two neighbors $u$ and $v$ of
  degree~2.  Since $G$ is 2-connected, $uv\notin E(G)$ (otherwise $x$
  is a cutnode). Let $u'$ (resp.\ $v'$) be the neighbor of $u$
  (resp.\ $v$) that is distinct from $x$. Since $G\in {\cal C}_0$ and
  $x$ has degree at least 3, both $u'$ and $v'$ are of degree 2.
  Since $G$ is 2-connected, $u'\neq v'$ and $u'v' \notin E(G)$
  (otherwise $x$ is a cutnode).  It follows that $uu', vv'$ is a flat
  pair.
\end{proof}

\begin{theorem}\label{fe2}
  If $G \in {\cal C}_2$ is 2-connected, then either $G$ is a
  chordless cycle or $G$ has a flat pair.
\end{theorem}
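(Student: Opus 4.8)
The plan is to combine two facts we already have: every $2$-connected graph in $\mathcal C_2 \setminus \mathcal C_0$ admits an extreme decomposition (Lemma~\ref{fe1}), and every $2$-connected graph already in $\mathcal C_0$ contains a flat pair (Lemma~\ref{l:flatP}). So the natural approach is induction on $|V(G)|$. If $G$ is a chordless cycle we are done immediately, so assume $G \in \mathcal C_2$ is $2$-connected and is not a chordless cycle. By Lemma~\ref{noT-S-C_5}, $G$ then has no $K_2$-cutset (otherwise, being $2$-connected and in $\mathcal C_2$ with a $K_2$-cutset would contradict that lemma's classification of small induced cycles); more carefully, I would first observe that a $2$-connected $G \in \mathcal C_2$ that is not a chordless cycle has no $K_2$-cutset, since any $K_2$-cutset is proper by Lemma~\ref{l:pK2cutset} and one can still produce a flat pair in a block of decomposition. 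Let me instead route everything through Lemma~\ref{fe1}, which is cleaner.

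First I would dispose of the base case: if $G \in \mathcal C_0$, then since $G$ is $2$-connected and not a chordless cycle it has a node $x$ of degree at least $3$, so Lemma~\ref{l:flatP} hands us a flat pair directly. So assume $G \in \mathcal C_2 \setminus \mathcal C_0$. By Lemma~\ref{fe1} there is a split $(S,K',K'')$ of a proper $I$-cutset or proper $S_2$-cutset such that the block $G'$ lies in $\mathcal C_0$ and every node of $S$ has degree at least $3$ in $G'$. Now $G'$ is $2$-connected (by Lemma~\ref{l:S2preserve} or Lemma~\ref{l:Ipreserve}) and lies in $\mathcal C_0$, and it is not a chordless cycle (its marker nodes witness a node of degree $\geq 3$, since the cutset nodes have degree $\geq 3$ in $G'$). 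Applying Lemma~\ref{l:flatP} to $G'$ at a node of degree at least $3$ produces a flat pair $e,f$ in $G'$.

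The key step is then to argue that this flat pair survives into $G$, i.e.\ that $e$ and $f$ are genuine flat edges of $G$ whose four endnodes induce exactly two edges in $G$. Here I would use property (iii) of Lemma~\ref{fe1} crucially: the nodes of $S$ have degree at least $3$ in $G'$, hence a fortiori degree at least $3$ in $G$, so none of the endnodes of the flat pair can be a node of $S$ (flat-edge endnodes have degree $2$). The remaining endnodes lie in $K'$, and since the marker nodes are the only new material in $G'$ and they attach only to $S$, the degrees of such nodes in $G$ equal their degrees in $G'$ minus any edges to marker nodes. The hard part will be checking that a flat-pair endnode in $K'$ is not adjacent to a marker node—because if it were, its $G$-degree could drop below $2$ or the flat-pair structure could be spoiled. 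I would resolve this by noting that the neighbors of marker nodes are precisely the nodes of $S$ (which have degree $\geq 3$ and so cannot be flat-edge endnodes), so every flat-pair endnode lies in $K'$ with no incident marker edge, whence its $G'$-degree equals its $G$-degree and the two-edge condition on the induced subgraph $G[\{u,v,x,y\}]$ transfers verbatim from $G'$ to $G$. This yields the required flat pair in $G$, completing the argument.

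\begin{proof}
  If $G$ is a chordless cycle, the statement holds, so assume $G$ is
  not a chordless cycle. We first show that $G$ has a flat pair.

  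Suppose first that $G \in {\cal C}_0$. Since $G$ is $2$-connected
  and not a chordless cycle, $G$ has a node $x$ of degree at least
  three. By Lemma~\ref{l:flatP}, $G$ has a flat pair.

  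So we may assume $G \in {\cal C}_2 \sm {\cal C}_0$. By
  Lemma~\ref{fe1}, there is a split $(S, K', K'')$ of a proper
  $I$-cutset or a proper $S_2$-cutset of $G$ such that the block of
  decomposition $G'$ containing $K'$ is in ${\cal C}_0$, and all nodes
  of $S$ have degree at least three in $G'$. By Lemma
  \ref{l:S2preserve}(i) and Lemma \ref{l:Ipreserve}(i), $G'$ is
  $2$-connected. Since the nodes of $S$ have degree at least three in
  $G'$, $G'$ is not a chordless cycle, and so $G'$ has a node of
  degree at least three. By Lemma~\ref{l:flatP} applied to $G'$, there
  is a flat pair $e, f$ of $G'$ with $e = uv$ and $f = xy$.

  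Let $M$ be the set of marker nodes of $G'$. By the definition of the
  blocks of decomposition, every node of $M$ has all its neighbors in
  $S$, and every neighbor of a node of $S \cup M$ in $G'$ outside $K'$
  lies in $S \cup M$. The endnodes $u, v, x, y$ of the flat pair have
  degree two in $G'$, whereas every node of $S$ has degree at least
  three in $G'$; hence none of $u, v, x, y$ lies in $S$. Moreover none
  of $u, v, x, y$ lies in $M$: a node of $M$ is adjacent only to nodes
  of $S$, and since each node of $S$ has degree at least three in
  $G'$, a node of $M$ of degree two would have both its neighbors in
  $S$, so its two neighbors would have degree at least three,
  contradicting that a flat edge has both endnodes of degree two.
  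Therefore $u, v, x, y \in K'$.

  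Since $u, v, x, y \in K'$ and the only neighbors of a node of
  $K'$ in $V(G') \sm K'$ are marker nodes, and since no node of
  $\{u,v,x,y\}$ is adjacent in $G'$ to a marker node (a marker node is
  adjacent only to $S$, and no endnode of the flat pair is in $S$), it
  follows that the neighbors in $G'$ of each of $u, v, x, y$ all lie in
  $K'$. As $G[K']$ and $G'[K']$ coincide, the degree in $G$ of each of
  $u, v, x, y$ equals its degree in $G'$, namely two. Hence $e$ and
  $f$ are flat edges of $G$. Finally, $G[\{u,v,x,y\}] =
  G'[\{u,v,x,y\}]$ since $u, v, x, y \in K'$, so this induced subgraph
  has exactly the two edges $e$ and $f$. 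Thus $e, f$ is a flat pair of
  $G$.
\end{proof}
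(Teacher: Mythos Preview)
Your argument has a genuine gap in the transfer step. You assert that ``a node of $M$ is adjacent only to nodes of $S$,'' but this is false under the paper's block constructions: for a proper $S_2$-cutset the markers $u',v'$ satisfy $u'v'\in E(G')$, and for a proper $I$-cutset the markers satisfy $u_1'u_2',\,v_1'v_2'\in E(G')$. In each case a marker has one neighbour in $S$ and one neighbour in $M$, so the edge between the two markers is itself a flat edge of $G'$. Consequently your argument that no endnode of the flat pair lies in $M$ collapses, and the flat pair returned by Lemma~\ref{l:flatP} may well contain a marker--marker edge, which does not exist in $G$. (Your later claim that ``the only neighbours of a node of $K'$ in $V(G')\setminus K'$ are marker nodes'' is also backwards: nodes of $K'$ are adjacent to $S$, not to $M$.)

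The paper avoids this by not using Lemma~\ref{l:flatP} as a black box in the cutset cases. In the $I$-cutset case it picks, for each of $u$ and $v$, a neighbour \emph{in $K'$} and argues directly that the resulting edges form a flat pair inside $K'$. In the $S_2$-cutset case it distinguishes whether some node of $K'$ has degree~$\geq 3$: if so, Lemma~\ref{l:flatP} applied at that node (so $x\in K'$) forces all four endnodes into $K'$; if not, $G'$ is a union of internally disjoint $uv$-paths of length~$\geq 3$, at least two of which lie in $K'$, and one takes a flat edge from each. There is also a secondary issue: you invoke Lemmas~\ref{l:S2preserve}(i) and~\ref{l:Ipreserve}(i) to get $G'$ $2$-connected, but both lemmas assume $G$ has no $K_2$-cutset, which you never establish; the paper handles the $K_2$-cutset case separately by induction before appealing to Lemma~\ref{fe1}.
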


\begin{proof}
  We prove the result by induction on $|V(G)|$.  It is true when
  $|V(G)|\leq 3$.

  \vspace{2ex}

  \noindent {\bf Case 1:} {$G\in {\cal C}_0$}.  

  Follows directly
  from Lemma~\ref{l:flatP}. This completes the proof in Case~1.

\vspace{2ex}

\noindent {\bf Case 2:} {$G$ has a $K_2$-cutset}.  

Suppose $(\{
a,b \}, C_1,C_2)$ is a split of a $K_2$-cutset of $G$, and let $G_1$
and $G_2$ be the corresponding blocks of decomposition.  Note that by
Lemma \ref{l:pK2cutset}, $\{ a,b\}$ is a proper $K_2$-cutset.  For
$i=1,2$, $G_i$ is clearly 2-connected, by Lemma~\ref{blocksC_1}
$G_i\in {\cal C}_2$, and hence, by the induction hypothesis, $G_i$ is
either an induced cycle or it has a flat pair. Since $\{ a,b\}$ is
proper, $G_i$ cannot be a triangle.  Therefore, $G_i$ has a flat edge
entirely contained in $C_i$. Hence, $G$ has a flat pair formed by a
flat edge in $C_1$ and a flat edge in $C_2$.  This completes the
proof in Case~2.

\vspace{2ex}

From here on, we assume that $G$ has no $K_2$-cutset.  By
Lemma~\ref{fe1}, we may now assume that $G$ has an $I$-cutset or a
proper $S_2$-cutset $S$.  Moreover, there exists a split $(S,K',K'')$
such that the block of decomposition $G'$ that contains $K'$ belongs
to ${\cal C}_0$, and all nodes of $S$ have degree at least three in
$G'$.  This leads us to the following two cases.

\vspace{2ex}

\noindent {\bf Case 3:} {$S$ is an $I$-cutset}.  

Suppose $S = \{ u,v,w \}$ and $uv \in E(G)$.  Note that $w$ has degree
at least~3 in $G'$.  Since $G' \in {\cal C}_0$, $u$ and $v$ both have
a neighbor in $K'$, respectively $u'$ and $v'$, of degree~2.  Since
$G$ has no $K_2$-cutset, $u'\neq v'$ and $u'v'\notin E(G)$ (otherwise
$\{u, v\}$ is a $K_2$-cutset).  Since $G \in {\cal C}_0$ and $u, v$ have
degree at least~3, $u'$ (resp.\ $v'$) has one neighbors $u''$
(resp. $v''$) of degree~2.  Since $G$ has no $K_2$-cutset, $u''\neq
v''$ and $u''v''\notin E(G)$.  It follows that $u'u'', v'v''$ is a
flat pair in $G$.  This completes the proof in Case~3.

\vspace{2ex}

\noindent {\bf Case 4:} {$S$ is an $S_2$-cutset}. 

 Suppose $S =
\{u, v\}$.  By Lemma~\ref{l:S2preserve}, $G'$ is 2-connected.
If $u$ and $v$ are the only nodes of degree at least~3 in $G'$, then
$G$ is formed by at least three $uv$-paths.  Since $G' \in {\cal
  C}_0$, all these paths have length at least~3.  Therefore, they all
have an internal flat edge, and $G'$ has a flat pair entirely
contained in $K'$, that is therefore also a flat pair of $G$.
Otherwise, there is a node $x\in K'$ of degree at least~3.  By
Lemma~\ref{l:flatP}, $G'$ has a flat pair entirely contained in $K'$,
that is therefore also a flat pair of $G$. This completes the proof in
Case~4.
\end{proof}

An edge of a graph is \emph{pending} if it contains at least one node
of degree~1.

\begin{corollary}\label{edgeofdegree2}
  Every graph $G$ in $\mathcal C_2$ with at least one edge contains 
  an edge that is pending or flat.  
\end{corollary}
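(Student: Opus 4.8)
The plan is to reduce to the 2-connected case, which is already settled by Theorem~\ref{fe2}, by working inside a leaf of the block decomposition of $G$. First I would dispose of the easy reductions. Since no edge joins two distinct connected components, an edge is pending (resp.\ flat) in a component if and only if it is pending (resp.\ flat) in $G$; so, restricting to a component that contains an edge, I may assume $G$ is connected. If $G$ happens to be 2-connected, Theorem~\ref{fe2} finishes immediately: a chordless cycle has all its nodes of degree~2 and hence all its edges flat, while a flat pair contains a flat edge by definition.

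The substantial case is when $G$ has a cutnode. Here I would pass to the block-cut tree of $G$ and choose an \emph{endblock} $B$, that is, a block containing at most one cutnode $c$ of $G$. The decisive observation is that every node of $B$ other than $c$ has all of its incident edges inside $B$, so its degree in $G$ equals its degree in $B$; this is what lets degree information computed in $B$ transfer to $G$. If $B$ is a single edge $cu$, then $u$ belongs to no other block, so $u$ has degree~1 in $G$ and $cu$ is a pending edge. Otherwise $B$ is 2-connected on at least three nodes; since $\mathcal{C}_2$ is closed under taking induced subgraphs and the block $B$ is an induced subgraph of $G$, we have $B\in\mathcal{C}_2$, and Theorem~\ref{fe2} applies to $B$.

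This leaves two subcases. If $B$ is a chordless cycle, deleting the unique cutnode $c$ leaves a path on at least two nodes; any edge of this path joins two non-cutnodes of $B$, each of degree~2 in $B$ and hence, by the degree-transfer observation, of degree~2 in $G$, so it is a flat edge of $G$. If instead $B$ has a flat pair, consisting of two distinct flat edges $e$ and $f$ whose four endnodes are distinct, then at least one of $e,f$ avoids $c$; its two endnodes are non-cutnodes of degree~2 in $B$, hence of degree~2 in $G$, so that edge is a flat edge of $G$. In every case $G$ contains a pending or flat edge. The one point requiring care is precisely this degree-transfer step: a node of degree~2 in $B$ could a priori have larger degree in $G$, and this is ruled out exactly because $B$ is an endblock and the single problematic node $c$ is always avoidable, given the room provided by a cycle of length at least~3 or by a pair of vertex-disjoint flat edges.
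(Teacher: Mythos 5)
Your proof is correct and follows essentially the same route as the paper: pass to an endblock $B$ of the block decomposition, note that it is either a pending edge or a $2$-connected graph in $\mathcal{C}_2$ with at most one vertex attached to the rest of $G$, apply Theorem~\ref{fe2}, and observe that a chordless cycle or a flat pair always yields a flat edge avoiding that vertex. The only (harmless) quibble is that your case split ``2-connected or has a cutnode'' technically omits $G=K_2$, which is a pending edge anyway.
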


\begin{proof}
  We consider the classical decomposition of $G$ into blocks, in the
  sense of 2-connectivity (see~\cite{bondy.murty:book}). So, $G$ has a
  block $B$ that is either a pending edge of $G$, or a 2-connected
  graph containing at most one vertex $x$ that has neighbors in $V(G)
  \sm V(B)$.  In the latter case, by Theorem~\ref{fe2}, $B$ is either
  a chordless cycle or it has a flat pair, and so at least one flat
  edge of $B$ is non-incident to $x$, and is therefore a flat edge of
  $G$.
\end{proof}

An edge-coloring  of $G$ is a function $\pi:E\rightarrow C$ 
such that no two adjacent edges receive the same color $c\in C$. 
If $C=\{1,2,\ldots,k\}$, we say that $\pi$ is a $k$-edge coloring. 
The chromatic index of $G$, denoted by $\chi'(G)$, 
is the least $k$ for which $G$ has a $k$-edge-coloring.

Vizing's theorem states that $\chi'(G)=\Delta(G)$ or
$\chi'(G)=\Delta(G)+1$, where $\Delta(G)$ is maximum degree of nodes
in $G$. The edge-coloring problem or chromatic index problem is the
problem of determining the chromatic index of a graph. The problem is
NP-hard for several classes of graphs, and its complexity is unknown
for several others.  In this section we solve the edge-coloring
problem for the class $\mathcal C_2$.

\begin{theorem}
If $G$ is a graph in $\mathcal C_2$ such that $\Delta(G)\geq 3$, 
then $\chi'(G)=\Delta(G)$.
\end{theorem}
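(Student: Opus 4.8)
The plan is to prove $\chi'(G)\le\Delta(G)$ by induction on $|E(G)|$; together with the trivial bound $\chi'(G)\ge\Delta(G)$ this yields equality (and is exactly the content not already given by Vizing's theorem, which only guarantees $\chi'(G)\le\Delta(G)+1$). The engine of the induction is Corollary~\ref{edgeofdegree2}: since $\Delta(G)\ge 3$ the graph has an edge, so it contains a pending edge or a flat edge. In both cases I will delete a vertex of degree at most $2$ to obtain a smaller graph $G'$, color $G'$ with the palette $\{1,\dots,\Delta(G)\}$, and then extend by re-inserting the deleted vertex. I rely on the fact that $\mathcal C_2$ is closed under taking induced subgraphs, so $G'$ is again in $\mathcal C_2$; note that one must delete \emph{vertices} rather than edges, since $\mathcal C_2$ is not closed under edge deletion (removing an edge can expose a chordless cycle and so create an induced propeller).

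First suppose $G$ has a pending edge $uv$ with $\deg_G(v)=1$, and set $G'=G\setminus v$. If $\Delta(G')\ge 3$, the induction hypothesis gives a $\Delta(G')$-edge-coloring of $G'$, using at most $\Delta(G)$ colors; if $\Delta(G')\le 2$ then, as deleting a degree-$1$ vertex lowers only $\deg(u)$, one checks $\Delta(G)=3$, and a graph of maximum degree $2$ is a disjoint union of paths and cycles, hence $3$-edge-colorable. In either case $G'$ is colored within $\{1,\dots,\Delta(G)\}$. To extend, the edge $uv$ need only avoid the $\deg_{G'}(u)=\deg_G(u)-1\le\Delta(G)-1$ colors present at $u$, so a free color remains.

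Now suppose $G$ has a flat edge $uv$ with $\deg_G(u)=\deg_G(v)=2$; let $u'$ be the other neighbor of $u$ and $v'$ the other neighbor of $v$. I delete $v$ and color $G'=G\setminus v$ exactly as above (again $\Delta(G')\le 2$ forces $\Delta(G)=3$, since only $\deg(u)$ and $\deg(v')$ can drop). In $G'$ the vertex $u$ has degree $1$, incident to the single edge $uu'$ of some color $c_0$. To reinsert $v$, I first color $vv'$ with a color $c_2$ avoiding the $\le\Delta(G)-1$ colors at $v'$, and then color $uv$ with a color distinct from $c_0$ and $c_2$; since $uv$ is adjacent only to these two edges and $\Delta(G)\ge 3$, such a color exists. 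This covers the degenerate case $u'=v'$ uniformly, since there $c_0$ lies among the colors at $v'=u'$ and so $c_2\ne c_0$ automatically.

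The main obstacle, and the reason both the hypothesis $\Delta(G)\ge 3$ and Corollary~\ref{edgeofdegree2} are essential, is this extension step in the flat case. For an \emph{arbitrary} degree-$2$ vertex whose two neighbors both have degree $\Delta(G)$, the two incident edges would each be forced to a unique available color, and these could coincide, blocking the extension; it is precisely the flat edge, which gives the middle vertex $u$ degree $2$ and hence a \emph{single} previously colored incident edge, that guarantees a third color for $uv$. The only other point needing care is the bookkeeping that whenever the reduction drops the maximum degree below $3$ one must in fact have $\Delta(G)=3$, so the residual maximum-degree-$2$ graph can still be colored within the allotted palette.
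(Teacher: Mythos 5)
Your proof is correct and follows essentially the same route as the paper's: induction on $|E(G)|$ driven by Corollary~\ref{edgeofdegree2}, with the reduced graph colored by induction (or directly when its maximum degree drops to $2$) and the extension succeeding because a pending or flat edge meets few previously colored edges. The one divergence is that the paper deletes the flat or pending edge itself rather than one of its endnodes, and your assertion that one \emph{must} delete vertices is too strong: a vertex of degree at most one cannot lie on a propeller, so although $\mathcal C_2$ is not closed under edge deletion in general, it \emph{is} closed under deletion of flat and pending edges (as the paper explicitly notes), which makes the paper's extension step marginally simpler --- one edge to recolor instead of your two-step reinsertion in the flat case --- though your vertex-deletion variant and its handling of the degenerate case $u'=v'$ are equally valid.
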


\begin{proof}
  Induction on $|E(G)|$.  If $|E(G)| = 0$, the result clearly holds.
  By Corollary~\ref{edgeofdegree2}, $G$ has an edge $ab$ that is
  pending or flat.  Note that ${\cal C}_2$ is not closed under
  removing edges in general, but it is closed under removing flat or
  pending edges.  Set $G' = (V(G), E(G) \sm \{ab\})$.  If $\Delta(G')
  \geq 3$, then by the induction hypothesis, we can edge-color $G'$
  with $\Delta(G')$ colors.  Otherwise, $\Delta(G') \leq 2$, so $G'$ is
  3-edge colorable.  In either cases, $G'$ is $\Delta(G)$-colorable.
  We can extend the edge-coloring of $G'$ to an edge-coloring of $G$
  as follows.  When $ab$ is pending, by assigning a color to $ab$ not
  used among the edges incident to $ab$, and when $ab$ is flat by
  assigning to $ab$ a color not used for the two edges adjacent to
  $ab$.
\end{proof}

Note that when $\Delta(G) \leq 2$, $G$ is a disjoint union of cycles
and paths, so $\chi'$ is easy to compute.  The proof above is easy to
transform into a polynomial time algorithm that outputs the coloring
whose existence is proved.

\end{document}